\numberwithin{equation}{section}
\newcommand{\NN}{\mathbb{N}}
\newcommand{\RR}{\mathbb{R}}
\newcommand{\ds}{\displaystyle}
\newcommand{\mc}{\mathcal}
\newcommand{\ol}{\overline}
\newcommand{\bbm}{\begin{bmatrix}}
\newcommand{\bpm}{\begin{pmatrix}}
\newcommand{\ebm}{\end{bmatrix}}
\newcommand{\epm}{\end{pmatrix}}
 \newcommand{\del}[2]{\frac{\partial #1}{\partial #2}}
 \newcommand{\dsdel}[2]{\displaystyle\frac{\partial #1}{\partial #2}}
\newcommand{\myindent}{\hspace{10mm}}
\newcommand{\dsddx}[2]{\displaystyle\frac{d #1}{d #2}}
\newcommand{\dsddt}[1]{\displaystyle\frac{d #1}{dt}}
\newcommand{\holder}{H{\"o}lder\:}
\newtheorem*{assumption*}{\assumptionnumber}
\providecommand{\assumptionnumber}{}
\newcommand{\supholderupperbound}[1]{B(#1)}
\newcommand{\supholderlowerbound}[1]{b(#1)}
\newcommand{\infholderlowerbound}[1]{d(#1)}
\newcommand{\infholderupperbound}[1]{D(#1)}
\newcommand{\paren}[1]{\left(#1\right)}
\newcommand{\D}[2]{\frac{d#1}{d#2}}
\newcommand{\PD}[2]{\frac{\partial#1}{\partial#2}}
\newcommand{\abs}[1]{\left\lvert #1 \right\rvert}
\newcommand{\norm}[1]{\left\lVert #1 \right\rVert}
\newcommand{\ov}[1]{\overline{#1}}
\title{Long-Time Behavior of a PDE Replicator Equation for Multilevel Selection in Group-Structured Populations}
\author[1,2]{Daniel B. Cooney}
\author[1,2,3]{Yoichiro Mori}
\affil[1]{\small{Center for Mathematical Biology, University of Pennsylvania, Philadelphia, PA 19104, USA}}
\affil[2]{Department of Mathematics, University of Pennsylvania, Philadelphia, PA 19104, USA}
\affil[3]{Department of Biology, University of Pennsylvania, Philadelphia, PA 19104, USA}
\begin{document}

\newtheorem{theorem}{Theorem}[section]
\newtheorem{definition}[theorem]{Definition}
\newtheorem{lemma}[theorem]{Lemma}
\newtheorem{corollary}[theorem]{Corollary}
\newtheorem{claim}[theorem]{Claim}
\newtheorem{fact}[theorem]{Fact}
\newtheorem{proposition}[theorem]{Proposition}
\newtheorem{remark}[theorem]{Remark}
\newtheorem{observation}[theorem]{Observation}
\newtheorem{example}[theorem]{Example}

\renewcommand{\baselinestretch}{1.1}

\newcommand{\qedsymb}{\mbox{ }~\hfill~{\rule{2mm}{2mm}}}

\maketitle
\singlespacing
\begin{abstract}
    In many biological systems, natural selection acts simultaneously on multiple levels of  organization. This scenario typically presents an evolutionary conflict between the incentive of individuals to cheat and the collective incentive to establish cooperation within a group. Generalizing previous work on multilevel selection in evolutionary game theory, we consider a hyperbolic PDE model of a group-structured population, in which members within a single group compete with each other for individual-level replication; while the group also competes against other groups for group-level replication. We derive a threshold level of the relative strength of between-group competition such that defectors take over the population below the threshold while cooperation persists in the long-time population above the threshold. Under stronger assumptions on the initial distribution of group compositions, we further prove that the population converges to a steady state density supporting cooperation for between-group selection strength above the threshold. We further establish long-time bounds on the time-average of the collective payoff of the population, showing that the long-run population cannot outperform the payoff of a full-cooperator group even in the limit of infinitely-strong between-group competition. When the group replication rate is maximized by an intermediate level of within-group cooperation, individual-level selection casts a long shadow on the dynamics of multilevel selection: no level of between-group competition can erase the effects of the individual incentive to defect. We further extend our model to study the case of multiple types of groups, showing how the games that groups play can coevolve with the level of cooperation.
\end{abstract}

{\hypersetup{linkbordercolor=black, linkcolor = black}
\begin{spacing}{0.01}
\renewcommand{\baselinestretch}{0.1}\normalsize
\tableofcontents
\addtocontents{toc}{\protect\setcounter{tocdepth}{2}}
\end{spacing}
\singlespacing

\section{Introduction}

\myindent Across a variety of biological and social systems, population structure often induces selective forces operating at multiple levels of organization. Of particular interest are hierarchical structures in which there is a tug-of-war between the interests at a one level of organization and the interests at a larger level. For problems of cooperation or collective behavior, the incentives of an individual to be a free-rider are often misaligned with the incentives of its group to produce a collective benefit for all of its members \cite{levin2010crossing}. Considering the effects of selection at multiple levels of organization is particularly important in systems on the cusp of undergoing a transition to a higher order of complexity, establishing a collective unit that can compete or replicate as a single unit. Multilevel selection has been invoked to describe major evolutionary transitions, with examples ranging from the evolution of multicellularity \cite{rainey2003evolution,tarnita2013evolutionary,rose2020meta,pichugin2015modes,pichugin2018reproduction,staps2019emergence}, to the evolution of social group structure \cite{nowak2010evolution,fu2015risk}. The transition to higher levels of biological complexity can be understood as a triumph of cooperative behavior via multilevel selection, in which groups can form a cooperative population structure overcoming individual competition within the group \cite{szathmary1995major,nowak2006five,taylor2007transforming}.

\myindent Questions about multilevel selection have ranged widely across scales. On one end of the spectrum, ideas of conflict between individuality and collective behavior have been considered for the evolution of multicellularity \cite{tarnita2013evolutionary,pichugin2018reproduction}, replication control of plasmids \cite{paulsson2002multileveled}, and the evolution of mutualism in the microbiome \cite{van2019role}. At the other end, the alignment of the individual-level and group-level incentives have been studied for problems ranging from collective hunting in animal groups \cite{boza2010beneficial} and the eusocial structure of insect colonies \cite{nowak2010evolution,fu2015risk} to the establishment of cooperative institutions for the management of common-pool resources \cite{ostrom2010beyond,schluter2016robustness,tavoni2012survival} and within-group cooperation coevolving with warfare \cite{henriques2019acculturation,turchin2010warfare} in human societies. Experimental and field work has addressed problems ranging from the cooperative cofounding of ant colonies \cite{shaffer2016foundress}, to the establishment of multicellularity in biofilms \cite{rose2020meta}, and the artificial selection for nonaggression in chickens \cite{muir1996group,bijma2007multilevel1,bijma2007multilevel2, wade2010group}. A natural tension between selective forces at different levels of selection arises in the evolution of virulence in infectious disease dynamics, in which competition for pathogen replication within an individual host promotes selection for more virulent pathogens, while increased virulence can also harm the host and prevent onward transmission within the host population \cite{levin1981selection,dwyer1990simulation,gilchrist2004optimizing,gilchrist2006evolution,blackstonevariation,boni2013virulence}.

\myindent One framework that has often been used to study multilevel selection is evolutionary game theory, which provides stylized models for the evolution of cooperative behavior in which individuals can maximize payoff by cheating, while groups achieve higher collective payoffs when at least some of their members cooperate. Traulsen and coauthors have studied the evolution of cooperation in the presence of multilevel selection, showing that group-level competition for replication could help to promote the fixation of cooperators over defectors in finite populations \cite{traulsen2005stochastic,traulsen2006evolution,traulsen2008analytical}.
The work of Simon and coauthors has further explored how more realistic mechanisms like group-level fission and fusion events and the possibility of non-constant group size can help to facilitate cooperation win out over the defection that is favored by individual-level selection \cite{simon2010dynamical,simon2012numerical,simon2013towards,simon2016group}. Further work on stochastic multilevel selection models in evolutionary games have explored the role of group-level extinctions \cite{bottcher2016promotion}, the role of spatial structure on between-group competition \cite{akdeniz2020cancellation}, and asymptotic formulas for fixation probabilities in the limit of large population size \cite{mcloone2018stochasticity}.

\myindent Luo introduced a stochastic model of two-level selection featuring two types of individuals: one with a constant reproductive advantage at the individual level (i.e. defectors), and the other that confers a selective advantage to its group (i.e. cooperators) \cite{luo2014unifying,van2014simple}. In the limit of infinitely many groups of infinite size, Luo derived a non-local hyperbolic PDE describing the simultaneous competition within and between groups \cite{luo2014unifying}. Luo and Mattingly characterized the long-time behavior of this PDE based upon the relative strengths of selection at the two levels and the \holder exponent of the initial condition near the full-cooperator group \cite{luo2017scaling}. They showed that there was a threshold level of between-group selection strength such that defectors would fix in the population when between-group competition below the threshold, while the population converges to a steady state density supporting positive levels of cooperation for between-group selection above the threshold. Further work on related nested birth-death models for multilevel selection has explored application to host-pathogen dynamics \cite{pokalyuk2019diversity,pokalyuk2019maintenance,osorio2020two}, as well as mathematical aspects of behavior in alternate infinite-population scaling limits, including fixation probabilities in stochastic Fleming-Viot models \cite{luo2017scaling,dawson2013multilevel,meizis2020convergence} and quasi-stationary distributions in a Wright-Fisher diffusion equation with multilevel selection \cite{velleret2019two, velleret2020individual}.

 \myindent This model of two-level selection was later extended to include individual-level and group-level birth rates that depended on the personal and collective payoffs obtained from a two-strategy games played between members of the groups \cite{cooney2019replicator}. Results analogous to those of Luo and Mattingly were demonstrated for special cases of the Prisoners' Dilemma (PD) and Hawk-Dove (HD) games in which the within-group dynamics were exactly solvable, and further work explored the multilevel dynamics all two-player, two-strategy social dilemmas \cite{cooney2020analysis} and in the presence of within-group mechanisms of assortment or reciprocity \cite{cooney2019assortment}. For the PD and HD games, it was conjectured that, for sufficiently strong between-group competition, the population would converge to the unique steady state with the same \holder exponent near $x=1$ as that of the initial distribution. These steady state densities displayed a surprising property, called the ``shadow of lower-level selection", in which the payoff of the modal group composition at steady state and the average payoff of the steady state population were limited by the payoff of the full-cooperator group. As a result, for games in which group payoff was maximized by intermediate levels of cooperation, the population always features less cooperation that optimal, even in the limit of infinitely strong between-group competition.

\myindent In this paper, we characterize the long-time behavior for a broad class of models for multilevel selection. While previous work had shown convergence to steady state densities for several one-parameter families of models for multilevel selection arising from special cases of PD games \cite{luo2017scaling,cooney2019replicator}, the techniques used in those cases relied on the ability to obtain explicit solutions for the characteristic curves describing the within-group dynamics,  making it difficult to extend the results to the more general situation. Here, we obtain careful estimates on the solutions along characteristics and extract a principal growth rate for the multilevel dynamics, allowing us to prove convergence to steady state for multilevel PDEs with continuously differentiable individual-level and group-level replication rates in which defectors have an individual-level advantage over cooperators and all-cooperator groups have a collective advantage over all-defector groups. This result confirms a previous conjecture for the long-time dynamics multilevel replicator equations arising from any PD game, and extends the scope of that conjecture to include applications to a range of topics including protocell evolution and the origin of chromosomes \cite{cooney2021pde}. Through this more general formulation of our model for multilevel selection, we are able to formalize previous intuition to understand how the possibility of achieving cooperation at steady state relies on the ability for the collective advantage of full-cooperator groups over full-defector groups to overcome the individual-level advantage of defecting in a group with many cooperators. %

\myindent We also extend our analysis our generalization of the multilevel Prisoners' Dilemma dynamics to study long-time behavior for initial populations beyond the class of measures with a well-defined H{\"o}lder exponent near full-cooperation considered in previous work \cite{luo2017scaling,cooney2019replicator,cooney2020analysis}. By characterizing the tail behavior of the initial measure through quantities that we call the supremum and infimum H{\"o}lder exponent, we can obtain upper and lower bounds for the principal growth rate for solutions along characteristics for any initial measure. Using these estimates, we find that the population will not converge to a density steady for any initial measure without a well-defined H{\"o}lder exponent near full-cooperation. However, we show that, for any initial measure, defectors will take over the population when between-group competition is sufficiently weak, while cooperation will survive in the long-time limit in the sense of weak persistence when between-group selection exceeds a threshold value that depends on the supremum H{\"o}lder exponent of the initial measure. We also use our estimates on the principal growth rate to derive long-time upper and lower bounds on the time-average of the average group-level replication rate of the population, showing that the long-time collective outcome cannot exceed the group-level replication rate of the all-cooperator group. This observation serves as a dynamical analogue of the ``shadow of lower-level selection", showing how this limitation on collective outcome stems from the tug-of-war between the collective incentive to cooperate and the individual incentive to defect.

\myindent We also characterize the multilevel dynamics for a generalization of the Prisoners' Delight game, in which cooperation is favored at both levels of selection, extracting a principal growth rate to show how full-cooperation is achieve via multilevel selection. Combining this with the results mentioned above for our generalization of the multilevel Prisoners' Dilemma dynamics, we fully characterize the dynamics of our two-level replicator equation for scenarios in which the within-group dynamics features no interior equilibria. This approach of extracting a principal growth rate for a population can be further applied to understand a generalization of the multilevel dynamics to a case in which our group-structured population may feature different possible individual-level and group-level replication rates.  We provide a sufficient condition for the long-time concentration of the population upon a single type of group under our multi-type two-level birth-death dynamics, showing that the long-time dynamics will favor the dominance of the group type featuring the maximal principal growth rate. This result can be used study the coevolution of group features and the strategic composition of groups, showing how multilevel competition can help to select the games played within groups. As one application of this framework with multiple group types, we show that this concentration result can be used to study the dynamics of generalized versions of Hawk-Dove  and Stag-Hunt games, which each feature an interior within-group equilibrium. The results for these cases confirm and extend existing conjectures on convergence to steady state for the multilevel dynamics of those games \cite{cooney2020analysis}. 

\myindent In Section \ref{sec:model}, we describe the mathematical formulation of our model of multilevel selection  with comparisons to previous work on multilevel selection in evolutionary games. In Section \ref{sec:summary}, we summarize our main results for the long-time dynamics of our PDE model of multilevel selection. Section \ref{sec:outline} provides an outline for the remainder of the paper.

\subsection{Model of Multilevel Selection} \label{sec:model}

For our model of multilevel selection, we consider a population with $m$ groups that is each composed on $n$ members. The within-group selection follow a frequency-dependent Moran process  replacing a randomly chosen member of the same group.
In a group with $i$ cooperators, cooperators and defectors give birth with rates $1 + w_I \pi_C(\frac{i}{n})$ and $1 + w_I \pi_D(\frac{i}{n})$, respectively, where $\pi_C(\cdot)$ and $\pi_D(\cdot)$ are $C^1$ functions on $[0,1]$ and $w_I$ is the intensity of selection for within-group competition. We can further consider the advantage of defectors over cooperators under within-group competition in an $i$-cooperator group through the quantity \begin{equation} \label{eq:piofi}  \pi\left(\tfrac{i}{n}\right) :=  \pi_D\left(\tfrac{i}{n}\right) - \pi_C\left(\tfrac{i}{n}\right). \end{equation} 
Between-group competition takes place through a group-level birth-death process in which a group with $i$ cooperators produces a copy of itself and replaces a randomly chosen group with rate $\Lambda(1 + w_G G(\frac{i}{n}))$, where $w_G$ is the selection intensity of between-group competition and $\Lambda$ describes the relative rate of within-group and between-group replication events. We note that the choice of $\pi(x) = s$ and $G(x) = x$ recovers the functions for within-group and between-group competition for the Luo-Mattingly model \cite{luo2014unifying,luo2017scaling}.

In the limit as the number of groups and group size tend to infinity ($m \to \infty, n \to \infty$), we can describe the composition of strategies in the group-structured population by $u(t,x)$, the probability density of groups composed $x$ cooperators and $1-x$ defectors at time $t$. Using either a heuristic derivation \cite{luo2014unifying,van2014simple,cooney2019replicator} or a weak convergence argument \cite{luo2017scaling}, we can show that the large-population limit of the stochastic ball-and-urn process can be describe by the following partial differential equation for the evolution of $u(t,x)$ 
\begin{equation}\label{multiselect}
\PD{u(t,x)}{t}=\PD{}{x}\paren{x(1-x)\pi(x)u(t,x)} +\lambda \paren{G(x)-\int_0^1 G(y)u(t,y)dy}u(t,x),
\end{equation}
where $\lambda := \frac{\Lambda w_G}{w_I}$ describes the relative strength of within-group and between-group competition. The first term on the right-hand side of Equation \ref{multiselect} describes the dynamics of within-group competition, in which defectors (respectively cooperators) increase in frequency within groups when $\pi(x) > 0$ (respectively $\pi(x) < 0$). The second term in Equation \ref{multiselect} describes the impact of between-group competition, and groups with composition $x$ increase in frequency when their replication rate $G(x)$ exceeds the average group-replication rate in the population $\int_0^1 G(y) u(t,y) dy$.

Equation \eqref{multiselect} is paired with initial data given by
\begin{equation}\label{ic}
u(0,x)=u_0(x)\geq 0, \quad \int_0^1 u_0(x)dx=1.
\end{equation}
We can check that, if $u(x,t)$ is a solution to Equation \ref{multiselect}, it will be of unit mass for all $t$. Equation \ref{multiselect} is a hyperbolic PDE, whose characteristic curves are given by solutions of the following ODE
\begin{equation} \label{eq:replicatorcharacteristics}
    \dsddt{x(t)} = - x (1 - x) \pi(x) \: \:, \: \: x(0) = x_0,
\end{equation}
which we note is the well-known replicator equation for individual-level selection within a given group \cite{hofbauer1998evolutionary}. The function $\pi(x)$, sometimes called the gain function \cite{bach2006evolution,kaznatcheev2017cancer} describes the relative advantage of defectors over cooperators under within-group competition in an $x$-cooperator group.

We can also consider a measure-valued formulation corresponding to the multilevel dynamics described by Equation \eqref{multiselect}. For an initial Borel probability measure $\mu_0(dx)$ and any $C^1([0,1])$ test-function $v(x)$, the $\mu_t(dx)$ evolves according to
\begin{equation} \label{eq:measuremultiselect}
\dsdel{}{t} \int_0^1 v(x) \ol{\mu}_t(dx) = -\int_0^1 \dsdel{v(x)}{x} x(1-x) \pi(x) \ol{\mu}_t(dx) + \lambda \int_0^1 v(x) \left[ G(x) - \int_0^1 G(y) \ol{\mu}_t(dy) \right] \ol{\mu_t}(dx).    
\end{equation}
To study solutions for Equation \ref{eq:measuremultiselect}, we can introduce an auxiliary linear equation given by
\begin{equation} \label{eq:linearmeasuremultiselect}
\dsdel{}{t} \int_0^1 v(x) \mu_t(dx) = -\int_0^1 \dsdel{v(x)}{x} x(1-x) \pi(x) \mu_t(dx) + \lambda \int_0^1 v(x)  G(x) \mu_t(dx),     
\end{equation}
paired with initial the measure $\mu_0(dx)$. We can check that solutions $\ol{\mu}_t(dx)$ to the measure-valued multilevel dynamics of Equation \eqref{eq:measuremultiselect} can be related to solutions $\mu_t(dx)$ of Equation \eqref{eq:linearmeasuremultiselect} through the normalization given by
\begin{equation} \label{eq:olmutnormalized}
\ol{\mu}_t(dx) = \frac{\mu_t(dx)}{\int_0^1 \mu_t(dy)}.
\end{equation}

The dynamics of Equation \eqref{eq:linearmeasuremultiselect} also have independent biological interest for studying multilevel selection in which $x$-cooperator reproduce with rate $G(x)$ and no groups are removed from the population. Such models of an expanding group-structured population may be relevant in applications in which the group-level reproduction corresponds to cell division \cite{fontanari2006coexistence,fontanari2013solvable,fontanari2014effect,fontanari2014nonlinear} or fission of social groups \cite{simon2016group,gueron1995dynamics}.

We will also consider an extension of the multilevel dynamics in which groups belong to one of $N$ possible subpopulations, with each subpopulation featuring its own reproduction rates $\pi_j(\cdot)$ and $G_j(\cdot)$. Within-group competition proceeds according to $\pi_j(\cdot)$, while between-group competition consists of group replicating with rate proportion to $G_j(\cdot)$ and replacing a randomly-chosen group from any of the $N$ subpopulations. For example, each subpopulation could be defined by a different two-strategy game played within its groups, and then the corresponding multilevel dynamics describe the coevolution of cooperation and the fraction of groups playing each game. 
 
 Denoting the set of subpopulations by $\mathcal{N} = \{1,\cdots,N\}$, we describe the composition of $x$-cooperator groups in the subpopulation $j \in \mathcal{N}$ at time $t$ by the density $u^j(t,x)$. This family of densities evolves according  to PDEs of the form
 \begin{dmath} \label{eq:subpopulationdensity}
\dsdel{u^j(t,x)}{t} =  \dsdel{}{x}\left[x(1-x) \pi_j(x) u^j(t,x) \right] + \lambda u^j(t,x) \left[G_j(x) - \ds\sum_{k=1}^N \int_0^1 G_k(y) u^k(t,y) dy \right], 
 \end{dmath}
for each $j \in \mathcal{N}$, and this system is paired with initial data satisfying
\begin{equation}
    u^j(t,x) = u^j_0(x) \geq 0, \: \: \ds\sum_{k=1}^N \int_0^1 u^j_0(x) dx = 1.
\end{equation}
 We can also consider a measure-valued analogue of our multipopulation model by describing the strategic composition of groups in the $j$th subpopulation by the measure $\mu^j_t(dx)$. For a $C^1([0,1])$ test-function $v_j(x)$, this measure evolves according to the following equation 
 \begin{dmath} \label{eq:measurevaluedsubpopulation}
     \dsdel{}{t} \int_0^1 v_j(x) \mu^j_t(dx) = -\int_0^1 \dsdel{v_j(x)}{x} x (1-x) \pi_j(x) \mu^j_t(dx) + \lambda \int_0^1 v_j(x)  \left[ G_j(x) - \ds\sum_{i=1}^n \left(\int_0^1 G_i(y) \mu_t(dy) \right) \right] \mu_t(dx), 
 \end{dmath}
where the subpopulations are coupled through the nonlocal regulation term describing between-group competition. The system described by Equation \eqref{eq:measurevaluedsubpopulation} is paired with initial data given by the measures $\mu_0^j(dx)$ for $j \in \mathcal{N}$, which together satisfy the normalization condition given by
 \begin{equation} \label{eq:multipopinitial}
     \ds\sum_{j = 1}^{N} \mu_0^j\left([0,1]\right) = \ds\sum_{j = 1}^{N} \int_0^1 \mu^j_0(dx) 
     = 1.
 \end{equation}
 
 We can also associate with Equation \eqref{eq:measurevaluedsubpopulation} a system of $N$ decoupled linear equations of the form
 \begin{dmath} \label{eq:measurevaluedsubpopulationlinear}
     \dsdel{}{t} \int_0^1 v_j(x) \mu^j_t(dx) = -\int_0^1 \dsdel{v_j(x)}{x} x (1-x) \pi_j(x) \mu^j_t(dx) + \lambda \int_0^1 v_j(x) G_j(x)  \mu_t(dx).  
 \end{dmath}
 Given solutions $\mu^1_t(dx), \cdots, \mu^N_t(dx)$ to the linear dynamics of Equation \eqref{eq:measurevaluedsubpopulationlinear}, we can find a corresponding solution $\ol{\mu}^j_t(dx)$ to Equation \eqref{eq:measurevaluedsubpopulation} for the $j$th subpopulation through the normalization given by
 \begin{equation} \label{eq:mubartj}
    \int_0^1 v_j(x) \ol{\mu}^j_t(dx) = \frac{\int_0^1 v_j(x) \mu^j_t(dx)}{\sum_{j=1}^N \mu^j_t\left([0,1] \right)} 
 \end{equation}

\subsubsection{Motivating Example: Two-Strategy Evolutionary Games}
\label{sec:gamemotivation}

To formulate assumptions about the behavior of the functions $\pi(x)$ and $G(x)$ characterizing within-group and between-group competition,%
 we can consider the special case of the multilevel selection dynamics depend on payoffs from two-player, two-strategy social dilemmas \cite{cooney2019replicator,cooney2020analysis}. We consider games with symmetric payoff matrices of the form 
\begin{equation} \label{eq:payoffmatrix}
\begin{blockarray}{ccc}
& C & D \\
\begin{block}{c(cc)}
C & R & S \\
D & T & P \\
\end{block}
\end{blockarray},
\end{equation}
where the entries of the payoff matrix correspond to the reward for mutual cooperation ($R$), the sucker payoff from cooperating with a defector ($S$), the temptation to defect against a cooperator $(T)$, and the punishment for mutual defection ($P$). In this paper, we will consider the multilevel dynamics corresponding to generalizations of four two-strategy social dilemmas: the Prisoners' Dilemma (PD), the Hawk-Dove game (HD), the Stag-Hunt (SH), and the Prisoners' Delight (PDel). These four games are characterized by the following rankings of payoffs
\begin{subequations} \label{eq:payoffrankings}
\begin{align}
    \mathrm{PD} &: T > R > P > S \label{eq:PDpayoffs} \\
    \mathrm{HD} &: T > R > S > P \label{eq:HDpayoffs} \\
    \mathrm{SH} &: R > T > P > S \label{eq:SHpayoffs} \\
    \mathrm{PDel} &: R > T > S > P \label{eq:PDelpayoffs}.
\end{align}
\end{subequations}
For a group composed of fractions $x$ cooperators and $1-x$ defectors, the average payoff for a cooperator and defector are given by 
\begin{subequations} \label{eq:gamepayoffs}
\begin{align}
    \pi_C(x) &= R x + S (1-x) \\
    \pi_D(x) &= T x + P (1-x).
\end{align}
\end{subequations}
 In previous work on multilevel selection in evolutionary games, it was assumed that the group-level reproduction rate in an $x$-cooperator group depend on the average payoff of group members $G(x) = x \pi_C(x) + (1-x) \pi_D(x)$ \cite{cooney2019replicator,cooney2020analysis}. Using the cooperator and defector payoffs from Equation \eqref{eq:gamepayoffs}, we then have that the dynamics of Equation \eqref{multiselect} have the following dependence on the payoff matrix from Equation \eqref{eq:payoffmatrix}
\begin{subequations} \label{eq:gamepiG}
\begin{align} \label{eq:pixgame}
    \pi(x) &= P - S - \left(R - S - T + P \right) x \\
    G(x) &= P + \left( S + T - 2P \right)x + \left( R - S - T + P \right) x^2 \label{eq:Gxgame}.
\end{align}
\end{subequations}
From Equation \ref{eq:Gxgame}, we can see that the group-reproduction function satisfies 
\begin{equation} \label{eq:G01ranking}
G(1) = R > P = G(0)
\end{equation}
for each of the social dilemmas described in Equation \eqref{eq:payoffrankings}. Under the game-theoretic model, the characteristic curves from Equation \eqref{eq:replicatorcharacteristics} evolve according to
\begin{equation} \label{eq:characteristicsreplicatorgame}
   \dsddt{x(t)} = - x (1-x) \left[P - S - \left( R - S - T + P \right) x \right], 
\end{equation}
which has equilibria at $x = 0$, $x = 1$, and a possible interior equilibrium $x = x_{eq}$ given by
\begin{equation} \label{eq:interiorequilibrium}
    x_{eq} = \frac{P - S}{R - S - T + P}.
\end{equation}

For the PD game, we can use the payoff rankings and the fact that $\pi(x)$ is an affine function to see that
\[ \pi(x) \geq \min\left(\pi(0),\pi(1) \right) = \min\left(P - S,T - R \right) > 0, \]
and therefore the within-group dynamics feature global stability of the all-defector equilibrium $x = 0$. 
For the Prisoners' Delight game, we can use the payoff rankings from Equation \eqref{eq:PDelpayoffs} to see that 
\begin{equation} \label{eq:PDelpix}
    \pi(x) < \max(\pi(0),\pi(1)) = \max(P-S,T-R) < 0,
\end{equation}
and therefore full-cooperation is globally stable under individual-level selection.

Using these properties, we formulate a generalization of the multilevel PD and PDel dynamics by considering $G(x), \pi(x) \in C^1\left(\left[0,1\right]\right)$ satisfying $G(1) > G(0)$ and either $\pi(x) > 0$ (in the PD case) or $\pi(x) < 0$ (in the PDel case). This class of reproduction functions $G(x)$ and $\pi(x)$ include those used in the Luo-Mattingly model \cite{luo2014unifying}, in models used to describe the evolution of protocells \cite{cooney2021pde}, and within-group dynamics following the Fermi update rule for social learning \cite{traulsen2005stochastic}. We will take this generalization of the PD and PDel dynamics as a generic picture of multilevel selection in populations without internal within-group equilibria, and will assume that the dynamics of multiple subpopulations described by Equation \eqref{eq:subpopulationdensity} reflects a PD or PDel scenario within a given subpopulation. For games such as the HD and SH with internal within-group equilibria, we will apply the multiple subpopulation formulation using the fact that the dynamics above and below the within-group equilibria for such games reflect either a PD or PDel scenario.

\subsection{Summary of Main Results}
\label{sec:summary}

Now we present our main results for the dynamics of solutions to Equation \eqref{multiselect}. In Section \ref{sec:steadystatesummary}, we introduce the family of probability densities that are steady state solutions of Equation \eqref{multiselect} in the PD case, highlighting how the collective at steady state is limited by the group-level reproduction rate of the all-cooperator group. In Section \ref{sec:convergencesteadyresults}, we present Theorem \ref{wlimtheorem}, showing convergence of the population to a steady state density for the special of initial measures with a well-defined H{\"o}lder exponent near $x=1$. In Section \ref{sec:longtimeresults}, focus on the characterization of long-time behavior of the multilevel PD dynamics for all possible initial measures, providing bounds on the long-time collective outcome in Theorem \ref{thm:groupbounds} and classifying the conditions for long-time extinction or persistence of cooperation \ref{thm:extinctvspersist}. In Section \ref{sec:multiplepopulationsresults}, we turn to the dynamics of Equation \eqref{eq:subpopulationdensity} describing multilevel competition in the presence of multiple types of groups. We present Theorem \ref{thm:longtimemultiple}, providing a sufficient condition for the population to concentrate upon the group type with maximal principal growth rate. 

\subsubsection{Steady State Densities for Multilevel PD Dynamics}
\label{sec:steadystatesummary}

 We first look to understand steady-state solutions to Equation \eqref{multiselect} in the Prisoners' Dilemma case, and the conditions under which such solutions can be achieved under the multilevel dynamics. In Section \ref{sec:steadystates}, we show that, under generic conditions on $\lambda$, $G(\cdot)$, and $\pi(\cdot)$, the achievable steady states are delta-concentrations of full-defector groups $\delta(x)$ and full-cooperator groups $\delta(1-x)$, as well as a family of density steady states which we characterize below. 

In previous work on special cases of Equation \eqref{multiselect}, the long-time behavior and convergence to steady state of the multilevel dynamics was studied for initial measures $\mu_0(dx)$ with given \holder exponent near $x=1$ \cite{luo2017scaling,cooney2019replicator,cooney2020analysis}. This \holder exponent and its associated \holder constant quantify the extent to which the initial distribution concentrates or decays near the full-cooperator group, and is defined as follows.

\begin{definition} \label{def:holderexponent}
The measure $\mu_t(dx)$ has \holder exponent $\theta_t \geq 0$ near $x =1$ with associated H{\"o}lder constant $C_{\theta}  \in \RR_{\geq 0} \cup \{ \infty \}$ if it satisfies the following limiting behavior
\begin{equation} \label{eq:holderexponent}
    \lim_{x \to 0} \frac{\mu_t\left([1-x,1]\right)}{\Theta} = 
    \left\{
     \begin{array}{lr}
       0 & : \Theta < \theta \\
       C_{\theta} & : \Theta =  \theta \\
       \infty & : \Theta > \theta
     \end{array}
   \right. .
\end{equation}
\end{definition}
Measures of the form $\mu(dx) = \theta (1-x)^{\theta -1} dx$ for finite $\theta > 0$ have \holder exponent $\theta$ near $x=1$. Examples with \holder exponent of $0$ and $\infty$ are measures satisfying $\mu(\{1\}) > 0$ and $\mu\left([1-\epsilon,1]\right) = 0$ for some $\epsilon > 0$, respectively.

We show in Section \ref{sec:steadystates} that steady state density solutions to Equation \eqref{multiselect} can be parametrized by $\lambda$ and their \holder exponent $\theta > 0$ near $x=1$. We calculate that these steady states are given by the following densities
\begin{equation}\label{pthetactilde}
\begin{split}
p^{\lambda}_{\theta}(x)&=\frac{f^{\lambda}_{\theta}(x)}{\int_0^1 f^{\lambda}_{\theta}(x)dx},\\
f^{\lambda}_{\theta}(x)&= x^{\nu_{\theta}- 1}(1-x)^{\theta-1}\frac{\pi(1)}{\pi(x)}\exp\paren{ - \lambda \int_x^1\frac{\tilde{C}(s)ds}{\pi(s)}},
\end{split}
\end{equation}
where the parameter $\nu_\theta$ corresponds to 
\begin{equation} \label{eq:nudefinition}
    \nu_{\theta}:= \left(\frac{1}{\pi(0)}\right) \left( \lambda \left[G(1) - G(0) \right] - \theta \pi(1) \right)
\end{equation}
and the term $-\lambda \tilde{C}(x)$ is given by
\begin{dmath} \label{eq:lambdaCofx}
- \lambda \tilde{C}(x) = \lambda \left( \frac{G(x) - G(0)}{x} \right) + \nu_\theta \left( \frac{\pi(x) - \pi(0)}{x} \right)  
+ \lambda \left( \frac{G(x) - G(1) }{1 - x} \right) - \theta \left( \frac{\pi(x) - \pi(1)}{1 - x} \right).
\end{dmath}
The form of these steady state densities highlights the key contribution of the of the collective reproduction rates $G(1)$ and $G(0)$ and individual-level gain functions $\pi(1)$ and $\pi(0)$ for the all-cooperator and all-defector groups. 

Furthermore, our assumption of $C^1([0,1])$ replication rates $G(\cdot)$ and $\pi(\cdot)$ allows us to use Equation \eqref{eq:lambdaCofx} to deduce the boundedness of $\lambda \tilde{C}(x)$ on $[0,1]$.  Because we are considering $\theta > 0$, we see from Equations \eqref{pthetactilde} and \eqref{eq:nudefinition} that the density $f^{\lambda}_{\theta}(x)$ is integrable provided that $\nu_{\theta}> 0$, which occurs when between-group competition is sufficiently strong so that
\begin{equation} \label{eq:lambdastar}
    \lambda > \lambda^* := \frac{\theta \pi(1)}{G(1) - G(0)}.
\end{equation}
This threshold condition is increasing in $\pi(1)$, the relative within-group advantage of a defector over a cooperator in a full-cooperator group, and is decreasing in $G(1) - G(0)$, the relative between-group advantage of a full-cooperator group over a full-defector group. From this we see that the ability to promote cooperation via multilevel selection can be understood as the collective incentive to cooperate winning out over the individual incentive to defect against cooperators. In addition, $\lambda^*$ is an increasing function of $\theta$, which means that larger cohorts of near full-cooperator groups are able to maintain a steady state density featuring cooperation over a large range in the strength of between-group competition. 

We further show in Section \ref{sec:steadystates} that the average level of group reproduction $G(x)$ achieved in the steady state distribution $p^{\lambda}_{\theta}$ is given by

\begin{equation} \label{eq:averageofthetawithlambdastar}
    \langle G(\cdot) \rangle_p = \int_0^1 G(y) p^{\lambda}_{\theta}(y) dy = \left(\frac{\lambda^*}{\lambda}\right) G(0) + \left( 1 - \frac{\lambda^*}{\lambda} \right) G(1).
\end{equation}
The average group reproduction function (or collective payoff in the game-theoretic scenario) interpolates between the collective outcome of all-defector groups to that of all-cooperator groups as $\lambda$ ranges from $\lambda^*$ to $\infty$, and that the average at steady state $\langle G(\cdot) \rangle_f$ is limited by the collective outcome for the all-cooperator group. In particular, this means that if $G(x)$ is maximized by an interior level of cooperation $x^*$, then the average collective reproduction at steady state does not achieve its optimal possible level even in the limit of infinitely strong between-group competition when $\lambda \to \infty$. This generalizes the so-called ``shadow of lower-level selection" seen in previous work on multilevel selection in evolutionary games.

We may also visualize the impact of increasing the intensity of between-group competition by plotting the steady state densities from Equation \eqref{pthetactilde} for various values of $\lambda$. In Figure \ref{fig:densityplot}, we display these steady state densities for special cases of the game-theoretic examples from Section \ref{sec:gamemotivation} in which the collective payoff of the group is either maximized by full-cooperation ($x^* = 1$, Figure \ref{fig:a}) or by a composition of 75 percent cooperators ($x^* = \frac{3}{4}$, Figure \ref{fig:b}). In the former case, we see arbitrarily high levels of cooperation are achieved by the group for sufficiently large $\lambda$, whereas in the latter case the densities do not come close to achieving the optimal composition of cooperators. In fact, in the latter case, the density appears to concentrate around $\overline{x} = \frac{1}{2}$, the unique interior level of cooperation satisfying $G(x) = G(1)$. In the appendix, we formalize this intuition from Figure \ref{fig:densityplot} to show that the steady states of Equation \eqref{multiselect} in the PD case concentrate as $\lambda \to \infty$ upon measures supported only at points satisfying $G(x) = G(1)$.

\begin{figure}[tbhp]
\centering
\subfloat[$x^* = \ol{x} = 1$.]{\label{fig:a}\includegraphics[width = 0.5\textwidth]{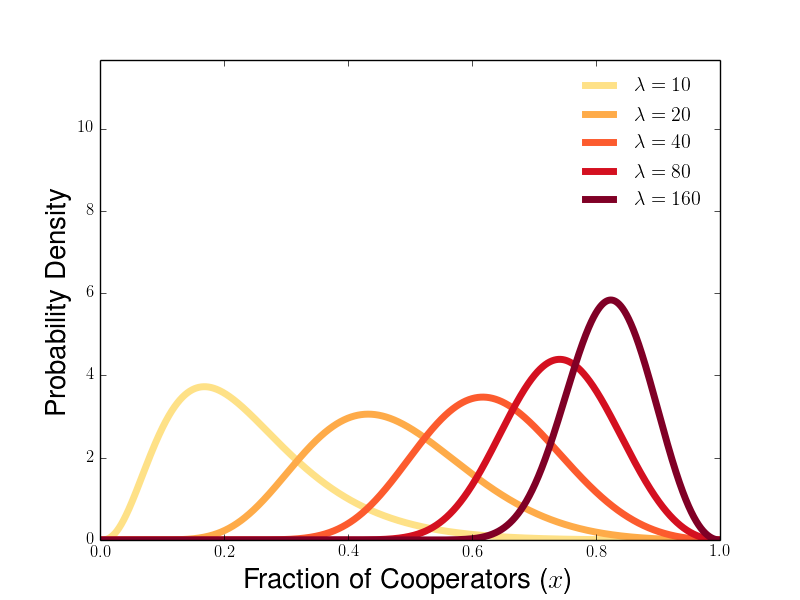}} 
\subfloat[$x^* = \frac{3}{4}$, $\overline{x} = \frac{1}{2}$.]{\label{fig:b}\includegraphics[width = 0.5\textwidth]{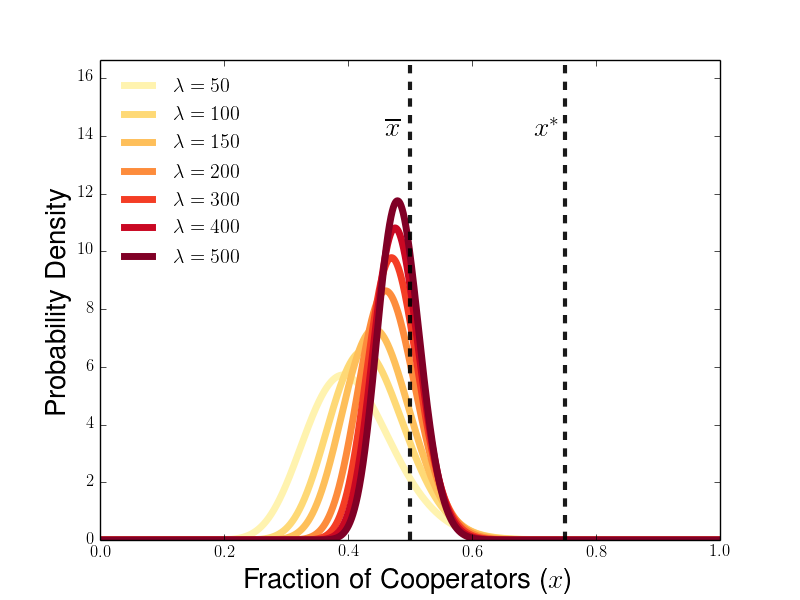}}
\caption{Steady state densities from Equation \eqref{pthetactilde} for various values of between-group selection intensity $\lambda$ for the game-theoretic case of Section \ref{sec:gamemotivation} and scenarios in which $G(x)$ is maximized by full-cooperation (left) or by 75 percent cooperators (right). The dashed lines in the bottom panel correspond to the group type $x^*$ maximizing $G(x)$ and the group type $\overline{x}$ for which $G(\overline{x}) = G(1)$. Densities have H{\"o}lder exponent $\theta = 3$ near $x=1$, and payoff parameters are chosen from the family of games with $S = 1$, $P = 2$, $T = R + 2$ with either $R = 2.5$ (left) or $R = 3$ (right).}
\label{fig:densityplot}
\end{figure}

\subsubsection{Convergence to Steady State Densities}
\label{sec:convergencesteadyresults}
 
Next, we explore the conditions under which steady states of the form $p^{\lambda}_{\theta}(x)$ are achieved as the long-time behavior under the dynamics of Equation \eqref{multiselect}. Considering initial populations that have a well-defined \holder exponent, we show in Theorem \ref{wlimtheorem} that solutions $\mu_t(dx)$ to Equation \eqref{multiselect} converge weakly to a steady state density when $\lambda > \lambda^*$%
. This result confirms and generalizes \cite[Conjecture 1]{cooney2020analysis}, which addresses convergence to steady-state for multilevel selection in the case of replication rates arising from Prisoners' Dilemma games with the payoff matrix of Equation \eqref{eq:payoffmatrix}. 

\begin{theorem}\label{wlimtheorem}
Suppose that $G(x),\pi(x) \in C^1\left([0,1]\right)$, $G(1) > G(0)$, and $\pi(x) > 0$ for $x \in [0,1]$. Consider an initial measure $\mu_0(dx)$ having a \holder exponent $\theta > 0$ near $x=1$ with corresponding positive, finite \holder constant  $C_{\theta}$.
If $\lambda \left[G(1) - G(0) \right] > \theta \pi(1)$, then $\bar{\mu}_t$ converges weakly to the probability measure defined by the density function $p^{\lambda}_\theta(x)$ defined in Equation \eqref{pthetactilde}:
\begin{equation}\label{vmubarlim}
\lim_{t\to \infty} \int_0^1 v(x)\ov{\mu}_t(dx)=\int_0^1 v(x)p^{\lambda}_{\theta}(x)dx
\end{equation}
where $v(x)$ is an arbitrary continuous function on $[0,1]$.
\end{theorem}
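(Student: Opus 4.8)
The plan is to pass to the auxiliary linear equation \eqref{eq:linearmeasuremultiselect}, whose solutions $\mu_t$ are tied to the normalized dynamics $\ov{\mu}_t$ through \eqref{eq:olmutnormalized}, and then to solve that linear equation explicitly along the characteristics \eqref{eq:replicatorcharacteristics}. Writing $\Phi_t(x_0)$ for the flow of $\dot{x}=-x(1-x)\pi(x)$ with $\Phi_0(x_0)=x_0$, inserting the ansatz $\int v\,d\mu_t=\int v(\Phi_t(x_0))J(t,x_0)\,\mu_0(dx_0)$ into the weak formulation forces $\partial_t J=\lambda G(\Phi_t(x_0))J$, giving the representation
\[
\int_0^1 v(x)\,\mu_t(dx)=\int_0^1 v\paren{\Phi_t(x_0)}\exp\paren{\lambda\int_0^t G\paren{\Phi_s(x_0)}\,ds}\,\mu_0(dx_0),
\]
so that $M(t):=\mu_t([0,1])=\int_0^1\exp(\lambda\int_0^t G(\Phi_s(x_0))\,ds)\,\mu_0(dx_0)$ and
\[
\int_0^1 v(x)\,\ov{\mu}_t(dx)=\frac{\int_0^1 v(\Phi_t(x_0))\exp(\lambda\int_0^t G(\Phi_s(x_0))\,ds)\,\mu_0(dx_0)}{M(t)}.
\]
Since $\pi>0$, every characteristic with $x_0\in(0,1)$ decreases monotonically to $0$, so everything reduces to understanding the accumulated growth factor $A(t,x_0):=\lambda\int_0^t G(\Phi_s(x_0))\,ds$ together with how the flow spreads, with all delicate behavior concentrated near the unstable rest point $x=1$.

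Next I would extract the principal growth rate. Near $x=1$ the substitution $y=1-x$ gives $\dot y=y(1-y)\pi(1-y)$, so trajectories leave a neighborhood of $1$ exponentially at rate $\pi(1)$; quantitatively, using only the $C^1$ regularity of $\pi$, one shows that the escape time is $\tau(x_0)=\frac{1}{\pi(1)}\log\frac{1}{1-x_0}+O(1)$ and that $1-\Phi_t(x_0)\asymp(1-x_0)e^{\pi(1)t}$ while the trajectory remains near $1$. Splitting $A(t,x_0)$ at $\tau(x_0)$ yields $A(t,x_0)\approx\lambda G(0)\,t+\frac{\lambda(G(1)-G(0))}{\pi(1)}\log\frac{1}{1-x_0}$ for $x_0$ bounded away from $1$, and $A(t,x_0)\approx\lambda G(1)\,t$ for $x_0$ within $e^{-\pi(1)t}$ of $1$. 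Because $\lambda>\lambda^*$ is exactly the inequality $\frac{\lambda(G(1)-G(0))}{\pi(1)}>\theta$ by \eqref{eq:lambdastar}, the weight $e^{A(t,x_0)}$ against the tail $\mu_0([1-\epsilon,1])\sim C_\theta\epsilon^\theta$ of Definition \ref{def:holderexponent} produces a nonintegrable singularity at $x_0=1$ that is cut off at $1-x_0\sim e^{-\pi(1)t}$; carrying out the resulting integral shows $e^{-\Lambda t}M(t)$ tends to a positive finite constant proportional to $C_\theta$, with
\[
\Lambda=\lambda G(1)-\theta\pi(1)=\lambda\paren{\tfrac{\lambda^*}{\lambda}G(0)+\paren{1-\tfrac{\lambda^*}{\lambda}}G(1)},
\]
which matches the expected eigenvalue $\lambda\langle G\rangle_p$ of \eqref{eq:averageofthetawithlambdastar}.

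To identify the limiting profile rather than merely the growth rate, I would rescale near $x=1$: since the dominant $x_0$ satisfy $1-x_0\asymp e^{-\pi(1)t}$ while the images $\Phi_t(x_0)$ sweep out all of $(0,1)$, I would change variables from $x_0$ to $x=\Phi_t(x_0)$ (equivalently, introduce the self-similar variable $\xi=(1-x_0)e^{\pi(1)t}$ and pass to the limit $t\to\infty$ at fixed $\xi$, noting $\Phi_t^{-1}(x)\to1$ for each fixed $x$). Tracking the Jacobian of this change of variables together with the limit of $e^{-\Lambda t}e^{A(t,\Phi_t^{-1}(x))}$, and using the time-shift identity $t=\int_{\Phi_t(x_0)}^{x_0}\!ds/(s(1-s)\pi(s))$, reproduces exactly the density $f^\lambda_\theta$ of \eqref{pthetactilde}: the prefactor $x^{\nu_\theta-1}(1-x)^{\theta-1}\pi(1)/\pi(x)$ with $\nu_\theta$ as in \eqref{eq:nudefinition} and the exponential correction both emerge from this computation, and the factor $C_\theta$ cancels against the same factor in $e^{-\Lambda t}M(t)$ upon normalization (which is why $C_\theta\in(0,\infty)$ is needed). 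This gives \eqref{vmubarlim} for $v\in C^1([0,1])$; since the $\ov{\mu}_t$ are probability measures on the compact set $[0,1]$ and hence automatically tight, convergence extends to all continuous $v$ by approximation.

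The main obstacle is Steps two and three: in contrast to the exactly solvable cases of previous work, here $\pi$ and $G$ are only $C^1$, so there is no closed form for $\Phi_t$ or for $A(t,x_0)$, and all of the near-$x=1$ asymptotics must be obtained from careful estimates. The control required is twofold — the logarithmic escape-time expansion must be sharp enough to pin down the exact exponent $\frac{\lambda(G(1)-G(0))}{\pi(1)}$ in the singular weight, and the self-similar limit of the Jacobian must be sufficiently uniform to justify interchanging the limit with the integral despite the nonintegrable singularity at $x_0=1$. This is precisely why naive dominated convergence fails when $\lambda>\lambda^*$ and why the moving cutoff $1-x_0\sim e^{-\pi(1)t}$ has to be tracked explicitly throughout.
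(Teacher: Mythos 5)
Your overall strategy coincides with the paper's: represent $\mu_t$ by pushing $\mu_0$ forward along characteristics as in Equation \eqref{eq:pushforward}, extract the principal rate $\lambda G(1)-\theta\pi(1)$ from the escape asymptotics $e^{\pi(1)t}(1-\phi_t^{-1}(x))\to(1-x)\exp\bigl(\int_x^1 Q(s)\,ds/(s\pi(s))\bigr)$ (Lemma \ref{lem:epi1tformula}) together with the along-characteristic growth $e^{-\lambda G(1)t}w_t(x)\to\exp\bigl(-\lambda\int_x^1 R(s)\,ds/(s\pi(s))\bigr)$ (Lemma \ref{lem:wtformula}), change variables to the image side via the Jacobian identity of Lemma \ref{lem:phitinvformula}, and normalize so that $C_\theta$ cancels between numerator and denominator --- your constants match Equations \eqref{vmutlim}--\eqref{1mutlim} exactly, your subdominant rate $\lambda G(0)t$ for $x_0$ bounded away from $1$ is precisely the paper's boundary term $v(0)\psi_t(0)(1-F(0^+))$, and your closing $C^1$-to-$C^0$ approximation step is verbatim the paper's.

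There is, however, one genuine gap in how you propose to execute the rescaling step: ``change variables from $x_0$ to $x=\Phi_t(x_0)$, tracking the Jacobian'' is not literally meaningful for the measure $\mu_0(dx_0)$. Definition \ref{def:holderexponent} constrains only the tail masses $\mu_0([1-x,1])$; the hypothesis admits singular $\mu_0$ (atoms in the interior, Cantor-type parts), so there is no density against which to multiply a Jacobian, and the tail asymptotic $\mu_0([1-\epsilon,1])\sim C_\theta\epsilon^\theta$ cannot be differentiated into the pointwise statement your self-similar computation at fixed $\xi=(1-x_0)e^{\pi(1)t}$ implicitly requires. The paper's resolution is a Stieltjes integration by parts against $F(y)=\mu_0([0,y])$ before changing variables (Equation \eqref{vpsimu0}): the H{\"o}lder hypothesis then enters only through $1-F(\phi_t^{-1}(x))$, to which the limit \eqref{zestlim} applies directly, while the Jacobian acts only on the smooth factor $v(\phi_t(y))\psi_t(y)$. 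This device also resolves your final worry in a different way than you anticipate: you assert that dominated convergence fails and the moving cutoff must be tracked by hand, but that is true only in the $x_0$ variable. After the integration by parts and the substitution $x=\phi_t(y)$, the rescaled integrand $I_1I_2$ is dominated uniformly in $t$ by an integrable function $d(x)$ of the form $(1-x)^{\Theta-1}x^{\nu_\theta-1}$ (Lemma \ref{lem:integraltermbound}), integrable precisely because the hypothesis $\lambda[G(1)-G(0)]>\theta\pi(1)$ gives $\nu_\theta>0$ in \eqref{eq:nudefinition}; the interchange of limit and integral is then plain dominated convergence, and the uniformity you flag as the main obstacle is exactly what this lemma supplies.
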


The above, and indeed, many of our main results to follow depend on the careful study of the characteristic curves solving Equation \eqref{eq:replicatorcharacteristics} and the solutions along characteristics, which respectively describe the effects of within-group and between-group competition. Because $\pi(\cdot)$ is positive for the Prisoners' Dilemma case, $x=0$ and $x=1$ are the only steady states of this ODE and therefore the characteristic curves spend most of their time near $x=0$ or $x=1$. Thus, $G(0)$ and $G(1)$, the group replication rates near $x=0$ and $x=1$ respectively, and $\pi(1)$, the speed with which the ODE trajectory leaves $x=1$, can be expected to control growth of the unnormalized solution $\mu_t(dx)$ to Equation \eqref{eq:linearmeasuremultiselect}. This intuition is made precise by Lemma \ref{lem:integraltermbound}, where we use the continuous differentiability of $\pi(\cdot)$ and $G(\cdot)$ on $[0,1]$ to decompose $\mu_t(dx)$ into the product of a bounded, continuous function and an exponentially growing term whose growth rate is given by $\max\{\lambda G(0), \lambda G(1) - \theta \pi(1)\}$. This decomposition allow us to prove Theorem \ref{wlimtheorem} using our knowledge of $G(\cdot)$ and $\pi(\cdot)$ at the endpoints $0$ and $1$, without requiring explicit expressions for the characteristic curves and solutions alongs characteristics only available in special cases \cite{luo2017scaling,cooney2019replicator}. In addition to generalizing previous results, this approach allows us to glean further biological intuition into how the long-time behavior of Equation \eqref{multiselect} results from a tension between the collective incentive to achieve full-cooperation over full-defection $G(1) - G(0)$ and the individual-level incentive to defect $\pi(1)$ in a group with many cooperators.

\subsubsection{Long-Time Behavior for More General Initial Measures}
\label{sec:longtimeresults}

Not all initial measures have a well-defined \holder exponent as defined in Definition \ref{def:holderexponent}, as the limit characterizing the \holder exponent and constant in Equation \eqref{eq:holderexponent} does not necessarily exist. For such measures, the foregoing results do not apply, but we can still provide a characterization of the long-time behavior that holds for any initial measure and any relative strength of between-group selection. In Theorem \ref{thm:groupbounds}, we derive long-time upper and lower bounds for the time-averaged collective payoff of the population, showing that the long-time collective outcome is limited by the replication rate of the all-cooperator group. In Proposition \ref{thm:extinctvspersist}, we show that, for a given initial measure, there is a threshold strength of between-group competition required for cooperation to survive in the long-time population.

To supplement the H{\"o}lder exponent in characterizing initial measures by the behavior near $x=1$, we introduce the following quantities that can be defined for any initial measure. 

\begin{definition} \label{def:infimumHolder}
The infimum \holder exponent $\overline{\theta}$ near $x=1$ satisfies \begin{equation} \label{eq:infholderexponent} \overline{\theta} := \sup \left\{ \Theta \geq 0: \ds\liminf_{x \to 0}\frac{\mu_t([1-x,1])}{x^{\Theta}} = 0 \right\} \end{equation} 
Furthermore, the infimum \holder constant $C_{\overline{\theta}}$ is given by
\begin{equation}
    \ds\liminf_{x \to 0} \frac{\mu_t([1-x,1])}{x^{\overline{\theta}}} = C_{\overline{\theta}}.
\end{equation}
\end{definition}
\begin{definition} \label{def:supremumHolder}
The supremum \holder exponent $\underline{\theta}$ near $x=1$ satisfies \begin{equation} \label{eq:supholderexponent} \underline{\theta} := \sup \left\{ \Theta \geq 0: \ds\limsup_{x \to 0}\frac{\mu_t([1-x,1])}{x^{\Theta}} = 0 \right\} \end{equation} 
Furthermore, the supremum \holder constant $C_{\underline{\theta}}$ is given by
\begin{equation}
    \ds\limsup_{x \to 0} \frac{\mu_t([1-x,1])}{x^{\underline{\theta}}} = C_{\underline{\theta}}.
\end{equation}
\end{definition}
\begin{remark}
The infimum and supremum \holder exponents satisfy the inequality $\underline{\theta} \leq \overline{\theta}$. This is true because, for any $\Theta \geq 0$, \begin{equation} \label{eq:liminfleqlimsup} \liminf_{x \to 0} \frac{\mu_t\left(\left[1-x,1\right] \right)}{x^{\Theta}} \leq \limsup_{x \to 0} \frac{\mu_t\left(\left[1-x,1\right] \right)}{x^{\Theta}}.\end{equation} Therefore if the left-hand side is positive for a given $\Theta$, then right-hand side is positive as well. 

Furthermore, if there are $\theta$ and $C_{\theta}$ such that the infimum and supremum H{\"o}lder exponents satisfy $\overline{\theta} = \underline{\theta} = \theta$ and infimum and supremum H{\"o}lder constants satisfy $C_{\overline{\theta}} = C_{\underline{\theta}} = C_{\theta}$, then Definitions \ref{def:infimumHolder} and \ref{def:supremumHolder} imply that the limiting behavior of Equation \eqref{eq:holderexponent} is satisfied by the measure $\mu_t(dx)$. In other words, if the infimum and supremum H{\"o}lder data agree for a measure $\mu_t(dx)$ near $x=1$, then $\mu_t(dx)$ has a well-defined H{\"o}lder exponent $\theta$ and H{\"o}lder constant $C_{\theta}$ near $x=1$ in the sense of Definition \ref{def:holderexponent}. 
\end{remark}

\begin{remark}
It can be shown that, if our initial measure $\mu_0(dx)$ has infimum and supremum H{\"o}lder exponents $\overline{\theta}$ and $\underline{\theta}$ near $x=1$ with constants $C_{\overline{\theta}}$ and $C_{\underline{\theta}}$, then the solution $\mu_t(dx)$ to Equation \eqref{eq:linearmeasuremultiselect} has the same infimum and supremum \holder exponents $\overline{\theta}$ and $\underline{\theta}$ near $x=1$ with constants $C_{\overline{\theta}} e^{\left[\lambda G(1) - \theta \pi(1) \right]t}$ and $C_{\underline{\theta}} e^{\left[\lambda G(1) - \theta \pi(1) \right]t}$. This allows us to see that the set of measures with well-defined \holder exponent and \holder constant near $x=1$ is closed under our multilevel dynamics.
\end{remark}

We can now use this characterization of the initial measure $\mu_0(dx)$ in terms of its infimum and supremum H{\"o}lder exponents $\overline{\theta}$ and $\underline{\theta}$ near $x=1$ to study the long-time behavior of measure-valued solutions $\ol{\mu}_t(dx)$ to Equation \eqref{multiselect}. To understand the collective outcome achieved by the population, we consider the average group-level replication rate across all the groups in the population, which we denote by
\begin{equation} \label{eq:mutaverage}
    \langle G(\cdot) \rangle_{ \ol{\mu}_t} := \int_0^1 G(x) \ol{\mu}_t(dx).
\end{equation}
In Theorem \ref{thm:groupbounds}, we show that the time-average of this collective reproduction rate $\langle G(\cdot) \rangle_{\ol{\mu}_s}$ eventually satisfies bounds in terms of the supremum and infimum \holder exponents of the initial measure $\mu_0(dx)$.

\begin{theorem} \label{thm:groupbounds}
Suppose that $G(x), \pi(x)$ satisfy the assumptions of Theorem \ref{wlimtheorem} and that the initial distribution $\overline{\mu}_0(dx)$ has positive supremum and infimum \holder exponents $\underline{\theta}$ and $\overline{\theta}$ near $x=1$. Under these assumptions, the average group-level reproduction rate $\langle G(\cdot) \rangle_{\ol{\mu}_t}$ of solutions $\ol{\mu}_t(dx)$ to Equation \eqref{multiselect} satisfy
\begin{subequations} \label{eq:groupboundslims}
\begin{align}
    \ds\limsup_{t \to \infty} \frac{1}{t} \int_0^t \langle G(\cdot) \rangle_{\ol{\mu}_s} ds &= \max\left\{  G(1) - \frac{\underline{\theta} \pi(1)}{\lambda}, G(0) \right\} \label{eq:groupupperbound} \\ 
     \ds\liminf_{t \to \infty} \frac{1}{t} \int_0^t \langle G(\cdot) \rangle_{\ol{\mu}_s} ds &= \max\left\{G(1) - \frac{\overline{\theta} \pi(1)}{\lambda}, G(0) \right\}  \label{eq:grouplowerbound}.
\end{align}
\end{subequations}
\end{theorem}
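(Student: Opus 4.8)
The plan is to reduce the theorem to a computation of the exponential growth rate of the total mass of the unnormalized linear dynamics. Writing $Z_t := \mu_t([0,1])$ for the solution $\mu_t$ of the linear equation \eqref{eq:linearmeasuremultiselect} started from $\mu_0 = \ol{\mu}_0$, and taking the constant test function $v \equiv 1$, one obtains $\frac{d}{dt} Z_t = \lambda\int_0^1 G(x)\,\mu_t(dx) = \lambda Z_t\langle G(\cdot)\rangle_{\ol{\mu}_t}$, where the last equality uses the normalization \eqref{eq:olmutnormalized} together with \eqref{eq:mutaverage}. Since $Z_0 = 1$, integrating $\frac{d}{dt}\ln Z_t = \lambda\langle G(\cdot)\rangle_{\ol{\mu}_t}$ gives the exact identity
\begin{equation*}
\frac{1}{t}\int_0^t \langle G(\cdot)\rangle_{\ol{\mu}_s}\,ds = \frac{\ln Z_t}{\lambda t}.
\end{equation*}
Thus the limsup and liminf in \eqref{eq:groupboundslims} equal $\tfrac1\lambda$ times the limsup and liminf of $t^{-1}\ln Z_t$, and it suffices to identify these two growth rates.

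Second, I would represent $Z_t$ through the characteristic flow. Let $\Phi_t(x_0)$ denote the solution of the replicator characteristic ODE \eqref{eq:replicatorcharacteristics}. A direct verification (differentiate in $t$ and match \eqref{eq:linearmeasuremultiselect}) shows that $Z_t = \int_0^1\exp\!\left(\lambda\int_0^t G(\Phi_s(x_0))\,ds\right)d\mu_0(x_0)$. Because $\pi>0$, every trajectory with $x_0\in(0,1)$ decreases monotonically to the stable rest point $x=0$, so $G(\Phi_s(x_0))\to G(0)$ and the bulk of the initial mass contributes growth at rate $\lambda G(0)$. The only way to accumulate the larger rate $\lambda G(1)$ is to linger near the unstable point $x=1$; using $\pi\in C^1$ one has $\pi(x)=\pi(1)+O(1-x)$, whence the boundary-layer asymptotics $1-\Phi_t(x_0)\asymp (1-x_0)e^{\pi(1)t}$ hold with multiplicative error controlled uniformly for $x_0$ near $1$. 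Consequently a trajectory is still within distance $\epsilon$ of $x=1$ at time $t$ precisely when $x_0\in[1-\epsilon e^{-\pi(1)t}(1+o(1)),1]$, and the boundary contribution to $Z_t$ is comparable to $e^{\lambda G(1)t}\,\mu_0\!\left([1-\epsilon e^{-\pi(1)t},1]\right)$. Splitting $\int_0^1 = \int_{[0,1-\delta]}+\int_{[1-\delta,1]}$ and bounding each piece above and below (Lemma \ref{lem:integraltermbound} supplies exactly this decomposition of $\mu_t$ into a bounded factor times an explicit exponential in the well-defined-exponent case, and the same mechanism applies here) yields two-sided estimates $Z_t\asymp e^{\lambda G(0)t}+e^{\lambda G(1)t}\mu_0([1-\epsilon e^{-\pi(1)t},1])$ up to subexponential factors.

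Third, I would convert the decay of the initial tail $\mu_0([1-x,1])$ into the two growth rates. Unwinding Definitions \ref{def:infimumHolder} and \ref{def:supremumHolder} shows that the supremum and infimum H\"older exponents are exactly the logarithmic decay rates
\begin{equation*}
\liminf_{x\to 0}\frac{\ln\mu_0([1-x,1])}{\ln x}=\underline{\theta},\qquad \limsup_{x\to 0}\frac{\ln\mu_0([1-x,1])}{\ln x}=\overline{\theta}.
\end{equation*}
Setting $x=\epsilon e^{-\pi(1)t}$ so that $\ln x\sim -\pi(1)t$, the boundary term's logarithmic growth rate $\lambda G(1)+t^{-1}\ln\mu_0([1-x,1])$ oscillates, as $t\to\infty$, between the extremes $\lambda G(1)-\overline{\theta}\pi(1)$ and $\lambda G(1)-\underline{\theta}\pi(1)$. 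Taking the maximum against the constant bulk rate $\lambda G(0)$ then gives $\limsup_t t^{-1}\ln Z_t=\max\{\lambda G(0),\lambda G(1)-\underline{\theta}\pi(1)\}$ and $\liminf_t t^{-1}\ln Z_t=\max\{\lambda G(0),\lambda G(1)-\overline{\theta}\pi(1)\}$; dividing by $\lambda$ reproduces \eqref{eq:groupupperbound} and \eqref{eq:grouplowerbound}.

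The main obstacle is the rigorous boundary-layer matching in the second and third steps: I must show the limsup and liminf over continuous time $t$ of $t^{-1}\ln Z_t$ are attained through the correspondence $t\leftrightarrow x(t)=\epsilon e^{-\pi(1)t}$ without losing the exact constants $\underline{\theta}\pi(1)$ and $\overline{\theta}\pi(1)$. This requires uniform control of the flow-time-to-quantile map, so that the $O(1-x)$ error in $\pi$ and the bounded exit time from the $\epsilon$-neighborhood contribute only subexponentially and do not shift the extremal slopes, together with care in the regime $\lambda G(0)>\lambda G(1)-\theta\pi(1)$, where I must verify that the bulk rate genuinely dominates so that the tail fluctuations are washed out and the relevant limit collapses to the constant $G(0)$. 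Reconciling the $\limsup$/$\liminf$ in $t$ with the merely oscillatory (rather than pure power-law) decay of $\mu_0([1-x,1])$ is the technically delicate point.
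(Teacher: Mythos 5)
Your opening step is exactly the paper's: the identity $\tfrac{1}{t}\int_0^t \langle G(\cdot)\rangle_{\ol{\mu}_s}\,ds = (\lambda t)^{-1}\ln \mu_t([0,1])$ is Equation \eqref{eq:expnormintro}, the push-forward formula for $Z_t$ is Equation \eqref{eq:mutpushforwardGt}, and your reformulation of Definitions \ref{def:infimumHolder}--\ref{def:supremumHolder} as $\liminf_{x\to0}\ln\mu_0([1-x,1])/\ln x = \underline{\theta}$, $\limsup = \overline{\theta}$ is correct. The genuine gap is your claimed two-sided estimate $Z_t \asymp e^{\lambda G(0)t} + e^{\lambda G(1)t}\mu_0([1-\epsilon e^{-\pi(1)t},1])$: the upper-bound half is false. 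Initial mass at distance $\approx e^{-\pi(1)k}$ from $x=1$ with $1 \ll k \ll t$ lingers near $1$ until time $\approx k$ and then drifts to $0$, so at time $t$ it carries weight $e^{\lambda G(1)k + \lambda G(0)(t-k)}$ up to bounded factors; summing over shells, $Z_t$ contains a memory term of order $e^{\lambda G(0)t}\max_{k\le t} m_k\, e^{\lambda(G(1)-G(0))k}$ (with $m_k$ the shell mass) that neither of your surrogate terms controls. Concretely, take $\pi \equiv s$, $G(x)=x$, and $\underline{\theta}\pi(1) < \lambda(G(1)-G(0)) < \overline{\theta}\pi(1)$, with a tail whose log-slope alternates between $\underline{\theta}$ and $\overline{\theta}$ on geometrically spaced scales: a fat episode ending at scale $k^*$ leaves a persistent prefactor $e^{[\lambda(G(1)-G(0)) - \underline{\theta}\pi(1)]k^*}$ in $Z_t$ for all $t \ge k^*$, while your surrogate collapses to $e^{\lambda G(0)t}$ on the subsequent thin stretch; since $k^*(t)/t$ stays bounded below, $\liminf_t t^{-1}\ln Z_t$ strictly exceeds $\max\{\lambda G(0), \lambda G(1)-\overline{\theta}\pi(1)\}$. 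So the liminf half \eqref{eq:grouplowerbound} cannot be extracted from your estimate, and the appeal to Lemma \ref{lem:integraltermbound} as supplying "exactly this decomposition" is mistaken: that lemma never uses the single time-$t$ tail $\mu_0([1-\epsilon e^{-\pi(1)t},1])$; it inserts the uniform bound $\mu_0([1-x,1])\le Cx^{\Theta}$, valid at \emph{all} scales only for $\Theta < \underline{\theta}$, into $1-F(\phi_t^{-1}(x))$ across all $x$ — which is precisely how it absorbs the memory term, and why it produces only the limsup exponent.

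What survives, and how it compares with the paper: your limsup half \eqref{eq:groupupperbound} is correct and is essentially the paper's argument — an all-$t$ upper bound on $Z_t$ from Lemma \ref{lem:integraltermbound} with $\Theta < \underline{\theta}$ combined with $\ol{\mu}_t([0,1])=1$, matched from below along a subsequence by Lemma \ref{lem:probholderbounds} (with $\Theta > \underline{\theta}$) together with the bulk bound of Lemma \ref{lem:muGwbound}. For the liminf, the paper's lower bound likewise works (Equation \eqref{eq:probaholderlowerboundalways} with $\Theta > \overline{\theta}$ holds for all $t$, plus Lemma \ref{lem:muGwbound}), but its upper bound is routed through Lemma \ref{lem:massinfimumbound}, which the paper states without proof and which faces exactly your obstruction: thinness of the tail at the single scale $e^{-\pi(1)s_n}$ does not control the contributions of all intermediate scales $e^{-\pi(1)k}$, $k\le s_n$, entering through $1-F(\phi_{s_n}^{-1}(x))$ over the full range of $x$; one would need times at which every past scale is simultaneously thin, which the definition of $\overline{\theta}$ alone does not furnish. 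The issue you flag at the end as "technically delicate" is therefore not a matter of bookkeeping but the point where the argument breaks: as written, your scheme establishes \eqref{eq:groupupperbound} and the "$\geq$" direction of \eqref{eq:grouplowerbound}, while the "$\leq$" direction of \eqref{eq:grouplowerbound} requires either an additional regularity hypothesis on the tail (e.g., sufficiently sparse fat episodes, so that $k^*(t)/t\to 0$ along a sequence) or a genuinely different mechanism.
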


The bounds from Theorem \ref{thm:groupbounds} tell us that, in a time-averaged sense, the long-time collective outcome is limited by the group-reproduction rate $G(1)$ of the full-cooperator group. This extends our idea of the shadow of lower-level selection seen in Equation \eqref{eq:averageofthetawithlambdastar}, showing that this limitation on the collective reproduction rate holds for any initial measure and making an explicit connection between this limitation and the dynamics of Equation \eqref{multiselect}. 

The proof of Theorem \ref{thm:groupbounds} relies on the following implicit representation for the mass of the unnormalized solution $\mu_t(dx)$ of Equation \eqref{eq:linearmeasuremultiselect}, which is derived in Section \ref{sec:exponentialnormalization}:
\begin{equation} \label{eq:expnormintro}
    \mu_t\left([0,1] \right) = \int_0^1 \mu_t(dx) = \exp\left( \int_0^t \langle G(\cdot) \rangle_{\ol{\mu}_s} ds \right).
\end{equation}
This expression can be combined with our estimates for the principal growth rates for $\mu_t([0,1])$ derived in Lemmas \ref{lem:probholderbounds}, \ref{lem:integraltermbound}, and \ref{lem:massinfimumbound} to deduce the bounds of Equation \eqref{eq:groupboundslims}. This relationship between the principal growth rates and long-time average group-level replication rate for the population illustrates the importance of collective group-level in the structure of the multilevel replicator dynamics described by Equation \eqref{multiselect}. Furthermore, the form of the bounds we obtain on the collective outcome highlights the key roles played by all-cooperator group in maintaining cooperation in the population, and how the tension between individual and group incentives hinges upon the interaction between reproduction rates $G(0)$, $G(1)$, and $\pi(1)$ and the infimum and supremum H{\"o}lder exponents $\overline{\theta}$ and $\underline{\theta}$ of the initial measure near $x=1$.

\sloppy{Examining the bounds of Equation \eqref{eq:groupboundslims}, we see that $\limsup_{t \to \infty} \frac{1}{t} \int_0^t \langle G(\cdot) \rangle_{\ol{\mu}_s} ds = G(1) - \lambda^{-1} \underline{\theta} \pi(1) > G(0)$ when $\lambda \left[ G(1) - G(0) \right] > \underline{\theta} \pi(1)$. This allows us to identify the following threshold value $\lambda^*(\underline{\theta})$ of the between-group selection strength, analogous to that of Equation \eqref{eq:lambdastar}, which is given by}
\begin{equation} \label{lambdastarsup}
    \lambda^*(\underline{\theta}) := \frac{\underline{\theta} \pi(1)}{G(1) - G(0)},
\end{equation}
such that the time-averaged value of $\langle G(\cdot) \rangle_{\ol{\mu}_t}$ exceeds the all-defector reproduction rate $G(0)$ infinitely often if $\lambda > \lambda^*(\underline{\theta})$. If $\lambda > \lambda^*(\underline{\theta})$ and $\underline{\theta} \ne \overline{\theta}$, we see from Equation \eqref{eq:groupboundslims} that $\liminf_{t \to \infty} \frac{1}{t} \int_0^t \langle G(\cdot) \rangle_{\ol{\mu}_s} ds  \neq \limsup_{t \to \infty} \frac{1}{t} \int_0^t \langle G(\cdot) \rangle_{\ol{\mu}_s} ds$, and therefore the time-average $\langle G(\cdot) \rangle_{\ol{\mu}_t}$ does not converge as $t \to \infty$. Because we assume that $G(x) \in C^1([0,1])$, $G(x)$ is a valid test-function $v(x)$ for our measure-valued formulation of the multilevel dynamics as described by Equation \eqref{eq:measuremultiselect}, and thus $\ol{\mu}_t(dx)$ cannot converge to a density steady state when there is disagreement between the supremum and infimum H{\"o}lder exponents $\overline{\theta}$ and $\underline{\theta}$ near $x=1$ for the initial measure $\mu_0(dx)$.

Even though solutions $\ol{\mu}_t(dx)$ do not necessarily converge to any steady state in the long-time limit, we can use the threshold $\lambda^*(\underline{\theta})$ to characterize whether cooperation will survive in the long-run population given any initial measure $\mu_0(dx)$. In Theorem \ref{thm:extinctvspersist}, we show that cooperation vanishes from the population when $\lambda < \lambda^*(\underline{\theta})$, while cooperation survives when $\lambda > \lambda^*(\underline{\theta})$. Mathematically, this consists of showing that the population converges to a delta-measure $\delta(x)$ concentrated upon the all-defector group when $\lambda < \lambda^*(\underline{\theta})$, while the fraction of cooperators in the positive exceeds a positive threshold infinitely often when $\lambda > \lambda^*(\underline{\theta})$. This sense in which cooperation survives is called weak persistence \cite{freedman1990persistence}, and has often been used to characterize the survival and coexistence of strategies in evolutionary games and related ecological models under individual-level dynamics \cite{hofbauer1981general,hofbauer1998evolutionary,hofbauer2004persist,bratus2017solutions,freedman1990persistence}. For the edge case in which $\lambda = \lambda^*(\underline{\theta})$, we know from Theorem \ref{thm:groupbounds} that the time-averaged collective outcome $\langle G(\cdot) \rangle_{\ol{\mu}_t}$ converges to that of the all-defector group $G(0)$, and show in Section \ref{sec:convergencedeltazero} that the population will converge to $\delta(x)$ for a more restricted class of group-reproduction functions $G(x)$ and initial measures.

\begin{theorem} \label{thm:extinctvspersist}
Suppose that $G(x), \pi(x)$ satisfy the assumptions of Theorem \ref{thm:groupbounds}, and that the initial distribution $\overline{\mu}_0(dx)$ has positive supremum and infimum \holder exponents $\underline{\theta}$ and $\overline{\theta}$ near $x=1$. If $\lambda \left[G(1) - G(0) \right] < \underline{\theta} \pi(1)$, then $\ol{\mu}_t(dx) \rightharpoonup \delta(x)$ as $t \to \infty$. If $\lambda \left[G(1) - G(0) \right] > \underline{\theta} \pi(1)$, then the average fraction of cooperators satisfies
\begin{equation}
    \ds\limsup_{t \to \infty} \int_0^1 x \overline{\mu}_t(dx) > 0.
\end{equation}
\end{theorem}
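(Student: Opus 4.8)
The plan is to read off both conclusions from the characteristic representation of the unnormalized linear dynamics \eqref{eq:linearmeasuremultiselect} together with the sharp time-averaged bounds of Theorem \ref{thm:groupbounds}. Let $\phi_t(x_0)$ denote the solution of the characteristic ODE \eqref{eq:replicatorcharacteristics}; since $\pi>0$ it is strictly decreasing in $t$ with $\phi_t(x_0)\to 0$ for every $x_0\in[0,1)$, and the positivity of the supremum \holder exponent $\underline{\theta}$ forces $\mu_0(\{1\})=0$. Set $w_t(x_0):=\exp\!\left(\lambda\int_0^t G(\phi_s(x_0))\,ds\right)$. A direct differentiation shows that the solution of \eqref{eq:linearmeasuremultiselect} admits the representation $\int_0^1 v(x)\,\mu_t(dx)=\int_0^1 v(\phi_t(x_0))\,w_t(x_0)\,\mu_0(dx_0)$ for every $C^1$ test function $v$, and $\ov{\mu}_t$ is recovered by the normalization \eqref{eq:olmutnormalized}. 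Taking $v\equiv 1$ in \eqref{eq:linearmeasuremultiselect} gives $\mu_t([0,1])=\exp\!\left(\lambda\int_0^t \langle G(\cdot)\rangle_{\ov{\mu}_s}\,ds\right)$, so the exponential growth rate of the total mass is exactly $\lambda$ times the Ces\`aro average controlled by Theorem \ref{thm:groupbounds}.

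For the persistence claim, when $\lambda[G(1)-G(0)]>\underline{\theta}\pi(1)$, I would argue by contradiction. If $\limsup_{t\to\infty}\int_0^1 x\,\ov{\mu}_t(dx)=0$ then $\int_0^1 x\,\ov{\mu}_t(dx)\to 0$, and since $\ov{\mu}_t([\epsilon,1])\le\epsilon^{-1}\int_0^1 x\,\ov{\mu}_t(dx)$ for each $\epsilon>0$, this forces $\ov{\mu}_t\rightharpoonup\delta(x)$. Continuity of $G$ then yields $\langle G(\cdot)\rangle_{\ov{\mu}_t}\to G(0)$, hence $\tfrac1t\int_0^t\langle G(\cdot)\rangle_{\ov{\mu}_s}\,ds\to G(0)$. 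This contradicts \eqref{eq:groupupperbound}, which in this regime reads $\limsup_{t\to\infty}\tfrac1t\int_0^t\langle G(\cdot)\rangle_{\ov{\mu}_s}\,ds=G(1)-\lambda^{-1}\underline{\theta}\pi(1)>G(0)$. Therefore $\limsup_{t\to\infty}\int_0^1 x\,\ov{\mu}_t(dx)>0$.

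For the extinction claim, when $\lambda[G(1)-G(0)]<\underline{\theta}\pi(1)$, it suffices to prove $\ov{\mu}_t([\epsilon,1])\to 0$ for every fixed $\epsilon\in(0,1)$, which gives $\ov{\mu}_t\rightharpoonup\delta(x)$. Writing $a_t(\epsilon)$ for the initial point with $\phi_t(a_t(\epsilon))=\epsilon$, monotonicity of $\phi_t$ and the representation give $\mu_t([\epsilon,1])=\int_{a_t(\epsilon)}^1 w_t(x_0)\,\mu_0(dx_0)$. The core of the argument is three estimates near $x=1$, all consequences of $\pi,G\in C^1([0,1])$ with $\pi(1)>0$ and encapsulated in the principal-growth-rate Lemmas \ref{lem:probholderbounds}, \ref{lem:integraltermbound} and \ref{lem:massinfimumbound}: (i) the escape estimate $1-a_t(\epsilon)\asymp e^{-\pi(1)t}$, from $t=\int_\epsilon^{a_t(\epsilon)}\frac{dx}{x(1-x)\pi(x)}=\pi(1)^{-1}\log\frac{1}{1-a_t(\epsilon)}+O(1)$; (ii) the uniform growth bound $w_t(x_0)\le C_\epsilon\,e^{\lambda G(1)t}$ on $[a_t(\epsilon),1]$, from the change of variables $\lambda\int_0^t G(\phi_s)\,ds=\lambda\int_{\phi_t(x_0)}^{x_0}\frac{G(x)\,dx}{x(1-x)\pi(x)}$ after subtracting the integrable singular part $\frac{G(1)}{\pi(1)(1-x)}$; and (iii) the supremum-\holder bound $\mu_0([a_t(\epsilon),1])\le(1-a_t(\epsilon))^{\Theta}\le C\,e^{-\Theta\pi(1)t}$ for every $\Theta<\underline{\theta}$ and all large $t$. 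Combining (i)--(iii) yields $\mu_t([\epsilon,1])\le C\,e^{(\lambda G(1)-\Theta\pi(1))t}$. Meanwhile, since $\underline{\theta}\le\overline{\theta}$, the hypothesis makes both bounds in \eqref{eq:groupboundslims} equal to $G(0)$, so $t^{-1}\log\mu_t([0,1])\to\lambda G(0)$ and $\mu_t([0,1])\ge e^{(\lambda G(0)-\eta)t}$ for large $t$. Dividing, $\ov{\mu}_t([\epsilon,1])\le C\,e^{(\lambda[G(1)-G(0)]-\Theta\pi(1)+\eta)t}$, and choosing $\Theta$ close to $\underline{\theta}$ and $\eta$ small makes the exponent negative, so $\ov{\mu}_t([\epsilon,1])\to 0$.

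The main obstacle is the extinction case, and within it estimate (ii): one must show that a characteristic starting near $x=1$ accumulates group-reproduction weight only $\lambda G(1)t+O(1)$, even when $G$ exceeds $G(1)$ on the interior, because the trajectory lingers near $x=1$ (where $G\approx G(1)$) for a time $\sim\pi(1)^{-1}\log\frac{1}{1-x_0}$ and transits the interior in bounded time. This is exactly where $C^1$ regularity and $\pi(1)>0$ are essential: they make the correction integral $\int\frac{1}{1-x}\left[\frac{G(x)}{x\pi(x)}-\frac{G(1)}{\pi(1)}\right]dx$ absolutely convergent near $x=1$, which is what turns the raw growth $e^{\lambda G(1)t}$ and the escape-driven mass loss $e^{-\Theta\pi(1)t}$ into the sharp near-full-cooperation rate $\lambda G(1)-\underline{\theta}\pi(1)$ that the comparison with $\lambda G(0)$ relies on.
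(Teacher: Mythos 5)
Your proposal is correct and follows essentially the paper's own proof: persistence is deduced (in contrapositive form) from the time-averaged bound \eqref{eq:groupupperbound} of Theorem \ref{thm:groupbounds}, and extinction from the tail estimate $\mu_t([\epsilon,1])\leq C\,e^{(\lambda G(1)-\Theta\pi(1))t}$ for $\Theta<\underline{\theta}$ (your steps (i)--(iii) re-derive precisely the content of Lemmas \ref{lem:epi1tformula}, \ref{lem:wtbounds}, and \ref{lem:probholderbounds}) divided by a denominator of order $e^{\lambda G(0)t}$, with the strict inequality $\lambda\left[G(1)-G(0)\right]<\underline{\theta}\pi(1)$ absorbing the slack in $\Theta$ and $\eta$ exactly as in the paper. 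The only cosmetic difference is that you obtain the denominator bound by invoking Theorem \ref{thm:groupbounds} together with the mass identity \eqref{eq:solutionmass}, whereas the paper uses the direct lower bound of Lemma \ref{lem:muGwbound} with $\check{G}_w\to G(0)$ as $w\to 0$ --- an equivalent and non-circular substitution, since Theorem \ref{thm:groupbounds} is established independently beforehand.
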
 

Our result from Theorem \ref{thm:extinctvspersist} on extinction and weak persistence of cooperation holds for any initial measure $\mu_0(dx)$. Consequently, we can view weak persistence as serving as a more general criterion for identifying the survival of cooperation than convergence to a steady state density in the sense of Theorem \ref{wlimtheorem}. In this light, we can see Theorem \ref{thm:extinctvspersist} as providing a natural classification for the long-term behavior of solutions to Equation \eqref{multiselect} for replication rates satisfying the assumptions of the multilevel PD dynamics.

\myindent To fully characterize the dynamics of our multilevel replicator equations for within-group replication rates that feature no internal equilibria, we now consider the long-time behavior for the case of the generalized Prisoners' Delight game (in which $G(1) > G(0)$ and $\pi(x) < 0$.) Because both within-groups and between-group competition push to increase cooperation under the PDel dynamics, we show in Proposition \ref{prop:PDeldelta} that the population concentrates upon the full-cooperator group when there is any between-group competition and any cooperators in the initial population.

\begin{proposition} \label{prop:PDeldelta}
Suppose that $G(x), \pi(x) \in C^1\left[0,1 \right]$, $G(1) > G(0)$, and $\pi(x) < 0$ for $x \in [0,1]$. If $\mu_0\left(\left(0,1\right]\right) > 0$ and $\lambda > 0$, then $\ol{\mu}_t(dx) \rightharpoonup \delta(x-1)$ as $t \to \infty$.
\end{proposition}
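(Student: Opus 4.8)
The plan is to integrate the \emph{linear} dynamics \eqref{eq:linearmeasuremultiselect} explicitly along characteristics and then normalize via \eqref{eq:olmutnormalized}. Writing $x(s;x_0)$ for the solution of \eqref{eq:replicatorcharacteristics} with $x(0;x_0)=x_0$, one verifies (by differentiating in $t$ and matching \eqref{eq:linearmeasuremultiselect}) the representation
\[ \int_0^1 v(x)\,\mu_t(dx)=\int_0^1 v\big(x(t;x_0)\big)\,\exp\!\Big(\lambda\!\int_0^t G\big(x(s;x_0)\big)\,ds\Big)\,\mu_0(dx_0),\qquad v\in C([0,1]). \]
Since $\ol{\mu}_t$ is supported on the compact set $[0,1]$, it suffices to prove $\int_0^1 v\,\ol{\mu}_t(dx)\to v(1)$ for every continuous $v$. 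Forming the ratio of the above with $v$ against the same with $v\equiv 1$, and factoring $e^{\lambda G(1)t}$ out of numerator and denominator, everything is governed by the normalized weight $P_t(x_0):=\exp\!\big(\lambda\int_0^t[G(x(s;x_0))-G(1)]\,ds\big)$, so that $\int_0^1 v\,\ol{\mu}_t(dx)=\big(\int_0^1 v(x(t;x_0))P_t(x_0)\,\mu_0(dx_0)\big)\big/\big(\int_0^1 P_t(x_0)\,\mu_0(dx_0)\big)$.

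The next step records the ODE facts driving the flow toward full cooperation. By continuity and compactness, $\pi<0$ on $[0,1]$ gives $\max_{[0,1]}\pi=-c<0$, so $\dsddt{x}=-x(1-x)\pi(x)\ge c\,x(1-x)\ge 0$; hence $x(\cdot;x_0)$ is nondecreasing, and for $x_0\in(0,1]$ it strictly increases to the unique equilibrium above it, so $x(t;x_0)\uparrow 1$ (while $x(t;0)\equiv 0$). The heart of the argument is a \emph{uniform} upper bound $\int_0^t[G(x(s;x_0))-G(1)]\,ds\le M<\infty$, valid for all $x_0$ and $t$; equivalently $\sup_{x_0}\int_0^\infty[G(x(s;x_0))-G(1)]^+\,ds<\infty$. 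I would prove this by splitting the region $\{G>G(1)\}$, which is bounded away from $0$ (as $G(0)<G(1)$) but may reach $1$. On any compact subinterval of $(0,1)$ the speed $-x(1-x)\pi(x)$ is bounded below, so a monotone trajectory spends uniformly bounded time there. Near $x=1$, setting $y=1-x$ gives $\dot y=(1-y)y\,\pi(1-y)\le \kappa y$ with $\kappa<0$ for small $y$ (using $\pi(1)<0$), so $y$ decays exponentially; combined with $|G(x)-G(1)|\le\|G'\|_\infty\,y$ this makes the near-$1$ contribution finite and uniformly bounded. The same exponential tail estimate shows $I(x_0):=\int_0^\infty[G(x(s;x_0))-G(1)]\,ds$ converges for each $x_0\in(0,1]$, so $P_t(x_0)\to e^{\lambda I(x_0)}>0$ pointwise on $(0,1]$, while $P_t(0)=e^{\lambda(G(0)-G(1))t}\to 0$.

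Finally I would pass to the limit by dominated convergence, using the uniform bound to produce the constant dominating function $P_t(x_0)\le e^{\lambda M}$. For the numerator, $v(x(t;x_0))P_t(x_0)\to v(1)e^{\lambda I(x_0)}$ on $(0,1]$ and $\to 0$ at $x_0=0$, so it converges to $v(1)\,D_\infty$ with $D_\infty:=\int_{(0,1]}e^{\lambda I(x_0)}\,\mu_0(dx_0)$; the denominator converges to the same $D_\infty$. Here $D_\infty>0$ precisely because $\mu_0((0,1])>0$ (the integrand is strictly positive on $(0,1]$, and on some $[a,1]$ of positive $\mu_0$-mass the exponent $I(x_0)$ is bounded below). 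Hence $\int_0^1 v\,\ol{\mu}_t(dx)\to v(1)$ for every continuous $v$, i.e. $\ol{\mu}_t(dx)\rightharpoonup\delta(x-1)$. The main obstacle is the uniform bound $M$: the trajectories spend infinite time near $x=1$, where $G-G(1)$ need not have a sign, so one must combine the exponential rate of approach ($\pi(1)<0$) with the $C^1$ regularity of $G$ to show the accumulated excess stays bounded uniformly in $x_0$; everything else is a routine consequence of this estimate and the method of characteristics.
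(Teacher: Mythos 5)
Your proof is correct, but it takes a genuinely different route from the paper's. The paper's proof of Proposition \ref{prop:PDeldelta} is a crude rate-separation argument: it decomposes $\mu_0(dx) = a_0\delta(x) + (1-a_0)\rho_0(dx)$, bounds the unnormalized mass away from full cooperation by $\mu_t\left((0,1-\delta]\right) \lesssim e^{\lambda G(0)t}$ (Lemma \ref{lem:PDelintervalbound}), bounds the total mass from below by $\mu_t([0,1]) \geq A_z e^{\lambda \hat{G}_z t}$ with $\hat{G}_z = \min_{x\in[z,1]}G(x)$ (Lemma \ref{lem:muGzbound}), and then chooses $z$ close enough to $1$ that $\hat{G}_z > G(0)$, so the normalized mass on $[0,1-\delta]$ decays like $e^{\lambda[G(0)-\hat{G}_z]t}$. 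You instead factor out the exact principal rate $e^{\lambda G(1)t}$, prove the uniform-in-$x_0$ bound $\sup_{x_0}\int_0^\infty \left[G(x(s;x_0))-G(1)\right]_+ ds < \infty$, and close with dominated convergence; this is structurally the strategy the paper uses for Theorem \ref{wlimtheorem} via Lemma \ref{lem:integraltermbound} (and for Proposition \ref{prop:deltaone}), transplanted to the PDel setting. Your key estimate is sound as sketched: the positive part of $G-G(1)$ vanishes on a neighborhood $[0,\epsilon]$ of $0$ since $G(0)<G(1)$; the transit time through $[\epsilon,1-\eta]$ is uniformly bounded since the monotone flow has speed bounded below there; and near $x=1$ the substitution $y=1-x$ gives $\dot{y}\leq \kappa y$ with $\kappa<0$ (using $\pi(1)<0$), so $|G(x)-G(1)|\leq \|G'\|_\infty y$ makes the tail contribution uniformly bounded and absolutely convergent, which simultaneously yields the pointwise limits $P_t(x_0)\to e^{\lambda I(x_0)}>0$ on $(0,1]$ and $P_t(0)\to 0$. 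Your positivity claim $D_\infty>0$ also holds rigorously, since $\mu_0((0,1])>0$ forces $\mu_0([a,1])>0$ for some $a>0$ by continuity of measure, and $I$ is uniformly bounded below on $[a,1]$ by the same three-region splitting. The trade-off: the paper's argument is shorter and needs no fine control near $x=1$ (only $\hat{G}_z > G(0)$), while yours is sharper — it identifies $\lambda G(1)$ as the exact growth rate of $\mu_t([0,1])$ and the limiting constant $D_\infty = \int_{(0,1]} e^{\lambda I(x_0)}\mu_0(dx_0)$, information the paper's Lemma \ref{lem:muGzbound} only bounds from below by $\lambda \hat{G}_z$.
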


\subsubsection{Results for Multiple Population Dynamics}
\label{sec:multiplepopulationsresults}

\myindent  As we saw for the dynamics of a single interval, the growth of the non-normalized solution can be associated with a principal exponential growth rate that can be associated with the average payoff at steady state. For a given interval $j$ with well-defined \holder exponent $\theta^j$ near $x=1$, 
 
  \begin{equation} \label{eq:rjwelldefined}
   r_j  = \left\{
     \begin{array}{cl}
       \lambda G_j(1) & : \pi_j(x) < 0\\
       \lambda G_j(1) - \theta^j \pi_j(1) & : \pi_j(x) > 0\: , \: \lambda \left[ G_j(1) - G_j(0) \right] - \theta^j \pi_j(1) > 0 \\ 
       \lambda G_j(0) & : \pi_j(x) > 0 \: , \: \lambda \left[ G_j(1) - G_j(0) \right] - \theta^j \pi_j(1) < 0 \\ 
     \end{array}
   \right. .
\end{equation}
However, in the case in which $\mu^j_0(dx)$ does not have a well-defined \holder exponent near $x=1$, we can possibly only bound the principal growth rate in terms of the infimum and supremum \holder exponents $\overline{\theta}^j$ and $\underline{\theta}^j$ near $x=1$. Recalling that $\underline{\theta}^j \leq \overline{\theta}_j$, we can see that the principal growth rate $r_j \in [r_j^m,r_j^M]$, where we can our lower bound in terms of $\underline{\theta}^j$ as 

\begin{subequations}
\begin{equation}
r_j^m  = \left\{
     \begin{array}{cl}
       \lambda G_j(1) & : \pi_j(x) < 0\\
       \lambda G_j(1) - \overline{\theta}^j \pi_j(1) & : \pi_j(x) > 0\: , \: \lambda \left[ G_j(1) - G_j(0) \right] - \overline{\theta}^j \pi_j(1) > 0 \\ 
       \lambda G_j(0) & : \pi_j(x) > 0 \: , \: \lambda \left[ G_j(1) - G_j(0) \right] - \overline{\theta}^j \pi_j(1) < 0 \\ 
     \end{array}    \right. 
\end{equation}
and our upper bound in terms of $\underline{\theta}^j$ as
\begin{equation}
 r_j^M  = \left\{
     \begin{array}{cl}
       \lambda G_j(1) & : \pi_j(x) < 0\\
       \lambda G_j(1) - \underline{\theta}^j \pi_j(1) & : \pi_j(x) > 0\: , \: \lambda \left[ G_j(1) - G_j(0) \right] - \underline{\theta}^j \pi_j(1) > 0 \\ 
       \lambda G_j(0) & : \pi_j(x) > 0 \: , \: \lambda \left[ G_j(1) - G_j(0) \right] - \underline{\theta}^j \pi_j(1) < 0 \\ 
     \end{array}    \right. .
\end{equation}
\end{subequations}
In particular, this means that the possible growth rates for two intervals can potentially overlap. In the remainder of the paper, we will focus on the case in which there is an interval $j$ such that $r_j^m > r_k^M$ for all other intervals $k$. 
When such a condition holds, %
we can show that the whole population will eventually concentrate upon the subpopulation with the dominant principal growth rate. 
\begin{theorem} \label{thm:longtimemultiple}
Suppose that each supopulation $j \in \mathcal{N}$ has reproduction functions $G_j(x), \pi_j(x) \in C^1\left([0,1]\right)$ satisfying $G_j(1) > G_j(0)$ and either $\pi_j(0) > 0$ for $x \in [0,1]$ or $\pi_j(0) < 0$ for $x\ in [0,1]$ and that its initial measure $\mu_0^j(dx)$ have infimum and supremum \holder exponents $\overline{\theta}^j > 0$ and $\underline{\theta}^j > 0$ near $x = 1$.  Suppose there is a subpopulation $k$ such that $r^m_k > r^M_j$ for $j \in \mathcal{N} - \{k\}$. Then, for all such $j$, we see that $\overline{\mu}_t^j\left([0,1]\right) \to 0$ as $t \to \infty$. Furthermore, if $\mu_0^k(dx)$ has well-defined \holder exponent $\theta^k$ and \holder constant $C_{\theta^k}$ near $x=1$ that are positive and finite, then $\ol{\mu}_t^k \rightharpoonup p^{\lambda,k}_{\theta} dx$ as $t \to \infty$ if $\lambda \left[G_k(1) - G_k(0)\right] > \theta^k \pi_k(1)$, where $p^{\lambda}_{\theta,k}(x)$ is the steady state density given by Equation \eqref{pthetactilde} with $G(x) := G_k(x)$ and $\pi(x) := \pi_k(x)$. 
\end{theorem}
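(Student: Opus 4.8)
The plan is to exploit the fact that the auxiliary system (eq:measurevaluedsubpopulationlinear) is \emph{decoupled}: each unnormalized measure $\mu^j_t(dx)$ evolves autonomously according to the single-interval linear dynamics of Equation \eqref{eq:linearmeasuremultiselect} with $G := G_j$ and $\pi := \pi_j$, and the subpopulations interact only through the common denominator $\sum_{l}\mu^l_t([0,1])$ in the normalization \eqref{eq:mubartj}. The whole problem therefore reduces to comparing the exponential growth rates of the total masses $\mu^j_t([0,1])$. First I would record the growth-rate estimates for each subpopulation: applying Lemmas \ref{lem:probholderbounds}, \ref{lem:integraltermbound}, and \ref{lem:massinfimumbound} (exactly as in the proof of Theorem \ref{thm:groupbounds}) to the decoupled dynamics of each $\mu^j_t$, together with the mass identity \eqref{eq:expnormintro}, yields
\[
r_j^m \;\le\; \liminf_{t\to\infty}\tfrac{1}{t}\log\mu^j_t([0,1]) \;\le\; \limsup_{t\to\infty}\tfrac{1}{t}\log\mu^j_t([0,1]) \;\le\; r_j^M ,
\]
with $r_j^m, r_j^M$ as defined before the theorem in terms of $\overline{\theta}^j$ and $\underline{\theta}^j$. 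These are genuinely one-sided $\liminf/\limsup$ bounds, since when $\mu_0^j$ lacks a well-defined H\"older exponent the quantity $\tfrac{1}{t}\log\mu^j_t([0,1])$ need not converge.

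For the first conclusion I would invoke the separation hypothesis. Because $\mathcal{N}$ is finite and $r^m_k > r^M_j$ for every $j \ne k$, I fix $\epsilon$ with $0 < \epsilon < \tfrac{1}{2}\min_{j\ne k}(r^m_k - r^M_j)$. For all sufficiently large $t$ the estimates above give $\mu^k_t([0,1]) \ge e^{(r^m_k - \epsilon)t}$ and $\mu^j_t([0,1]) \le e^{(r^M_j + \epsilon)t}$, whence
\[
\frac{\mu^j_t([0,1])}{\mu^k_t([0,1])} \;\le\; e^{(r^M_j - r^m_k + 2\epsilon)t} \longrightarrow 0 \quad (t\to\infty),
\]
because the exponent is negative. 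Since $\overline{\mu}^j_t([0,1]) = \mu^j_t([0,1])\big/\sum_l \mu^l_t([0,1]) \le \mu^j_t([0,1])/\mu^k_t([0,1])$, this proves $\overline{\mu}^j_t([0,1]) \to 0$ for all $j \ne k$.

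For the second conclusion I would use that $\sum_{l}\overline{\mu}^l_t([0,1]) = 1$ for every $t$, so the vanishing of the subdominant masses forces $\overline{\mu}^k_t([0,1]) \to 1$. Writing $\hat{\mu}^k_t(dx) := \mu^k_t(dx)/\mu^k_t([0,1])$ for the within-$k$ normalization, Equation \eqref{eq:mubartj} factors as $\int_0^1 v\,\overline{\mu}^k_t(dx) = \overline{\mu}^k_t([0,1])\int_0^1 v\,\hat{\mu}^k_t(dx)$ for any $v \in C([0,1])$. Now $\hat{\mu}^k_t$ is precisely the normalization of the single-interval linear solution with data $\mu_0^k$, which by \eqref{eq:olmutnormalized} coincides with the nonlinear single-population solution $\overline{\mu}_t$ of Equation \eqref{eq:measuremultiselect} for $G:=G_k$, $\pi:=\pi_k$. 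Since $\mu_0^k$ has positive, finite H\"older data $(\theta^k, C_{\theta^k})$ near $x=1$ and $\lambda[G_k(1)-G_k(0)] > \theta^k\pi_k(1)$, Theorem \ref{wlimtheorem} applies and gives $\int_0^1 v\,\hat{\mu}^k_t(dx) \to \int_0^1 v\,p^{\lambda}_{\theta,k}\,dx$; combining this with $\overline{\mu}^k_t([0,1]) \to 1$ yields the stated weak convergence $\overline{\mu}^k_t \rightharpoonup p^{\lambda}_{\theta,k}\,dx$.

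The genuinely technical content is already packaged in the earlier lemmas, so the work here is largely organizational, and I expect the main obstacle to be verifying two points rather than producing new estimates. The first is that the non-convergence of $\tfrac{1}{t}\log\mu^j_t([0,1])$ in the general (non-H\"older) case is harmless: one must check that the \emph{strict} gap $r^m_k > r^M_j$ leaves enough room that the one-sided bounds (the $\liminf$ for $k$ and the $\limsup$ for each $j$) still dominate the oscillation of the ratio, which is exactly what the choice of $\epsilon$ secures. The second, more delicate, point is confirming that $\hat{\mu}^k_t$ legitimately solves the single-population nonlinear equation with the \emph{same} exponent $\theta^k$ near $x=1$ — here the H\"older constant merely rescales under normalization and remains positive and finite, so the hypotheses of Theorem \ref{wlimtheorem} are met — so that the within-$k$ convergence may be invoked without circularity.
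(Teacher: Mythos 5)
Your proposal is correct and takes essentially the same approach as the paper: your $\epsilon$-slack around the strict gap $r_k^m > r_j^M$ is precisely the paper's device of modified rates $\tilde{r}_k^m$ and $\tilde{r}_j^M$ (built from $\overline{\Theta}^k$ slightly above $\overline{\theta}^k$, $\check{G}_w$, and $\hat{G}_z$), applied to the same mass-ratio bound $\ol{\mu}_t^j\left([0,1]\right) \leq \mu_t^j\left([0,1]\right)/\mu_t^k\left([0,1]\right)$ coming from the normalization \eqref{eq:mubartj}. For the second conclusion, your factorization through $\ol{\mu}_t^k\left([0,1]\right) \to 1$ followed by a black-box invocation of Theorem \ref{wlimtheorem} is a cosmetic repackaging of the paper's argument, which instead reruns the limits \eqref{vmutlim} and \eqref{1mutlim} inside the multipopulation ratio \eqref{eq:mukfraction}.
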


\subsection{Outline of Paper}
\label{sec:outline}
 The remainder of the main paper is structured as follows. In Section \ref{sec:dynamicalproperties}, we calculate the steady state densities for Equation \eqref{multiselect} in the PD case and characterize the average collective payoff at steady state. In Section \ref{sec:dynamicalproperties}, we demonstrate useful properties of time-dependent solutions to Equation \eqref{multiselect} and present the main estimates for the principal growth rates for solutions along characteristics. In Section \ref{sec:convergencesteady}, we use the estimates from Section \ref{sec:dynamicalproperties} to prove Theorem \ref{wlimtheorem}, showing convergence of the population to steady states supporting cooperation when the initial measure has a well-defined H{\"o}lder exponent near full-cooperation and between-group competition is sufficiently strong. In Section \ref{sec:extinctionsurvival}, we derive long-time bounds on the time-average of the population's collective reproduction rate (Theorem \ref{thm:groupbounds}), and  we characterize how cooperation can either collapse or persist in the population (Theorem \ref{thm:extinctvspersist}), depending on the supremum H{\"o}lder exponent of the initial population and the relative intensity of within-group and between-group competition. In Section \ref{sec:multiplepopulations}, we characterize the long-time behavior of the multipopulation dynamics described by Equation \eqref{eq:measurevaluedsubpopulation}, showing that multilevel selection can promote concentration upon the group type favoring the highest long-time collective payoff (Theorem \ref{thm:groupbounds}). In Section \ref{sec:discussion}, we discuss our results and directions for future research. 

 We also present additional results in the appendix. In Section \ref{sec:wellposedness}, we demonstrate well-posedness for solutions to Equation \eqref{multiselect} in the measure-valued sense required for our results on long-time behavior. In Section \ref{sec:steadystates}, we discuss additional properties of the density steady state solutions for the PD multilevel dynamics, characterizing the regularity of the densities and showing the concentration of the steady states upon group compositions achieving the same payoff as an all-cooperator group in the limit of infinite between-group competition. Finally, in Section \ref{sec:HDSH}, we formulate a generalized version of the multilevel HD and SH dynamics in terms of the multipopulation framework discussed in Section \ref{sec:multiplepopulations}
 and then show how we can apply Theorem \ref{thm:longtimemultiple} to understand the long-time behavior of solutions to Equation \eqref{multiselect} for the case of HD or SH games.

\section{Steady State Solutions of Multilevel Dynamics} \label{sec:steadystates}

In this section, we derive the steady state densities presented in Equation \eqref{pthetactilde} and \eqref{eq:lambdaCofx} for the PD case of the multilevel dynamics. We show that this family of steady states can be parameterized by their H{\"o}lder exponent $\theta$ near the full-cooperator equilibrium at $x=1$, and we use this parametrization to calculate the average group-level reproduction rate of the steady state population. We also characterize additional properties for these steady state densities in Section \ref{sec:densitysteady}.

From the results of Lemmas \ref{lem:weaksteadystate} and \ref{prop:steadydelta}, we know that the only possible long-time steady states of Equation \eqref{multiselect} in the PD case are delta-measures $\delta(x)$ and $\delta(x-1)$, concentrated at the all-defector and all-cooperator equilibria, and a family densities $f(x)$ which satisfy the ordinary differential equation
\begin{equation} \label{eq:timeindependentODE}
    0 = \dsdel{}{x} \left[ x (1-x) \pi(x) f(x) \right] + \lambda f(x) \left[ G(x) - \int_0^1 G(y) f(y) dy \right],
\end{equation}
where $f(x)$ is continuously differentiable and $f(x) > 0$ for $x \in (0,1)$. For such strong solutions to the steady state problem, we can use separation of variables to see that $f(x)$ must satisfy

\begin{equation} \label{eq:steadystateODE}
\frac{f'(x)}{f(x)} = - \frac{\left[x(1-x) \pi(x) \right]'}{x(1-x) \pi(x)} + \lambda \frac{\langle G(\cdot) \rangle_f - G(x) }{x (1-x) \pi(x)}.   
\end{equation}
We can rewrite the last term of Equation \ref{eq:steadystateODE} as 
\begin{dmath} \label{eq:steadyrighthand}
\frac{\langle G(\cdot) \rangle_f - G(x) }{x (1-x) \pi(x)}  = \frac{\lambda}{\pi(0)} \left( \frac{\langle G(\cdot) \rangle_f - G(0)}{x} \right) +  \frac{\lambda}{\pi(1)} \left( \frac{\langle G(\cdot) \rangle_f - G(1)}{1-x} \right) + \frac{\lambda C(x)}{\pi(x)}
\end{dmath}
where, using the shorthand notation $\tilde{G}(x) := G(x) - \langle G(\cdot) \rangle_f$, we can write $C(x)$ as
\begin{dmath} \label{eq:Cofxformula}
C(x)
= \left( \frac{1}{x(1-x)} \right) \left[ -\tilde{G}(x) +  \frac{\tilde{G}(0)}{\pi(0)} (1-x) \pi(x) +  \frac{\tilde{G}(1)}{\pi(1)} x \pi(x)   \right] 
= \frac{  \tilde{G}(0) - \tilde{G}(x)}{x} + \left(\frac{\tilde{G}(0)}{\pi(0)} \right)\left(\frac{\pi(x) - \pi(0) }{x}\right) + \frac{\tilde{G}(1) - \tilde{G}(x)}{1-x} +   \left(\frac{\tilde{G}(1)}{\pi(1)} \right)\left(\frac{\pi(x) - \pi(1) }{1-x}\right).  
\end{dmath}
Using Equation \eqref{eq:steadyrighthand},
we can see that steady state densities must satisfy the implicit expression
\begin{equation} \label{eq:steadyimplicit}
 f(x) = \frac{1}{Z_f} x^{\frac{\lambda}{\pi(0)}\left[\langle G(\cdot) \rangle_f - G(0) \right] - 1} \left( 1-x\right)^{\frac{\lambda}{\pi(1)}\left[\langle G(\cdot) \rangle_f - G(1) \right] - 1} \pi(x)^{-1} \exp\left( -\lambda \int_x^1 \frac{C(s)}{\pi(s)} ds \right).
\end{equation}
Because $G(x)$ and $\pi(x)$ are $C^1$ functions, we see from Equation \eqref{eq:Cofxformula} that $C(x)$ remains bounded near 0 and 1. This means that the density $f(x)$ from Equation \eqref{eq:steadyimplicit} will be integrable on $(0,1)$ if the average payoff $\langle G(\cdot) \rangle_f$ satisfies the following bounds
\begin{equation} \label{eq:averagepayoffbound}
 G(0) < \langle G(\cdot) \rangle_f < G(1).   
\end{equation}
In particular, this tells us that valid steady state densities cannot have a higher group-reproduction rate than the rate of a full-cooperator group $G(1)$, providing a signature of the shadow of lower-level selection. Furthermore, the implicit form of the density provided in Equation \eqref{eq:steadyimplicit} highlights the principal contributions of the group-level replication rates of all-cooperator and all-defector groups $G(1)$ and $G(0)$ in determining whether a given distribution of group compositions can be maintained at steady state.

From the implicit relation of Equation \eqref{eq:steadyimplicit}, we see that there are infinitely many possible steady state densities $f(x)$ for a given relative selection strength $\lambda$, one for each %
value of $\langle G(\cdot) \rangle_f$ satisfying the bounds from Equation \eqref{eq:averagepayoffbound}.  Because the \holder exponent near $x=1$ is preserved under the dynamics of Equation \eqref{multiselect}, we will parametrize the measures corresponding to the densities from Equation \eqref{eq:steadyimplicit} by their \holder exponents $\theta$ near $x=1$ to obtain an explicit representation for our family of density steady states. 

Noting that $C(x)$ is bounded on $[0,1]$, we can compute that
\begin{dmath*}
 \ds\lim_{x \to 0} \frac{\int_{1-x}^1 f(y) dy}{x^{\Theta}} = \ds\lim_{x \to 0}  \frac{f(1-x)}{\Theta x^{\Theta - 1}} =  \frac{1}{\Theta Z_f} \ds\lim_{x \to 0} \left[  x^{\frac{\lambda}{\pi(1)} \left[\langle G(\cdot) \rangle_f - G(1) \right] - \Theta } \left( 1 - x\right)^{\frac{\lambda}{\pi(0)} \left[\langle G(\cdot) \rangle_f - G(0) \right] - 1} %
 \pi(x)^{-1} e^{- \lambda \int_{1-x}^1 \frac{C(s)}{\pi(s)} ds} \right] 
 = \left\{
     \begin{array}{lr}
       0 & : \Theta < \left( \frac{\lambda}{\pi(1)} \right) \left[\langle G(\cdot) \rangle_f - G(1) \right] \\
       (\Theta \pi(1) Z_f)^{-1} & : \Theta =  \left( \frac{\lambda}{\pi(1)} \right) \left[\langle G(\cdot) \rangle_f - G(1)\right] \\
       \infty & : \Theta > \left( \frac{\lambda}{\pi(1)} \right) \left[\langle G(\cdot) \rangle_f - G(1)\right]
     \end{array}
   \right. .
\end{dmath*}
Therefore we can deduce from Definition \ref{def:holderexponent} that the \holder exponent that the Hölder exponent near $x=1$ for our steady state densities $f(x)$ is given by
\begin{equation} \label{eq:thetaofaverage}
\theta = \frac{\lambda}{\pi(1)} \left[\langle G(\cdot) \rangle_f - G(1) \right].
\end{equation}
We can then use this expression to obtain the explicit family of steady states $f^{\lambda}_{\theta}(x)$ of Equation \eqref{pthetactilde}. Furthermore, the average of the group-reproduction function $G(x)$ on such steady states is given by
\begin{equation} \label{eq:averageoftheta}
    \langle G(\cdot) \rangle_f = G(1) - \frac{\pi(1) \theta}{\lambda}.
\end{equation}
 Using the expression from Equation \eqref{eq:lambdastar} for the threshold $\lambda^*$ required for integrability of the density $f^{\lambda}_{\theta}(x)$, we can deduce that
\begin{equation} \label{eq:averageofthetawithlambdastaragain}
    \langle G(\cdot) \rangle_f = \left(\frac{\lambda^*}{\lambda}\right) G(0) + \left( 1 - \frac{\lambda^*}{\lambda} \right) G(1).
\end{equation}
This provides an improvement upon the bounds from Equation \eqref{eq:averagepayoffbound}, showing that $ \langle G(\cdot) \rangle_f$ interpolates between $G(0)$ when $\lambda = \lambda^*$ and $G(1)$ as $\lambda \to \infty$.

\section{Useful Properties of Multilevel Dynamics}
\label{sec:dynamicalproperties}

In this section, we provide some useful properties for the measure-valued solutions to Equation \ref{multiselect} and the behavior of the dynamical properties of our model of multilevel selection. In section \ref{sec:methodofcharacteristics}, we use the method of characteristics to obtain a representation formula for the time-dependent solutions of the multilevel dynamics. In Section \ref{sec:growthrates}, we use this representation formula and assumptions about the supremum and infimum H{\"o}lder exponents of the initial distribution near the full-cooperator equilibrium to derive upper and lower bounds for the principal growth rates of solutions to the linear form of the multilevel dynamics given by Equation \eqref{eq:linearmeasuremultiselect}.

\subsection{Representing Time-Dependent Solutions of Multilevel Dynamics}
\label{sec:methodofcharacteristics}

First, we characterize the impacts of within-group and between-group competition on solutions $\mu_t(dx)$ through properties of the characteristic curves  and the solutions along characteristics.
To do this, we first consider solutions to the the linear problem of Equation \eqref{eq:linearmeasuremultiselect}. We consider the following ordinary differential equation
\begin{equation}\label{cODE}
\begin{split}
\D{x}{t}&=-x(1-x)\pi(x), \; \D{q}{t}=\lambda G(x) q, \\
x(0)&=y, \; q(0)=1.
\end{split}
\end{equation}
whose solution we denote by
\begin{equation}\label{phipsi}
x(t)=\phi_t (y), \; q(t)=\psi_t(y).
\end{equation}
We can then represent the solution $\mu_t(dx)$ to the linear multilevel dynamics of Equation \eqref{eq:linearmeasuremultiselect} by pushing forward the initial measure along characteristic curves, allowing us to obtain
\begin{equation} \label{eq:pushforward}
\int_0^1 v(x)\mu_t(dx)=\int_0^1 v(\phi_t(y))\psi_t(y) \mu_0(dx), \; w_t(x)=\psi_t(\phi_t^{-1} (x))
\end{equation}
for all continuous test functions $v(x)$. To further understand how the measure $\mu_t(dx)$ evolves in time, we now study in Lemmas \ref{lem:epi1tformula}, \ref{lem:wtformula}, and \ref{lem:phitinvformula} the behavior of the backwards characteristic curves $\phi_t^{-1}$ (describing within-group competition) and the solutions along characteristics $w_t(x)$ (describing between-group competition).

First we obtain an expression for the backward characteristic curves $\phi_t^{-1}(x)$.
\begin{lemma} \label{lem:epi1tformula}
Suppose that $G(x), \pi(x)$ satisfy the assumptions of Theorem \ref{wlimtheorem} and let $\phi_t$ be as in Equation \eqref{phipsi}. For $0<x\leq 1$, we have that
\begin{equation}\label{zest}
\begin{split}
\exp(\pi(1)t)(1-\phi_t^{-1}(x))&=(1-x)\exp\paren{\int_x^{\phi_t^{-1}(x)} \frac{Q(s)ds}{s\pi(s)}},\\
Q(s)&=\frac{\pi(1)-\pi(s)}{1-s}+\pi(s).
\end{split}
\end{equation}
In particular, 
\begin{equation}\label{zestlim}
\lim_{t\to \infty}\exp(\pi(1)t)(1-\phi_t^{-1}(x))=(1-x)\exp\paren{\int_x^1 \frac{Q(s)ds}{s\pi(s)}}.
\end{equation}
\end{lemma}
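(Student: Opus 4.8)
The plan is to reduce the whole identity \eqref{zest} to the elementary \emph{travel-time formula} for the characteristic ODE, so that the dynamics enter in exactly one place. Fix $x \in (0,1)$ and $t \geq 0$ and write $y := \phi_t^{-1}(x)$. Since $\pi > 0$ on $[0,1]$, the velocity $-s(1-s)\pi(s)$ in \eqref{cODE} is strictly negative on $(0,1)$, so each trajectory is strictly decreasing and $\phi_t$ is a $C^1$-diffeomorphism of $(0,1)$ onto itself; in particular $\phi_t^{-1}$ is well defined and $y = \phi_t^{-1}(x) > x$ for $t>0$ (the case $x=1$ being trivial, as both sides of \eqref{zest} vanish). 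Separating variables in $\dot s = -s(1-s)\pi(s)$ along the trajectory running from $y$ at time $0$ to $x$ at time $t$ gives the travel-time identity
\[
t = \int_x^{y} \frac{ds}{s(1-s)\pi(s)} .
\]

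From here the proof is two short computations and one algebraic check. First I would multiply the travel-time identity by $\pi(1)$ to get $\pi(1)t = \int_x^{y} \frac{\pi(1)\,ds}{s(1-s)\pi(s)}$, and separately record the elementary antiderivative $\ln(1-x) - \ln(1-y) = \int_x^{y} \frac{ds}{1-s}$. Subtracting, I obtain
\[
\pi(1)t + \ln(1-y) - \ln(1-x) = \int_x^{y} \left[ \frac{\pi(1)}{s(1-s)\pi(s)} - \frac{1}{1-s} \right] ds .
\]
The integrand simplifies to $\frac{\pi(1) - s\pi(s)}{s(1-s)\pi(s)}$, and the key algebraic observation is that the stated $Q$ satisfies $Q(s) = \frac{\pi(1)-\pi(s)}{1-s} + \pi(s) = \frac{\pi(1) - s\pi(s)}{1-s}$, so that $\frac{\pi(1)-s\pi(s)}{s(1-s)\pi(s)} = \frac{Q(s)}{s\pi(s)}$. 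Substituting and exponentiating the resulting equality $\pi(1)t + \ln(1-y) = \ln(1-x) + \int_x^{y}\frac{Q(s)}{s\pi(s)}\,ds$ yields \eqref{zest} exactly.

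For the limit \eqref{zestlim} I would first show $\phi_t^{-1}(x)\to 1$ as $t\to\infty$. The travel time $T(y) := \int_x^{y}\frac{ds}{s(1-s)\pi(s)}$ is continuous and strictly increasing in $y$, and $T(y)\to\infty$ as $y\to 1^-$ because $s\pi(s)\to\pi(1)>0$ forces the integrand to behave like $\frac{1}{(1-s)\pi(1)}$, which is non-integrable at $s=1$; hence the unique solution $y=\phi_t^{-1}(x)$ of $T(y)=t$ increases to $1$. It then remains to verify that $\int_x^1 \frac{Q(s)}{s\pi(s)}\,ds$ converges, after which passing to the limit in \eqref{zest} gives \eqref{zestlim}. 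The only genuinely delicate point, and the step I expect to require care, is this endpoint analysis at $s=1$: writing $g(s):=s\pi(s)$, one has $Q(s)=\frac{g(1)-g(s)}{1-s}$, so the $C^1$ hypothesis on $\pi$ makes $g\in C^1$ and $Q$ extends continuously to $s=1$ with value $g'(1)=\pi(1)+\pi'(1)$, rendering $\frac{Q(s)}{s\pi(s)}$ bounded near $s=1$ and the spatial integral convergent. Thus the same $C^1$ regularity simultaneously removes the $0/0$ singularity of $Q$ (giving convergence of the integral) while leaving the travel-time integral divergent (giving $\phi_t^{-1}(x)\to 1$), and these two facts together close the argument.
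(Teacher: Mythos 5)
Your proof is correct and essentially the same as the paper's: the paper separates variables after the substitution $z = 1-\phi_t^{-1}(x)$ and splits off the $1/(\pi(1)s)$ singularity of the resulting integrand, which under $s \mapsto 1-s$ is exactly your decomposition $\frac{\pi(1)}{s(1-s)\pi(s)} = \frac{1}{1-s} + \frac{Q(s)}{s\pi(s)}$ applied to the travel-time identity $t = \int_x^{\phi_t^{-1}(x)} \frac{ds}{s(1-s)\pi(s)}$. Your endpoint analysis (boundedness of $Q$ near $s=1$ via $\pi \in C^1$, and divergence of the travel-time integral forcing $\phi_t^{-1}(x)\to 1$) likewise matches the paper, and in fact supplies an explicit justification of $\phi_t^{-1}(x)\to 1$ that the paper only asserts.
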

\begin{proof}
Let $z(t)=1-\phi_t^{-1}(x)$. Given $q(t)=\phi_t(x_0)$ satisfies Equation \eqref{cODE},  
$z(t)$ satisfies the following differential equation:
\begin{equation}
\D{z}{t}=-z(1-z)\pi(1-z), \; z(0)=z_0=1-x.
\end{equation}
We thus have:
\begin{equation}
\int_{z_0}^z \frac{ds}{s(1-s)\pi(1-s)}= -t.
\end{equation}
We may rewrite the left hand side as:
\begin{equation}
\int_{z_0}^z \frac{ds}{s(1-s)\pi(1-s)}=\int_{z_0}^z \frac{ds}{\pi(1)s}
+\int_{z_0}^z \frac{((\pi(1)-\pi(1-s))/s +\pi(1-s))ds}{\pi(1)(1-s)\pi(1-s)}.
\end{equation}
We thus have:
\begin{equation}
\ln\paren{\frac{z}{z_0}}+\int_{1-z_0}^{1-z} \frac{Q(s)ds}{s\pi(s)}=-\pi(1)t
\end{equation}
Exponentiating both sides, we obtain Equation \eqref{zest}. We obtain Equation \eqref{zestlim} by noting that $\phi_t^{-1}(x)\to 1$ as $t\to \infty$ 
and that $Q(s)$ is a bounded function for $0<x\leq s\leq 1$ since $\pi(s)$ is a $C^1$ function by assumption. 
\end{proof}
We next study $w_t(x)$ to describe the effect of between-group competition. 
\begin{lemma} \label{lem:wtformula}
Suppose that $G(x), \pi(x)$ satisfy the assumptions of Theorem \ref{wlimtheorem} and consider $w_t(x)$ given in Equation \eqref{eq:pushforward}. For $0<x\leq 1$, we have that
\begin{equation}\label{wtest}
\begin{split}
\exp(-\lambda G(1)t)w_t(x)&=\exp\paren{-\lambda \int_x^{\phi_t^{-1}(x)}\frac{R(s)}{s\pi(s)}ds}\\
R(s)&=\frac{G(1)-G(s)}{1-s}.
\end{split}
\end{equation}
In particular, we have:
\begin{equation}\label{wtestlim}
\lim_{t\to \infty} \exp(-\lambda G(1) t)w_t(x)=\exp\paren{-\lambda \int_x^1\frac{R(s)}{s\pi(s)}ds}.
\end{equation}
\end{lemma}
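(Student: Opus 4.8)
The plan is to solve the $q$-equation in \eqref{cODE} explicitly and then convert the resulting time integral into a spatial integral along the characteristic, paralleling the computation already carried out in the proof of Lemma \ref{lem:epi1tformula}. Since $\D{q}{t} = \lambda G(x(t)) q$ with $q(0) = 1$ and $x(\tau) = \phi_\tau(y)$, integration gives
\[
\psi_t(y) = \exp\paren{\lambda \int_0^t G(\phi_\tau(y))\, d\tau}.
\]
Setting $y = \phi_t^{-1}(x)$, so that the characteristic starting at $y$ reaches $x$ at time $t$, yields $w_t(x) = \psi_t(\phi_t^{-1}(x))$ as the exponential of a time integral of $G$ along that single trajectory.

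Next I would change variables from time $\tau$ to position $\xi = \phi_\tau(\phi_t^{-1}(x))$ along the characteristic. The first equation of \eqref{cODE} gives $d\tau = -\tfrac{d\xi}{\xi(1-\xi)\pi(\xi)}$, and the endpoints $\tau = 0$ and $\tau = t$ correspond to $\xi = \phi_t^{-1}(x)$ and $\xi = x$, respectively, producing
\[
w_t(x) = \exp\paren{\lambda \int_x^{\phi_t^{-1}(x)} \frac{G(\xi)}{\xi(1-\xi)\pi(\xi)}\, d\xi}.
\]
Applying the same substitution to the $x$-equation gives the identity $t = \int_x^{\phi_t^{-1}(x)} \tfrac{d\xi}{\xi(1-\xi)\pi(\xi)}$; multiplying this by $\lambda G(1)$ and subtracting replaces the numerator $G(\xi)$ by $G(\xi) - G(1)$. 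Recognizing that $\tfrac{G(\xi) - G(1)}{1-\xi} = -R(\xi)$ then collapses the integrand to $-\tfrac{R(\xi)}{\xi \pi(\xi)}$, which is exactly the claimed expression \eqref{wtest}.

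For the limit \eqref{wtestlim} I would use that $\phi_t^{-1}(x) \to 1$ as $t \to \infty$, established in Lemma \ref{lem:epi1tformula}, and let the upper limit of integration tend to $1$. The step requiring the most care—and the main obstacle—is verifying that the integrand $R(s)/(s\pi(s))$ remains integrable up to $s = 1$, where the factor $\tfrac{1}{1-s}$ hidden in $R$ is singular. This is where the $C^1$ hypothesis on $G$ is essential: the difference quotient $R(s) = \tfrac{G(1) - G(s)}{1-s}$ extends continuously to $s = 1$ with value $-G'(1)$, and since $\pi(s) > 0$ with $s \geq x > 0$, the integrand is bounded on $[x,1]$. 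The improper integral therefore converges, and passing to the limit is justified by continuity of the integral in its upper endpoint.
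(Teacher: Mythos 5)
Your proposal is correct and takes essentially the same route as the paper: the paper packages your time-to-space change of variables as the ODE $\frac{dr}{dp}=\frac{\lambda(G(p)-G(1))}{p(1-p)\pi(p)}r$ for $r(t)=q(t)e^{-\lambda G(1)t}$ rather than writing the explicit time integral and subtracting $\lambda G(1)t$ via the identity $t=\int_x^{\phi_t^{-1}(x)}\frac{d\xi}{\xi(1-\xi)\pi(\xi)}$, but the computation and the limiting argument (using $\phi_t^{-1}(x)\to 1$ and the boundedness of $R(s)/(s\pi(s))$ furnished by the $C^1$ hypothesis on $G$) are identical. One minor slip that does not affect the argument: the difference quotient $R(s)=\frac{G(1)-G(s)}{1-s}$ extends continuously to $s=1$ with value $G'(1)$, not $-G'(1)$.
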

\begin{proof}
Let $q(t)=w_t(x)$. It is readily seen from \eqref{cODE} and \eqref{phipsi} that $q(t)$ satisfies the differential equation:
\begin{equation}\label{pqeqn}
\begin{split}
\D{p}{t}&=p(1-p)\pi(p), \; \D{q}{t}=\lambda G(p) q,\\
p(0)&=p_0=x, \; q(0)=1.
\end{split}
\end{equation}
Note here that $p(t)=\phi_t^{-1}(x)$. Let $r(t)=q(t)\exp(-\lambda G(1)t)$. The function $r(t)$ satisfies the equation:
\begin{equation}
\D{r}{t}=\lambda(G(p)-G(1))r.
\end{equation}
Thus, using \eqref{pqeqn}, we have:
\begin{equation}
\D{r}{p}=\frac{\lambda (G(p)-G(1))}{p(1-p)\pi(p)}r.
\end{equation}
Using $r(0)=q(0)=1$, we have:
\begin{equation}
r(t)=\exp\paren{-\lambda \int_{p_0}^{p(t)}\frac{R(s)}{s\pi(s)}ds}.
\end{equation}
We obtain Equation \eqref{zestlim} by noting that $\phi_t^{-1}(x)\to 1$ as $t\to \infty$ 
and that $G(s)$ is $C^1$ and thus the integral in Equation \eqref{zestlim} is bounded as the upper bound of the integral tends to $1$.
\end{proof}

Next, we obtain an expression for $\partial\phi_t^{-1}(x)/\partial x$.

\begin{lemma} \label{lem:phitinvformula}
Suppose that $G(x), \pi(x)$ satisfy the assumptions of Theorem \ref{wlimtheorem} and consider $\phi_t^{-1}(x)$ as defined in Equation \eqref{phipsi}. We have that
\begin{equation}\label{Jacobian}
\PD{}{x}\phi_t^{-1}(x)=\frac{(\phi_t^{-1}(x)(1-\phi_t^{-1}(x))\pi(\phi_t^{-1}(x))}{x(1-x)\pi(x)}
\end{equation}
\end{lemma}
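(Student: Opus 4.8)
The plan is to obtain the formula by implicit differentiation of the separated-variables relation for the backward characteristic, exactly in the spirit of the computations used in Lemmas \ref{lem:epi1tformula} and \ref{lem:wtformula}. Recall from the proof of Lemma \ref{lem:wtformula} that, for fixed $x$, the curve $p(t) = \phi_t^{-1}(x)$ is the solution of the reversed characteristic flow
\[
\D{p}{t} = p(1-p)\pi(p), \quad p(0) = x,
\]
and that $\phi_t^{-1}(x) \to 1$ as $t \to \infty$ while $\phi_t^{-1}(x) \in (x,1)$ for all $t > 0$ in the PD case $\pi > 0$. Separating variables in this ODE and integrating from $0$ to $t$ produces the implicit relation
\[
\int_x^{\phi_t^{-1}(x)} \frac{ds}{s(1-s)\pi(s)} = t.
\]

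First I would differentiate this identity with respect to $x$ at fixed $t$. Since the right-hand side is constant in $x$, the left-hand side has zero $x$-derivative, and the fundamental theorem of calculus (Leibniz rule) gives
\[
\frac{1}{\phi_t^{-1}(x)\paren{1-\phi_t^{-1}(x)}\pi\paren{\phi_t^{-1}(x)}} \PD{}{x}\phi_t^{-1}(x) - \frac{1}{x(1-x)\pi(x)} = 0.
\]
Solving this linear relation for $\PD{}{x}\phi_t^{-1}(x)$ immediately yields the claimed Jacobian formula \eqref{Jacobian}. The only algebra involved is clearing the two reciprocals, so no lengthy computation is needed.

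An equivalent route, should one prefer to avoid the singular integrand, is to note that $J(t) := \PD{}{x}\phi_t^{-1}(x)$ and $v(t) := \D{}{t}\phi_t^{-1}(x) = \phi_t^{-1}(x)(1-\phi_t^{-1}(x))\pi(\phi_t^{-1}(x))$ both solve the same scalar linear variational equation $\dot{w} = h'(p(t))\,w$ obtained by differentiating $\dot{p} = h(p)$, with $h(p) := p(1-p)\pi(p)$, in $x$ and in $t$ respectively. Proportionality of solutions of this scalar linear ODE, together with the initial data $J(0) = 1$ and $v(0) = h(x)$, gives $J(t) = v(t)/h(x)$, which is again \eqref{Jacobian}.

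The main point requiring care is the justification that $\phi_t^{-1}(x)$ is continuously differentiable in $x$, so that the differentiation above is legitimate; this follows from the $C^1$ regularity of the vector field $p(1-p)\pi(p)$ via smooth dependence of ODE solutions on initial conditions, and the integrand $1/(s(1-s)\pi(s))$ is continuous on the compact subinterval $[x,\phi_t^{-1}(x)] \subset (0,1]$ for interior $x$, so the Leibniz differentiation is valid. The endpoint $x \to 1$ is handled as in the preceding lemmas: since $\phi_t^{-1}(1) = 1$, both sides of \eqref{Jacobian} reduce to the limit $e^{-\pi(1)t}$, consistent with the boundedness of $Q(s)$ and $R(s)$ established there.
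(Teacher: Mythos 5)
Your proposal is correct, and your second route is essentially the paper's own proof: the paper differentiates the flow in $x$ to obtain the variational equation $\D{J}{t}=\paren{(1-2p)\pi(p)+p(1-p)\pi'(p)}J = h'(p)J$ with $h(p)=p(1-p)\pi(p)$, then divides by $\D{p}{t}=h(p)$ and integrates $\D{J}{p}=\paren{\tfrac{1}{p}-\tfrac{1}{1-p}+\tfrac{\pi'(p)}{\pi(p)}}J$ to get $J(t)=h(p(t))/h(x)$ --- which is exactly your proportionality argument for solutions of the scalar linear variational ODE with data $J(0)=1$, $v(0)=h(x)$. Your primary route, implicit differentiation of the time-of-flight identity $\int_x^{\phi_t^{-1}(x)} ds/(s(1-s)\pi(s)) = t$, is a genuinely different and somewhat slicker derivation: it bypasses the variational equation entirely and requires only that $h>0$ on $(0,1)$ (true in the PD case, so the integrand is continuous on the compact interval $[x,\phi_t^{-1}(x)]$ for interior $x$) together with differentiability of $\phi_t^{-1}$ in $x$, which you correctly justify via smooth dependence on initial data (and which the implicit function theorem applied to the identity itself would also supply, since the integrand is nonvanishing). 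Both arguments yield the formula for $x\in(0,1)$; at the endpoints the expression is a degenerate $0/0$ fixed-point case, and your limiting value $e^{-\pi(1)t}$ as $x\to 1$ matches the linearization $\partial_x\phi_t^{-1}(1)=e^{h'(1)t}$ with $h'(1)=-\pi(1)$, consistent with Lemma \ref{lem:epi1tformula}.
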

\begin{proof}
Let $J(t)=\partial \phi_t^{-1}(x)/\partial x$. Then, using \eqref{cODE}, we see that $J$ must satisfy the following
differential equation:
\begin{equation}
\begin{split}
\D{p}{t}&=p(1-p)\pi(p), \; \D{J}{t}=\paren{(1-p)\pi(p)-p\pi(p)+p(1-p)\pi'(p)}J,\\
p(0)&=p_0=x, \; J(0)=1,
\end{split}
\end{equation}
where $\pi'$ is the derivative of $\pi$, which exists given our assumption that $\pi$ is $C^1$.
From the above, we see that:
\begin{equation}
\D{J}{p}=\paren{\frac{1}{p}-\frac{1}{1-p}+\frac{\pi'(p)}{\pi(p)}}J.
\end{equation}
Integrating the above differential equation, we obtain the desired formula:
\begin{equation}
J(t)=\frac{p(t)(1-p(t))\pi(p(t))}{p_0(1-p_0)\pi(p_0)}. \qedhere
\end{equation}
\end{proof}

Having characterized solutions $\mu_t(dx)$ to the associated linear problem of Equation \eqref{eq:linearmeasuremultiselect}, in Section \ref{sec:exponentialnormalization} we show another way to interpret measure-valued solutions $\ol{\mu}_t(dx)$ to Equation \eqref{multiselect} in terms of the linear dynamics.

\subsubsection{Expressing Solutions to Multilevel Dynamics via Exponential Normalization Relation}
\label{sec:exponentialnormalization}

In this section, we will show solutions $\ol{\mu}_t(dx)$ to \eqref{multiselect} can be expressed in terms of the mass of the solutions $\mu_t([0,1])$ to Equation \eqref{eq:linearmeasuremultiselect} and on the average group-level reproduction rate across the groups in the population. This average collective outcome across a measure $\mu(dx)$ is defined as
\begin{equation} \label{eq:Gmutavg}
    \langle G(\cdot) \rangle_{\mu} := \frac{ \int_0^1 G(x) \mu(dx)}{\int_0^1 \mu(dx)}.
\end{equation}
Using the normalization relation of Equation \eqref{eq:olmutnormalized} and the fact that $\ol{\mu}_t(dx)$, we can compute that
\begin{equation} \label{eq:Gavgequiv}
    \langle G(\cdot) \rangle_{\mu_t} = \frac{ \int_0^1 G(x) \mu(dx)}{\int_0^1 \mu(dx)} = \int_0^1 G(x) \left( \frac{\mu_t(dx)}{\int_0^1 \mu_t(dy)} \right) = \int_0^1 G(x) \ol{\mu}_t(dx) = \langle G(\cdot)\rangle_{\ol{\mu}_t},
\end{equation}
so the unnormalized and normalized solutions $\mu_t(dx)$ and $\ol{\mu}_t(dx)$ feature the same average group-level reproduction rates. 

By considering the test-function $v(x) = 1$ and using the assumption that $\mu_0(dx)$ is a probability measure, we can calculate the mass of $\mu_t(dx)$ solving Equation \eqref{eq:linearmeasuremultiselect} satisfies the following ordinary differential equation 
\begin{equation}
   \dsddt{} \mu_t\left([0,1]\right) =  \dsddt{} \int_0^1 \mu_t(dx) = \lambda \int_0^1 G(x) \mu_t(dx) 
\end{equation}
Dividing both sides by $\mu_t([0,1])$, we can see from Equation \eqref{eq:Gmutavg} that
\begin{equation}
    \frac{1}{\mu_t([0,1])} \dsddt{} \mu_t([0,1]) = \frac{1}{\int_0^1 \mu_t(dx)} \dsddt{} \int_0^1 \mu_t(dx) = \frac{\int_0^1 G(x) \mu_t(dx)}{\int_0^1 \mu_t(dx)} = \langle G(\cdot) \rangle_{\mu_t}.
\end{equation}
Integrating this differential equation, we see that the mass of the solution $\mu_t(dx)$ to the linear multilevel dynamics is given by
\begin{equation} \label{eq:solutionmass} 
    \mu_t\left([0,1]\right) = \int_0^1 \mu_t(dx) = \exp\left( \lambda \int_0^t \langle G(\cdot) \rangle_{\mu_s} ds \right).
\end{equation}
Then, applying this to the normalization relation from Equation \eqref{eq:olmutnormalized}, we can express solutions $\ol{\mu}_t$ to the full multilevel dynamics in terms of $\mu_t(dx)$ by
\begin{equation} \label{eq:olmutexpweight}
   \int_0^1 v(x) \ol{\mu}_t(dx) = \frac{\int_0^1 v(x) \mu_t(dx)}{\mu_t\left([0,1]\right)} = \exp\left(-\lambda \int_0^t \langle G(\cdot) \rangle_{\mu_s} ds \right) \int_0^1 v(x)  \mu_t(dx).
\end{equation}
This representation of the solution $\ol{\mu}_t(dx)$ for the full nonlinear multilevel dynamics of Equation \eqref{multiselect} in terms of the solutions $\mu_t(dx)$ of the linear dynamics is particularly useful for understanding the long-time dynamics of the collective outcome $\int_0^t \langle G(\cdot) \rangle_{\mu_s} ds$. In particular, we will use this and the fact that solutions $\ol{\mu}_t(dx)$ are normalized to obtain the long-time bounds on time-averaged collective fitness presented in Theorem \ref{thm:groupbounds}. Noting that $\ol{\mu}_t([0,1]) = 1$ and that $\langle G(\cdot) \rangle_{\mu_s} = \langle G(\cdot) \rangle_{\ol{\mu}_s}$ from Equation \eqref{eq:Gavgequiv}, we can apply the test-function $v(x) = 1$ in Equation \eqref{eq:olmutexpweight} to see that
\begin{equation} \label{eq:expnormsecond}
    1 = \ol{\mu}_t([0,1]) = \exp\left(-\lambda \int_0^t \langle G(\cdot) \rangle_{\ol{\mu}_s} ds \right) \int_0^1 \mu_t(dx).
\end{equation}
Rearranging this equation allows us to deduce Equation \eqref{eq:expnormintro}, which highlights the connection between principal growth rates for $\mu_t(dx)$ and the average group-level reproduction rate $\langle G(\cdot) \rangle_{\ol{\mu}_t}.$

\subsection{Upper and Lower Bounds for Principal Growth Rates of Solutions}
\label{sec:growthrates}

Now we can use our push-forward representation to estimate the growth rate for the tails of solutions $\mu_t(dx)$ to the linear multilevel dynamics of Equation \eqref{eq:linearmeasuremultiselect}. In Lemma \ref{lem:wtbounds}, we use the formula from $w_t(x)$ from Lemma \ref{lem:wtformula} to find upper and lower bounds for the growth rates of solutions along characteristics described by $\psi_t(x) = w_t(\phi_t(x))$. Next, in Lemma \ref{lem:probholderbounds}, we apply Lemma \ref{lem:wtbounds} to obtain bounds on the growth rate of the mass $\mu_t\left([a,1]\right)$ for any $a > 0$. These bounds are expressed in terms of the group-reproduction rate of the full-cooperator group $G(1)$, the individual-level advantage of defectors in an otherwise full-cooperator group $\pi(1)$, and the infimum and supremum \holder exponents $\overline{\theta}$ and $\underline{\theta}$ of the initial measure $\mu_0(dx)$ near $x=1$. The form of these rates highlights the conflict between the individual-level incentive to defect and the collective incentive to achieve full-cooperation.

\begin{lemma} \label{lem:wtbounds}
Suppose that $G(x) \in C^1([0,1])$ and that $G(0) < G(1)$. Then, for $x \in (0,1)$, there exist positive constants $\underline{M}, \overline{M} < \infty$ such that
\begin{equation} \label{eq:wtbounds}
    \underline{M} e^{\lambda G(1) t} \phi_t(x)^{\lambda \pi(0)^{-1} \left[G(1) - G(0) \right]} \leq \psi_t(x) \leq \overline{M} e^{\lambda G(1) t}.
\end{equation}
\end{lemma}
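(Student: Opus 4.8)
The plan is to start from the identity $\psi_t(x) = w_t(\phi_t(x))$ recorded in Equation \eqref{eq:pushforward} and feed it into the closed form for $w_t$ supplied by Lemma \ref{lem:wtformula}. Taking the argument of $w_t$ to be $\phi_t(x)$ collapses the upper limit $\phi_t^{-1}(\cdot)$ appearing in Equation \eqref{wtest}, since $\phi_t^{-1}(\phi_t(x)) = x$, so that
\begin{equation}
\psi_t(x) = e^{\lambda G(1) t} \exp\left(-\lambda \int_{\phi_t(x)}^{x} \frac{R(s)}{s\pi(s)}\, ds\right), \qquad R(s) = \frac{G(1)-G(s)}{1-s}.
\end{equation}
Because $\pi(x) > 0$ on $[0,1]$ in the PD case, the characteristic $\phi_t(x)$ is decreasing in $t$ with $\phi_t(x) \to 0$, so the entire problem reduces to controlling this integral as its lower endpoint tends to $0$, the fixed upper endpoint $x < 1$ staying away from the singularity of $R$ at $s=1$.

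Next I would isolate the logarithmic singularity of the integrand at $s=0$. Since $R(0) = G(1)-G(0)$ and $\pi(0) > 0$, I would split
\begin{equation}
\frac{R(s)}{s\pi(s)} = \frac{G(1)-G(0)}{\pi(0)}\,\frac{1}{s} + h(s), \qquad h(s) := \frac{1}{s}\left(\frac{R(s)}{\pi(s)} - \frac{R(0)}{\pi(0)}\right).
\end{equation}
The crucial point is that $h$ is bounded on $(0,x]$: the map $s \mapsto R(s)/\pi(s)$ is $C^1$ near $0$ (using $G,\pi \in C^1$ together with $\pi(0)>0$), so the difference quotient defining $h$ stays bounded as $s \to 0$, while $h$ is continuous on the remaining compact piece bounded away from $0$. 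Integrating the explicit term gives $\frac{G(1)-G(0)}{\pi(0)}\ln\!\big(x/\phi_t(x)\big)$, which exponentiates to a power of $\phi_t(x)$, and $\int_{\phi_t(x)}^{x} h(s)\,ds$ is bounded uniformly in $t$. Hence there are positive constants $c_1 \le c_2$ (depending on the fixed $x$) with
\begin{equation}
c_1 \left(\frac{\phi_t(x)}{x}\right)^{\lambda \pi(0)^{-1}[G(1)-G(0)]} \le e^{-\lambda G(1) t}\psi_t(x) \le c_2 \left(\frac{\phi_t(x)}{x}\right)^{\lambda \pi(0)^{-1}[G(1)-G(0)]}.
\end{equation}

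Finally I would convert these into the stated bounds. The lower bound follows immediately by taking $\underline{M} := c_1\, x^{-\lambda \pi(0)^{-1}[G(1)-G(0)]}$. For the upper bound I would use that $\phi_t(x)/x \in (0,1]$ together with the positivity of the exponent $\lambda \pi(0)^{-1}[G(1)-G(0)] > 0$ (guaranteed by the hypothesis $G(1) > G(0)$) to drop the power factor, since $(\phi_t(x)/x)^{\lambda \pi(0)^{-1}[G(1)-G(0)]} \le 1$, giving the bound with $\overline{M} := c_2$. The only step demanding genuine care is the boundedness of $h$ near $s=0$, which is exactly where the $C^1$ hypotheses on $G$ and $\pi$ and the non-degeneracy $\pi(0)>0$ are used; the rest is bookkeeping with the explicit representation from Lemma \ref{lem:wtformula}.
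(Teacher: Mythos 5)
Your proof is correct and takes essentially the same route as the paper: both start from $\psi_t(x)=w_t(\phi_t(x))$ together with the representation of Lemma \ref{lem:wtformula}, and both isolate the $1/s$ singularity of the integrand $R(s)/(s\pi(s))$ at $s=0$ with coefficient $\left[G(1)-G(0)\right]/\pi(0)$, which exponentiates to the power factor $\phi_t(x)^{\lambda \pi(0)^{-1}\left[G(1)-G(0)\right]}$ while the bounded remainder supplies the constants. The only cosmetic difference is that you extract both bounds from a single two-sided decomposition, recovering the upper bound by dropping $\left(\phi_t(x)/x\right)^{\lambda \pi(0)^{-1}\left[G(1)-G(0)\right]} \leq 1$, whereas the paper proves the upper bound separately by discarding negative contributions via $\left[G(q)-G(1)\right]_+$ (which is where $G(0)<G(1)$ kills the singularity at $q=0$); both arguments are sound, and your constants are easily made uniform in $x$, as the paper's later applications require.
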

\begin{proof}
 For our upper bound, we can use Equation \eqref{wtest} to estimate that
 \begin{equation} \label{eq:wtupperbound} \psi_t(x) = w_t(\phi_t(x)) = e^{\lambda G(1) t} \exp\left(\lambda \int_{\phi_t(x)}^{x} \left[\frac{G(q) - G(1)}{q(1-q) \pi(q)} \right] dq \right) \leq e^{\lambda G(1) t} \underbrace{\exp\left(\lambda \int_{0}^{1} \frac{\left[G(q) - G(1)\right]_+}{q(1-q) \pi(q)} dq \right)}_{:= \overline{M}}, \end{equation}
 where we have ignored non-positive contributions to the integral using the notation
 \begin{equation} \label{eq:Gdiffpiecewise}
 \left[ G(q) - G_a(1) \right]_{+} = \left\{
     \begin{array}{cr}
       G(q) - G(1) & : G(q) > G(1)\\
       0 & : G(q) \leq G(1)
     \end{array}
   \right.   .
\end{equation}
Noting that the integrand in the last term of Equation \eqref{eq:wtupperbound} is bounded as $x \to 1$ because $G(x)$ is $C^1$ and bounded as $x \to 0$ because $G(0) < G(1)$, we can deduce that $\overline{M} < \infty$.

To find a corresponding lower bound on $\psi_t(x)$, we can first our the expression $R(q)$ from Equation \eqref{wtest} as
\begin{equation} \label{eq:Rqexpand} 
R(q) =  \frac{G(0) - G(1)}{\pi(0) q} + \frac{1}{(1-q) \pi_j(q)} \left( \frac{G(q) - G(0) + \left[G(0) - G(1)\right] \left[1 - (1-q) \frac{\pi(q)}{\pi(0)} \right]}{q} \right),
\end{equation}
where we can check that the second term on the righthand side is bounded above on $[0,1]$ because $G(x) \in C^1\left([0,1]\right)$. This means that there exists $N > -\infty$ such that 
\begin{equation}
\int_{\phi_t(x)}^x R(q) dq \geq N +   \left[\frac{G(1) - G(0)}{\pi(0)} \right] \log\left(\phi_t(x) \right), 
\end{equation}
and then, letting $\underline{M} := e^{\lambda N}$, we can use Equation \eqref{wtest} to estimate that
\begin{equation} \label{eq:wtlowerbound}
    \psi_t(x) = e^{\lambda G(1) t} \exp\left(\lambda \int_{\phi_t(x)}^x R(q) dq  \right) \geq \underline{M} e^{\lambda G(1) t} \phi_t(x)^{\lambda \pi(0)^{-1} \left[G(1) - G(0) \right]}.  \qedhere
\end{equation}
\end{proof}

\begin{lemma} \label{lem:probholderbounds}
Suppose that $G(x) \in C^1([0,1])$ and $G(0) < G(1)$. Consider $a \in (0,1)$ and $\mu_0(dx)$ with supremum and infimum \holder exponents near $x=1$ satisfying $0 < \underline{\theta} \leq \overline{\theta}$. For any $\Theta < \underline{\theta}$, there exists $\supholderupperbound{\Theta} < \infty$ such that 
\begin{subequations} \label{eq:probhoulderbounds}
\begin{equation} \label{eq:probaholderupperboundalways}
    \mu_t\left(\left[a,1\right]\right) \leq \supholderupperbound{\Theta}  e^{\left[\lambda G(1) - \Theta  \pi(1) \right] t}, 
\end{equation}
while for any $\Theta > \underline{\theta}$, there exists $\supholderlowerbound{\Theta} > 0$ and a sequence of times $\{t_n\}_{n \in \NN}$ such that 
\begin{equation} \label{eq:probaholderupperboundsubsequence}
    \mu_{t_n}\left(\left[a,1\right]\right) \geq \supholderlowerbound{\Theta}  e^{\left[\lambda G(1) - \Theta \pi(1) \right] t_n}. 
\end{equation}
Similarly, for $\Theta > \overline{\theta}$, there exists $\infholderlowerbound{\Theta} > 0$ such that  
\begin{equation} \label{eq:probaholderlowerboundalways}
    \mu_t\left(\left[a,1\right]\right) \geq \infholderlowerbound{\Theta} e^{\left[\lambda G(1) - \Theta \pi(1) \right] t}, 
\end{equation}
and for $\Theta< \overline{\theta}$, there exists $\infholderupperbound{\Theta} < \infty$ and a sequence of times $\{s_n\}_{n \in \NN}$ such that
\begin{equation} \label{eq:probaholderlowerboundsubsequence}
    \mu_t\left(\left[a,1\right]\right) \leq \infholderupperbound{\Theta} e^{\left[\lambda G(1) - \Theta \pi(1) \right] s_n}.
\end{equation}
\end{subequations}
Furthermore, when the \holder constants $C_{\underline{\theta}}$ or $C_{\overline{\theta}}$ of $\mu_0(dx)$ near $x=1$ are positive and finite, we can obtain versions of each of the bounds from Equation \eqref{eq:probhoulderbounds} with $\Theta = \underline{\theta}$ or $\Theta = \overline{\Theta}$, respectively. 
\end{lemma}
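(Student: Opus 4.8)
The plan is to combine the push-forward representation of Equation \eqref{eq:pushforward} with the bounds on the solutions along characteristics $\psi_t$ from Lemma \ref{lem:wtbounds} and the asymptotics of the backward characteristic from Lemma \ref{lem:epi1tformula}. Because $\pi(\cdot) > 0$, the forward flow $y \mapsto \phi_t(y)$ is strictly increasing in its spatial argument and decreasing in $t$, so $\phi_t(y) \geq a$ holds exactly when $y \geq \phi_t^{-1}(a)$. Writing $\epsilon_t := 1 - \phi_t^{-1}(a)$ and applying the representation of Equation \eqref{eq:pushforward} to (an approximation of) the indicator of $[a,1]$ gives
\[
\mu_t([a,1]) = \int_{[1-\epsilon_t,\,1]} \psi_t(y)\,\mu_0(dy),
\]
while Lemma \ref{lem:epi1tformula} with $x = a$ shows that $e^{\pi(1)t}\epsilon_t$ converges to a positive finite constant, so that $\epsilon_t \asymp e^{-\pi(1)t}$ (two-sided control by constant multiples). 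Moreover $t \mapsto \epsilon_t$ is a continuous, strictly decreasing bijection onto $(0,\epsilon_0]$, a fact I will use to convert spatial sequences into time sequences.

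First I would establish the upper bounds. Inserting $\psi_t(y) \leq \overline{M} e^{\lambda G(1)t}$ from Lemma \ref{lem:wtbounds} gives $\mu_t([a,1]) \leq \overline{M} e^{\lambda G(1)t}\,\mu_0([1-\epsilon_t,1])$. For $\Theta < \underline{\theta}$, the defining property of the supremum \holder exponent (which is downward-closed, since $\limsup_{x\to0}\mu_0([1-x,1])x^{-\Theta_0}=0$ forces the same for all smaller $\Theta$) yields $\mu_0([1-x,1]) \leq C x^\Theta$ for all small $x$; substituting $x=\epsilon_t$ and using $\epsilon_t^\Theta \asymp e^{-\Theta\pi(1)t}$ produces Equation \eqref{eq:probaholderupperboundalways}. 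For $\Theta < \overline{\theta}$ the liminf defining $\overline{\theta}$ vanishes, so there is a spatial sequence $x_n \downarrow 0$ with $\mu_0([1-x_n,1])x_n^{-\Theta} \to 0$; choosing $s_n$ with $\epsilon_{s_n}=x_n$ and restricting the same estimate to these times gives the subsequential upper bound \eqref{eq:probaholderlowerboundsubsequence}.

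Next I would handle the lower bounds using $\psi_t(y) \geq \underline{M} e^{\lambda G(1)t}\phi_t(y)^{\alpha}$ with $\alpha := \lambda\pi(0)^{-1}[G(1)-G(0)] > 0$. The key simplification is that for $y \in [1-\epsilon_t,1] = [\phi_t^{-1}(a),1]$, monotonicity gives $\phi_t(y) \geq \phi_t(\phi_t^{-1}(a)) = a$, hence $\phi_t(y)^{\alpha} \geq a^{\alpha}$ and
\[
\mu_t([a,1]) \geq \underline{M} a^{\alpha} e^{\lambda G(1)t}\,\mu_0([1-\epsilon_t,1]).
\]
For $\Theta > \overline{\theta}$ the liminf defining $\overline{\theta}$ is positive, giving $\mu_0([1-x,1]) \geq c x^\Theta$ for all small $x$ and hence the all-time lower bound \eqref{eq:probaholderlowerboundalways}. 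For $\Theta > \underline{\theta}$ the limsup defining $\underline{\theta}$ is positive, so there is $x_n \downarrow 0$ with $\mu_0([1-x_n,1])x_n^{-\Theta} \geq c > 0$; taking $t_n$ with $\epsilon_{t_n}=x_n$ yields the subsequential lower bound \eqref{eq:probaholderupperboundsubsequence}.

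The boundary cases $\Theta = \underline{\theta}$ or $\Theta = \overline{\theta}$ under the hypothesis that the corresponding \holder constant $C_{\underline{\theta}}$ or $C_{\overline{\theta}}$ is positive and finite follow identically, replacing the one-sided inequalities on $\mu_0([1-x,1])$ by the two-sided estimate that positivity and finiteness of the constant provide. The main obstacle, and the step demanding the most care, is the bookkeeping between the spatial sequences $x_n$ extracted from the limsup/liminf definitions and the time sequences $t_n, s_n$ required in the statement: this relies on the monotone bijection $t \mapsto \epsilon_t$ together with the two-sided control $\epsilon_t \asymp e^{-\pi(1)t}$ from Lemma \ref{lem:epi1tformula}, which is exactly what converts powers $x_n^\Theta$ into the exponential rate $e^{-\Theta\pi(1)t_n}$ appearing in each claimed bound. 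A secondary point to check is that the extra factor $\phi_t(y)^\alpha$ in the lower bound on $\psi_t$ does not degrade the rate, which is handled by the uniform estimate $\phi_t(y)\geq a$ on the relevant integration set.
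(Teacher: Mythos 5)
Your proposal is correct and follows essentially the same route as the paper's proof: the same push-forward decomposition $\mu_t([a,1]) = \int_{\phi_t^{-1}(a)}^1 \psi_t(y)\,\mu_0(dy)$, the same bounds on $\psi_t$ from Lemma \ref{lem:wtbounds} (including the uniform estimate $\phi_t(y) \geq a$ to absorb the factor $\phi_t(y)^{\alpha}$), and the same use of Lemma \ref{lem:epi1tformula} to turn the H{\"o}lder-type control of $\mu_0$ near $x=1$ into the exponential rates $e^{[\lambda G(1)-\Theta\pi(1)]t}$. Your explicit monotone bijection $t \mapsto \epsilon_t$ for converting the spatial sequences from the limsup/liminf definitions into the time sequences $t_n, s_n$ is merely a more careful rendering of a step the paper leaves implicit.
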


\begin{proof}
Using the upper bound on $\psi_t(x)$ from Lemma \ref{lem:wtbounds}, we know that there exists an $\overline{M} < \infty$ such that
\begin{equation} \label{eq:mutafirstupperbound}
    \int_{a}^1 \mu_t(dx) = \int_{\phi_t^{-1}(a)}^1 \psi_t(x) \mu_0(dx) \leq \overline{M} e^{\lambda G(1) t} \int_{\phi_t^{-1}(a)}^1  \mu_0(dx) = \overline{M} e^{\lambda G(1) t} \mu_0\left(\left[\phi_t^{-1}(a) ,1 \right] \right).
\end{equation}
Using the lower bound from Lemma \ref{lem:wtbounds}, we know that there is an $\underline{M} > 0$ such that 
\begin{equation}
        \int_{a}^1 \mu_t(dx) = \int_{\phi_t^{-1}(a)}^1 \psi_t(x) \mu_0(dx) \geq \overline{M} e^{\lambda G(1) t} \int_{\phi_t^{-1}(a)}^1  \phi_t(x)^{\lambda \pi(0)^{-1} \left[G(1) - G(0) \right]}\mu_0(dx).
\end{equation}
Using our assumptions that $G(0) < G(1)$ and $\pi(0) > 0$, as well as the fact that $\phi_t(x) \in [a,1]$ for $x \in [\phi_t^{-1}(a),1]$, we can further estimate that 
\begin{equation}  \label{eq:mutafirstlowerbound}
    \int_{a}^1 \mu_t(dx) \geq \underline{M} a^{\lambda \pi(0)^{-1} \left[G(1) - G(0)\right]} \mu_0\left( \left[\phi_t^{-1}(a),1\right] \right).
\end{equation}
 Our next step is to obtain bounds on $\mu_0\left( \left[\phi_t^{-1}(a),1 \right]\right)$. We can use the formula from Lemma \ref{lem:epi1tformula} for $1 - \phi_t^{-1}(x)$ to see that
\begin{dmath} \label{eq:mu0identity}
\mu_0\left(\left[ \phi_t^{-1}(a), 1 \right] \right) = \mu_0\left(\left[ 1 - \left(1 - \phi_t^{-1}(a)\right), 1 \right] \right) = \mu_0\left(\left[ 1 - e^{- \pi(1) t} \left( 1 - a \right) \exp\left(\int_{a}^{\phi_t^{-1}(a)} \frac{Q(q)}{q \pi(q)} dq \right), 1 \right] \right)
\end{dmath}
From our assumption that $\mu_0(dx)$ has supremum \holder exponent $\underline{\theta} > 0$, we know that $\limsup_{x \to 0} x^{-\Theta} \mu_0([1-x,1]) = 0$ for $\Theta < \underline{\theta}$. For such $\Theta$, there is therefore a constant $C$ such that
\begin{dmath}
    \mu_0\left(\left[\phi_t^{-1}(a),1\right] \right) \leq C e^{-\Theta \pi(1) t} \left( 1 - a\right)^{\Theta} \exp\left(\Theta \int_{a}^{\phi_t^{-1}(a)} \frac{Q(q)}{q \pi(q)} dq\right). 
\end{dmath}
Combining this with our estimate from Equation \eqref{eq:mutafirstupperbound} and the expression for $Q(q)$ from Equation \eqref{zest}, we can see that
\begin{equation} \label{eq:mutaupperbound}
    \mu_t\left(\left[a,1\right]\right) \leq \underbrace{C \overline{M} \left(1-a\right)^{\Theta} \exp\left( \Theta \int_{a}^{1} \frac{\left[\pi(1) - q \pi(q)\right]_+}{q \pi(q)} dq\right)}_{:= \supholderupperbound{\Theta}} e^{\left[\lambda G(1) - \Theta \pi(1) \right]t}.
\end{equation}
Similarly, we note that $\limsup_{x \to 0} x^{-\Theta} \mu_0\left(\left[x,1\right]\right) > 0$ for $\Theta > \underline{\theta}$. Combining this with the expression from Equations \eqref{eq:mutafirstlowerbound} and \eqref{eq:mu0identity}, we see that there is $C > 0$ and a sequence of times $\{t_n\}$ tending to infinity such that
\begin{equation} \label{eq:mutasequentiallowerbound}
    \mu_{t_n}\left(\left[a,1\right]\right) \geq \underbrace{C \underline{M} \left(1-a\right)^{\Theta} \exp\left(\Theta \int_{a}^{1} \frac{\left[\pi(1) - q \pi(q)\right]_-}{q \pi(q)} dq\right)}_{:= \supholderlowerbound{\Theta}} e^{\left[\lambda G(1) - \Theta \pi(1) \right]t}.
\end{equation}
Further, we know that $\supholderlowerbound{\Theta} > 0$ and $\supholderupperbound{\Theta} < \infty$ because $\pi(x)$ is $C^1$. We can obtain the desired bounds depending on the infimum \holder exponent $\overline{\theta}$ of $\mu_0(dx)$ near $x=1$ with an analogous approach.
\end{proof}

We can also derive approximate lower bounds for the growth rate of $\mu_t(dx)$ in terms of the growth rate near the equilibrium of the within-group dynamics.

\begin{lemma} \label{lem:muGwbound}
Suppose that $G(x), \pi(x)$ satisfy the assumptions of Theorem \ref{wlimtheorem} and that $\mu_0(dx)$ has positive \holder exponent $\underline{\theta}$ near $x=1$ and consider the quantity $\check{G}_w := \min_{x \in [0,a]} G(x)$. For $w$ sufficiently close to $0$, there exists a constant $A_w > 0$ such that 
\begin{equation} \label{eq:muGwbound}
 \mu_t\left(\left[0,1 \right]\right) \geq A_w e^{\lambda \check{G}_w t}. 
\end{equation}
\end{lemma}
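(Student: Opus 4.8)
The plan is to exploit the push-forward representation \eqref{eq:pushforward} together with the explicit form of the weight $\psi_t$. Taking the test function $v \equiv 1$ in \eqref{eq:pushforward} gives $\mu_t([0,1]) = \int_0^1 \psi_t(y)\,\mu_0(dy)$, and solving the $q$-equation in \eqref{cODE} along the forward characteristic yields $\psi_t(y) = \exp\big(\lambda \int_0^t G(\phi_s(y))\,ds\big)$. Thus a lower bound on the mass reduces to bounding this exponential from below on a set of positive $\mu_0$-measure. Since $\pi > 0$ in the PD case, each forward characteristic $\phi_s(y)$ from \eqref{phipsi} is strictly decreasing in $s$ with $\phi_s(y) \to 0$ as $s \to \infty$ for every $y \in [0,1)$, and characteristics do not cross, so $\phi_s(y) \le \phi_s(c)$ whenever $y \le c$. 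This monotone inward flow toward $x=0$ is what lets the local minimum $\check{G}_w = \min_{x\in[0,w]} G(x)$ govern the growth rate.

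First I would fix a $c \in (0,1)$ carrying positive initial mass. Here the hypothesis that $\mu_0$ has positive supremum H\"older exponent $\underline{\theta} > 0$ enters: it forbids an atom at $x=1$, so $\mu_0([0,1)) = 1$ and hence $p := \mu_0([0,c]) > 0$ for $c$ chosen close enough to $1$. Next I would let $T_c$ denote the finite time at which the rightmost relevant characteristic enters $[0,w]$, i.e.\ $\phi_{T_c}(c) = w$ (taking $T_c = 0$ if $c \le w$); finiteness follows from $\phi_s(c)\to 0$. For every $y \in [0,c]$ and every $s \ge T_c$ we then have $\phi_s(y) \le \phi_s(c) \le w$, so that $G(\phi_s(y)) \ge \check{G}_w$, while for all $s$ the crude bound $G(\phi_s(y)) \ge G_{\min} := \min_{x\in[0,1]} G(x)$ holds.

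Splitting the time integral at $T_c$ for $t \ge T_c$ gives $\int_0^t G(\phi_s(y))\,ds \ge G_{\min}T_c + \check{G}_w(t - T_c)$, hence $\psi_t(y) \ge c_0\, e^{\lambda \check{G}_w t}$ with $c_0 := \exp(\lambda(G_{\min} - \check{G}_w)T_c) > 0$, uniformly over $y \in [0,c]$. Integrating this against $\mu_0$ over $[0,c]$ produces $\mu_t([0,1]) \ge c_0 p\, e^{\lambda \check{G}_w t}$ for $t \ge T_c$. To upgrade from large $t$ to all $t \ge 0$, I would note that $\mu_t([0,1])$ is continuous and strictly positive (indeed $\mu_t([0,1]) \ge e^{\lambda G_{\min} t}$ from the same characteristic bound), so the ratio $\mu_t([0,1]) / e^{\lambda \check{G}_w t}$ is bounded below by a positive constant on the compact interval $[0, T_c]$; taking $A_w$ to be the minimum of $c_0 p$ and this constant yields the claim for any fixed $w \in (0,1)$.

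The routine parts are the ODE solution for $\psi_t$ and the time-splitting estimate. The one genuinely delicate point is guaranteeing that a positive fraction of the mass actually reaches the region $[0,w]$ where $G$ is close to $G(0)$: if the initial data were concentrated at $x=1$ the argument would stall, which is precisely why the no-atom-at-$1$ consequence of $\underline{\theta} > 0$ is invoked. The reassuring structural fact is that the finite transit time $T_c$ spent before entering $[0,w]$ costs only the multiplicative constant $c_0$ and leaves the exponential rate $\lambda \check{G}_w$ untouched.
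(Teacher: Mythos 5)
Your proposal is correct and follows essentially the same route as the paper's proof: both use the push-forward representation $\mu_t([0,1])=\int_0^1 \exp\big(\lambda\int_0^t G(\phi_s(y))\,ds\big)\mu_0(dy)$, restrict to an interval $[0,c]$ of positive initial mass (guaranteed by $\underline{\theta}>0$ ruling out an atom at $x=1$), and split the time integral at the finite transit time into $[0,w]$ so that the rate $\lambda\check{G}_w$ emerges with the transit period absorbed into the constant. The only differences are cosmetic: the paper bounds $G$ before the transit time by $\check{G}_{w'}$ rather than your cruder $G_{\min}$, and your final compactness patch for $t\le T_c$ is unnecessary since the split estimate already holds for all $t\ge 0$ (because $G_{\min}-\check{G}_w\le 0$), but neither point affects validity.
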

\begin{proof}
 Due to our assumption that $\mu_0(dx)$ has supremum \holder exponent $\underline{\theta} > 0$, there is $w' < 1$ such that $\mu_0\left([0,w'] \right) > 0$. Because $\phi_t(x)$ decreases in time and satisfies $\phi_t(x) \to 0$ for $x \in (0,1)$ as $t \to \infty$, we have that, for any $w < w'$, there is a $T_w$ such that $\phi_{T_w}(w') = w$ and $\phi_t(w') \leq w$ for $t > T_w$. Now we can estimate the integral of the group-reproduction function along characteristic curves by
\begin{equation}
    \int_0^t G(\phi_s(x)) ds = \int_{T_w}^t G(\phi_s(x)) ds + \int_{0}^{T_w} G(\phi_s(x)) ds \geq %
    \check{G}_w t + \left( \check{G}_{w'} - \check{G}_w \right) T_w.
\end{equation}
We can then apply this estimate and the fact that $\mu_0(dx)$ is a probability measure to deduce that%
\begin{dmath}
{ \int_0^1 \mu_t(dx) =  \ds\int_0^1 \exp\left(\lambda \int_0^t G(\phi_s(x)) ds\right) \mu_0(dx)} \geq \ds\int_0^{w'} \exp\left(\lambda \int_0^t G(\phi_s(x)) ds\right) \mu_0(dx) \geq e^{\lambda \left( \check{G}_{w'} - \check{G}_w \right) T_w} e^{\lambda \check{G}_w t} \ds\int_0^{w'} \mu_0(dx) %
\geq  e^{\lambda \left( \check{G}_{w'} - \check{G}_w \right) T_w} e^{\lambda \check{G}_w t}. \mbox{\qedhere}
\end{dmath}
\end{proof}

\section{Convergence to Steady State Population for Initial Measures with Well-Defined H{\"o}lder Exponents} \label{sec:convergencesteady}

In this section, we consider the long-time behavior of solutions $\ol{\mu}_t(dx)$ to Equation \eqref{multiselect} for initial conditions with well-defined \holder exponents and \holder constants. In Section \ref{sec:weakconvergencesteady}, we prove Theorem \ref{wlimtheorem}, demonstration weak convergence of the population to steady state densities for sufficiently strong between-group competition and initial measures with well-defined nonzero, finite \holder exponents and constants. In Section \ref{sec:deltaone}, we state and prove Proposition \ref{prop:deltaone}, which tells us that a population with an initial partial delta-peak at full-cooperation will fix full-cooperation in the group-structure population in the long-time limit.

\subsection{Weak Convergence for Measure-Valued Initial Population} \label{sec:weakconvergencesteady}

Before presenting the proof of Theorem \ref{wlimtheorem}, we will first rewrite our expression for the steady state densities from Equation \eqref{pthetactilde} into a form that is most compatible with the expressions related to the method of characteristics derived in Lemmas \ref{lem:epi1tformula}, \ref{lem:wtformula}, and \ref{lem:phitinvformula}. We start by considering the following expression for steady state density solutions to Equation \eqref{multiselect}
\begin{equation}\label{ptheta}
f^{\lambda}_{\theta}(x)=(1-x)^{\theta-1}\frac{\pi(1)}{x\pi(x)}\exp\paren{\int_x^1\frac{M_{\theta}(s)ds}{s\pi(s)}}, \; M_{\theta}(s)=-\lambda R(s)+\theta Q(s).
\end{equation}
Combining the steady state expressions from Equation \eqref{pthetactilde} and \eqref{ptheta}, we can write $M_{\theta}(s)$ using the following decomposition
\begin{equation} \label{eq:Mdecomp}
    M_{\theta}(s) = \frac{\nu}{s} + \frac{C(s)}{\pi(s)}
\end{equation}
where $\nu$ and $C(s)$ are given by Equations \eqref{eq:nudefinition} and \eqref{eq:lambdaCofx}, respectively.

Now we will study the convergence of $\mu_t(dx)$ to these density steady states. We will start by the integrating the righthand side of Equation \eqref{eq:pushforward} by parts to see that
\begin{equation}
\begin{split}
\int_0^1 v(\phi_t(y))\psi_t(y)\mu_0(dx)=&v(1)\psi_t(1)F(1^-)-v(0)\psi_t(0)F(0^+)\\
&-\int_0^1\PD{}{y}\paren{v(\phi_t(y))\psi_t(y)}F(y) dy, \: \: F(y)=\mu_0([0,y]).
\end{split}
\end{equation}
In the above, we used the fact that $\phi_t(0)=0$ and $\phi_t(1)=1$.
For initial measures $\mu_0(dx)$ with positive supremum \holder exponent $\underline{\theta}$, $F(1^-)=1$ and therefore 
\begin{equation}\label{vpsimu0}
\int_0^1 v(\phi_t(y))\psi_t(y)\mu_0(dx)=v(0)\psi_t(0)(1-F(0^+))
+\int_0^1\PD{}{y}\paren{v(\phi_t(y))\psi_t(y)}(1-F(y))dy.
\end{equation}

In Lemma \ref{lem:integraltermbound}, we estimate the growth rate of both terms in Equation \eqref{vpsimu0}, and show that the time-dependent family of integrands in the second term is bounded by an integrable function that is independent of time. Using Lemma \ref{lem:integraltermbound}, we can then apply the Dominated Convergence Theorem to help prove convergence to steady state densities in Theorem \ref{wlimtheorem}. 
\begin{lemma} \label{lem:integraltermbound}
Suppose that $G(x), \pi(x)$ satisfy the assumptions of Theorem \ref{wlimtheorem} and that $\mu_0(dx)$ has supremum \holder exponent $\underline{\theta}$ near $x=1$ that is nonzero and finite. If $\nu_{\theta}= \lambda \left[G(1) - G(0)\right] - \underline{\theta} \pi(1) > 0$, then, for any $\Theta < \underline{\theta}$, there exists a constant $C^v(\Theta) < \infty$ such that
\begin{equation} \label{eq:intergraltermbound} \begin{aligned}
    \bigg| \PD{}{y}\paren{v(\phi_t(y))\psi_t(y)}(1-F(y)) \bigg| &\leq C^v(\Theta) e^{\left[\lambda G(1) - \Theta \pi(1)\right] t} d(x)  \\ d(x) &:=  \begin{cases}
(1-x)^{\Theta-1}x^{\nu-1} &\text{ if } \nu<1,\\
(1-x)^{\Theta-1} &\text{ if } \nu\geq 1.
\end{cases} 
\end{aligned}
\end{equation}
Furthermore, we can use Equation \eqref{vpsimu0} to see that there are positive constants $C_1^v$ and $C_2^v(\Theta)$ such that 
\begin{equation} \label{eq:bothtermrates}
 \int_0^1 v(\phi_t(y))\psi_t(y)\mu_0(dx) \leq C_1^v e^{\lambda G(0) t} + C_2^v(\Theta) e^{\left[\lambda G(1) - \Theta \pi(1) \right] t}.  
\end{equation}
If, in addition, the supremum \holder constant of $\mu_0(dx)$ is finite and nonzero, then we can obtains bounds analogous to those of Equations \eqref{eq:intergraltermbound} and \eqref{eq:bothtermrates} for $\Theta = \underline{\theta}$. 
\end{lemma}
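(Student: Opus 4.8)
The plan is to control the integrand in \eqref{vpsimu0} by expanding the $y$-derivative with the product rule,
\[
\PD{}{y}\paren{v(\phi_t(y))\psi_t(y)} = v'(\phi_t(y))\,\PD{\phi_t(y)}{y}\,\psi_t(y) + v(\phi_t(y))\,\PD{\psi_t(y)}{y},
\]
reducing first to $v\in C^1([0,1])$ (so that $\|v\|_\infty,\|v'\|_\infty<\infty$; the case of general continuous $v$ is handled by density in the proof of Theorem \ref{wlimtheorem}). The ingredients are the Jacobian formula \eqref{Jacobian}, which gives $\PD{\phi_t(y)}{y} = \phi_t(y)(1-\phi_t(y))\pi(\phi_t(y))/[y(1-y)\pi(y)]$ by inverting Lemma \ref{lem:phitinvformula}; the exact expression $\psi_t(y)=w_t(\phi_t(y))$ from Lemma \ref{lem:wtformula}; and the supremum \holder bound, which for any $\Theta<\underline{\theta}$ yields a constant with $1-F(y)=\mu_0([y,1])\leq C(1-y)^{\Theta}$ on all of $[0,1]$ (using $\mu_0([y,1])\le 1$ away from $y=1$). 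When the supremum \holder constant $C_{\underline\theta}$ is finite and nonzero the same inequality holds with $\Theta=\underline{\theta}$, which will give the endpoint versions.

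A clean preliminary step is to compute $\PD{\psi_t(y)}{y}$ directly. Writing $\psi_t(y)=\exp\paren{\lambda\int_0^t G(\phi_s(y))\,ds}$ and differentiating, the identity $\phi_s(1-\phi_s)\pi(\phi_s)=-\ds\frac{d}{ds}\phi_s(y)$ telescopes the $s$-integral, so that $\int_0^t G'(\phi_s(y))\,\PD{\phi_s(y)}{y}\,ds=\paren{G(y)-G(\phi_t(y))}/[y(1-y)\pi(y)]$ and hence
\[
\PD{\psi_t(y)}{y} = \psi_t(y)\,\frac{\lambda\paren{G(y)-G(\phi_t(y))}}{y(1-y)\pi(y)}.
\]
Both terms of the product rule therefore carry the common factor $1/[y(1-y)\pi(y)]$, together with $\psi_t(y)$ and bounded pieces ($|v|,|v'|$, $|G(y)-G(\phi_t(y))|\le\|G'\|_\infty$, and $\phi_t(y)(1-\phi_t(y))\pi(\phi_t(y))\le\|\pi\|_\infty$); this is where $\pi(x)>0$ on $[0,1]$ is used to keep $1/\pi(y)$ bounded.

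The heart of the argument is extracting the sharp exponential rate $e^{[\lambda G(1)-\Theta\pi(1)]t}$ rather than the crude envelope $e^{\lambda G(1)t}$ of Lemma \ref{lem:wtbounds}. The plan is to isolate the non-integrable parts of the line integrals at $s=0$: since $R(0)/\pi(0)=\paren{G(1)-G(0)}/\pi(0)$ and $Q(0)/\pi(0)=\pi(1)/\pi(0)$, subtracting the $\tfrac{1}{s}$-singularities converts the integrals in Lemma \ref{lem:wtformula} and Lemma \ref{lem:epi1tformula} into explicit powers times $t$-uniformly bounded exponentials. Concretely, $\psi_t(y)=e^{\lambda G(1)t}\paren{\phi_t(y)/y}^{\lambda[G(1)-G(0)]/\pi(0)}E_1(t,y)$ with $E_1$ bounded, and, taking $x=\phi_t(y)$ in \eqref{zest}, $(1-y)=e^{-\pi(1)t}(1-\phi_t(y))\paren{y/\phi_t(y)}^{\pi(1)/\pi(0)}E_2(t,y)$ with $E_2$ bounded. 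Substituting the \holder factor $(1-y)^{\Theta}$ via this second relation replaces $e^{0}$ by $e^{-\Theta\pi(1)t}$ at the cost of powers of $(1-\phi_t(y))$ and $\phi_t(y)/y$; combining with the power of $\phi_t(y)/y$ from $\psi_t$, the $t$-exponential collapses to $e^{[\lambda G(1)-\Theta\pi(1)]t}$ and the residual profile in $y$ is a product of $(1-y)^{\Theta-1}$, a power $\phi_t(y)^{\,\nu}y^{-1}$ governed by $\nu$ of \eqref{eq:nudefinition}, and bounded factors. Bounding $\phi_t(y)\le 1$ (or, near $y=1$, the $(1-\phi_t(y))^{\Theta-1}$ factor) then gives the pointwise bound \eqref{eq:intergraltermbound} with the two cases: when $\nu\ge 1$ the leftover power of $y$ is harmless and $d(y)=(1-y)^{\Theta-1}$, while when $\nu<1$ it produces the integrable singularity $y^{\nu-1}$. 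I expect this uniform-in-$y$ extraction to be the main obstacle, precisely because the fixed-$y$ rate of the integrand is the smaller $e^{\lambda G(0)t}$: one must book-keep the two endpoint singularities ($s\to0$ and $s\to1$) simultaneously and verify that what remains after pulling out $e^{[\lambda G(1)-\Theta\pi(1)]t}$ is dominated $t$-uniformly by the integrable profile $d(y)$, with the finiteness of all constants inherited from $G,\pi\in C^1$ and $\pi>0$ exactly as in Lemmas \ref{lem:epi1tformula}, \ref{lem:wtformula}, and \ref{lem:wtbounds}.

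Finally, \eqref{eq:bothtermrates} follows by integrating \eqref{eq:intergraltermbound} over $y$, using $\int_0^1 d(y)\,dy<\infty$ (valid since $\Theta>0$ and, when $\nu<1$, $\nu>0$) to bound the integral term by $C_2^v(\Theta)\,e^{[\lambda G(1)-\Theta\pi(1)]t}$, and adding the boundary term in \eqref{vpsimu0}, which is $v(0)\paren{1-F(0^+)}e^{\lambda G(0)t}$ because $\phi_t(0)\equiv 0$ forces $\psi_t(0)=e^{\lambda G(0)t}$; this yields the $C_1^v e^{\lambda G(0)t}$ contribution. Repeating the whole argument with $\Theta=\underline{\theta}$ under the hypothesis $C_{\underline\theta}\in(0,\infty)$ gives the stated endpoint bounds, completing the proof and furnishing the time-independent dominating function needed for the Dominated Convergence step in Theorem \ref{wlimtheorem}.
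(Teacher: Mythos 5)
You have reassembled all of the paper's ingredients --- your telescoping identity $\PD{\psi_t(y)}{y}=\psi_t(y)\,\lambda\paren{G(y)-G(\phi_t(y))}/[y(1-y)\pi(y)]$ is exactly the kernel $K$ of Equations \eqref{I1exp}--\eqref{Kexp} transported to the $y$-variable, and your subtraction of the $1/s$ singularities is the decomposition $M_{\Theta}(s)=\nu/s+C(s)/\pi(s)$ of Equation \eqref{eq:Mdecomp} --- but there is one fatal omission: you never change variables to $x=\phi_t(y)$, and the pointwise bound \eqref{eq:intergraltermbound} is \emph{false} when read, as you read it, with a $t$-independent profile $d(y)$ in the $y$-variable. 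Test it in the transition region $1-y\asymp e^{-\pi(1)t}$, where $\phi_t(y)$ sits at order-one distance from both endpoints: there $\PD{\phi_t(y)}{y}=\phi_t(1-\phi_t)\pi(\phi_t)/[y(1-y)\pi(y)]\asymp (1-y)^{-1}\asymp e^{\pi(1)t}$ and $\psi_t(y)\asymp e^{\lambda G(1)t}$, so for an initial measure saturating the \holder bound the integrand has size $e^{[\lambda G(1)+(1-\Theta)\pi(1)]t}$, whereas your claimed bound at such $y$ is $e^{[\lambda G(1)-\Theta\pi(1)]t}(1-y)^{\Theta-1}\asymp e^{[\lambda G(1)+(1-2\Theta)\pi(1)]t}$: you are short by a factor $e^{\Theta\pi(1)t}$. (Algebraically, after your two extraction identities the leftover profile is $(1-\phi_t(y))^{\Theta-1}$ and powers of $\phi_t(y)$ --- a function of $x=\phi_t(y)$, not of $y$; trading it back for $(1-y)^{\Theta-1}$ via Lemma \ref{lem:epi1tformula} costs exactly the exponential you cannot afford.) The failure is not cosmetic: at every fixed $y\in(0,1)$ the integrand grows only like $e^{\lambda G(0)t}$ since $\phi_t(y)\to 0$, so after normalizing by $e^{-[\lambda G(1)-\theta\pi(1)]t}$ the $y$-integrand tends to zero pointwise; if your $y$-variable domination held at $\Theta=\theta$, dominated convergence would force $e^{-[\lambda G(1)-\theta\pi(1)]t}\int_0^1 v\,\mu_t(dx)\to 0$, contradicting the positive limit $C_\theta\theta\int_0^1 v f^{\lambda}_{\theta}\,dx$ of Equation \eqref{vmutlim}. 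The mass responsible for the sharp rate escapes to $y=1$, and no time-independent integrable dominating function in $y$ exists at that rate.

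The repair is the paper's first move: substitute $x=\phi_t(y)$ in the integral of Equation \eqref{vpsimu0}, so the integrand becomes $\PD{}{x}\paren{v(x)w_t(x)}\paren{1-F(\phi_t^{-1}(x))}$. The Jacobian $\partial\phi_t^{-1}(x)/\partial x$ of Lemma \ref{lem:phitinvformula} is precisely the $e^{\pi(1)t}$-sized factor near $y=1$ that breaks your bound; in the $x$-variable it combines with $R(\phi_t^{-1}(x))$ to produce the bounded kernel $K(x)=\paren{G(x)-G(\phi_t^{-1}(x))}/[x(1-x)\pi(x)]$, and your two extractions become $|I_1|\leq C_1\frac{\phi_t^{-1}(x)}{x(1-x)}\exp\paren{-\lambda\int_x^{\phi_t^{-1}(x)}\frac{R(s)}{s\pi(s)}ds}$ and $|I_2|\leq C_2(1-x)^{\Theta}\exp\paren{\Theta\int_x^{\phi_t^{-1}(x)}\frac{Q(s)}{s\pi(s)}ds}$, whose product is controlled, via $M_{\Theta}$ and $\phi_t^{-1}(x)\geq x$, by $d(x)\,e^{[\lambda G(1)-\Theta\pi(1)]t}$ with $d(x)$ genuinely $t$-independent and integrable; moreover in the $x$-variable the normalized integrand converges pointwise to a nonzero limit, which is what lets dominated convergence in Theorem \ref{wlimtheorem} produce the steady state instead of zero. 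Your integrated bound \eqref{eq:bothtermrates} happens to be true (the bad region has width $\asymp e^{-\pi(1)t}$, so its contribution integrates to the right rate), but the route you propose to it --- and to the dominating function for Theorem \ref{wlimtheorem} --- through a $y$-variable pointwise bound does not work.
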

\begin{proof}
 We start with the integral in Equation \eqref{vpsimu0}.
Changing variables to $x=\phi_t(y)$, we have:
\begin{equation}
\int_0^1\PD{}{y}\paren{v(\phi_t(y))\psi_t(y)}(1-F(y))dy
=\int_0^1 \PD{}{x}\paren{(v(x)w_t(x)}(1-F(\phi_t^{-1}(x)))dx
\end{equation}
Our goal is to estimate the following quantity
\begin{dmath}\label{I1I2}
 {\int_0^1 I_1I_2dx := e^{-(\lambda G(1)-\theta\pi(1))t} \int_0^1\PD{}{x}\paren{(v(x)w_t(x)}(1-F(\phi_t^{-1}(x)))dx,}\\
I_1=e^{-\lambda G(1) t} \PD{}{x}\paren{(v(x)w_t(x)},\quad I_2=e^{\theta \pi(1)t}(1-F(\phi_t^{-1}(x))).
\end{dmath}
Let us first consider $I_1$.
Using Equation \eqref{wtest}, we have that
\begin{equation}\label{I1exp}
\begin{split}
I_1&=\paren{\PD{v}{x}-\lambda K(x) v(x)}\exp\paren{-\lambda \int_x^{\phi_t^{-1}(x)}\frac{R(s)}{s\pi(s)}ds},\\
K(x)&=\frac{R(\phi_t^{-1}(x))}{\phi_t^{-1}(x) \pi(\phi_t^{-1}(x))}\PD{}{x}\phi_t^{-1}(x)-\frac{R(x)}{x\pi(x)}.
\end{split}
\end{equation}
Using Equation \eqref{Jacobian} and the definition of $R$ in Equation \eqref{wtest}, we have:
\begin{equation}\label{Kexp}
K(x)=\frac{G(x)-G(\phi_t^{-1}(x))}{x(1-x)\pi(x)}.
\end{equation}
Using the fact that $G$ is $C^1$ and that $\phi_t^{-1}(x)\geq x$, we have that
\begin{equation}
\abs{G(x)-G(\phi_t^{-1}(x))}=\abs{\int_x^{\phi_t^{-1}(x)} G'(z)dz}\leq \norm{G'}_{\infty}(\phi_t^{-1}(x)-x)
\end{equation}
Denoting $\pi_{\rm min}:=\min_{0\leq s\leq 1} \pi(s)>0$, we may thus estimate $K(x)$ by
\begin{equation}
\abs{K(x)}\leq \frac{\norm{G'}_{\infty}(\phi_t^{-1}(x)-x)}{x(1-x)\pi(x)}\leq \frac{\norm{G'}_{\infty}\phi_t^{-1}(x)}{\pi_{\rm min}x(1-x)},
\end{equation}
and therefore deduce that
\begin{equation}
\abs{\PD{v}{x}-\lambda K(x) v(x)}\leq \norm{\PD{v}{x}}_{\infty}+\frac{\lambda\norm{G'}_{\infty}\norm{v}_{\infty}\phi_t^{-1}(x)}{\pi_{\rm min}x(1-x)} \leq \underbrace{\left(\norm{\PD{v}{x}}_{\infty}+\frac{\lambda\norm{G'}_{\infty}\norm{v}_{\infty}}{\pi_{\rm min}}\right)}_{:= C_1} \frac{\phi_t^{-1}(x)}{x(1-x)}.
\end{equation}
Returning to Equation \eqref{I1exp}, we see that $I_1$ can be estimated as
\begin{equation}\label{I1est}
\abs{I_1}\leq C_1\frac{\phi_t^{-1}(x)}{x(1-x)}\exp\paren{-\lambda \int_x^{\phi_t^{-1}(x)}\frac{R(s)}{s\pi(s)}ds}.
\end{equation}
We  now estimate $I_2$ in Equation \eqref{I1I2}. Because $\mu_0(dx)$ has supremum \holder exponent $\underline{\theta}$ near $x=1$ and $1-F(\phi_t^{-1}(x)) = \mu_0\left(\left(\phi_t^{-1}(x),1\right]\right)$, we can use Equation \eqref{eq:mu0identity} to see that, for any $\Theta < \underline{\theta}$, there is a constant $C_2 < \infty$ such that
\begin{equation} \label{I2est}
\abs{I_2}\leq \exp(\Theta \pi(1)t) (1-\phi_t^{-1}(x))^\Theta \leq 
C_2(1-x)^\Theta \exp\paren{\Theta \int_x^{\phi_t^{-1}(x)}\frac{Q(s)}{s\pi(s)}ds}.%
\end{equation}
Combining Equations \eqref{I1est} and \eqref{I2est}, we have that
\begin{equation}\label{I1I2est}
\begin{split}
\abs{I_1I_2}&\leq C_1C_2(\Theta)(1-x)^{\theta-1}\frac{\phi_t^{-1}(x)}{x}I_3,\; I_3=\exp\paren{\int_x^{\phi_t^{-1}(x)}\frac{M_{\Theta}(s)}{s\pi(s)}ds},
\end{split}
\end{equation}
where $M_{\Theta}(s)$ is defined as in Equation \eqref{ptheta}.
To estimate $I_3$, we see from Equation \eqref{eq:Mdecomp} that
\begin{equation}
\int_x^{\phi_t^{-1}(x)}\frac{M_{\Theta}(s)}{s\pi(s)}ds =-\nu_{\Theta} \ln\paren{\frac{\phi_t^{-1}(x)}{x}}+\int_x^{\phi_t^{-1}(x)}\frac{C(s)}{\pi(s)}ds,
\end{equation}
and we can use the fact that $C(s)$ is bounded on $[0,1]$ to see that 
\begin{equation}
\abs{I_3}\leq C_3 \paren{\frac{x}{\phi_t^{-1}(x)}}^\nu, \; C_3=\exp\paren{\int_0^1\frac{C(s)}{\pi(s)}ds} < \infty.
\end{equation}

Combining this with Equation \eqref{I1I2est}, we see that
\begin{equation}
\abs{I_1I_2}\leq C^{v}(\Theta)(1-x)^{\theta-1}\paren{\frac{x}{\phi_t^{-1}(x)}}^{\nu_{\Theta}-1}, \; C^{v}(\Theta)=C_1C_2(\Theta)C_3.
\end{equation}
Since $\phi_t^{-1}(x)\geq x$, we can use the definitions of $d(x)$, $I_1$, and $I_2$ from Equations \eqref{eq:intergraltermbound} and \eqref{I1I2} to see that $\abs{I_1I_2} \leq d(x) e^{\left[\lambda G(1) - \Theta \pi(1) \right] t}$, where we note that $d(x)$ is integrable because $\Theta > 0$ and $\nu_{\Theta} > \nu_{\theta} > 0$ for any $\Theta < \theta$. We can further use this estimate of  $|\abs{I_1I_2}|$, the fact that $\psi_t(0) = e^{\lambda G(0) t}$, and the choice of constants $C_1^v = |v(0)| (1 - F(0^+))$ and $C_2^v(\Theta) = C^v(\Theta) \int_0^1 d(x) dx$ to obtain the estimate of Equation \eqref{eq:bothtermrates}.
\end{proof}

\begin{proof}[Proof of Theorem \ref{wlimtheorem}]
We first prove Equation \eqref{vmubarlim} when $v(x)$ is a $C^1$ function. We will evaluate the following quantity as $t\to \infty$
\begin{equation}
e^{-(\lambda G(1)-\theta \pi(1))t} \int_0^1 v(x)\mu_t(dx) 
=e^{-(\lambda G(1)-\theta \pi(1))t} \int_0^1 v(\phi_t(y))\psi_t(y)\mu_0(dx),
\end{equation}
which we can express through Equation \eqref{vpsimu0}. From the boundary term of Equation \eqref{vpsimu0}, we can use the fact that $\psi_t(0)=e^{\lambda G(0) t}$ to see that
\begin{equation}\label{bdryest}
\lim_{t\to \infty} e^{-(\lambda G(1)-\theta \pi(1))t} v(0)\psi_t(0)(1-F(0^+)) 
=\lim_{t\to \infty} e^{-(\lambda(G(1)-G(0))-\theta \pi(1))t}v(0)(1-F(0^+))=0,
\end{equation}
where the last equality follows from the fact that $\mu_0(dx)$ has \holder exponent $\theta$ near $x=1$. Recalling that $\theta>0$ and $\nu>0$ by assumption, we can apply Lemma \ref{lem:integraltermbound} to see that the integrand on the righthand side of Equation \eqref{vpsimu0} is bounded by an integrable function independent of $t$. We may thus use the Dominated Convergence Theorem to pass to the limit as $t\to \infty$ in the integral in Equation \eqref{I1I2}.

Using Equations \eqref{wtestlim} and \eqref{Kexp}, we find that
\[ \lim_{t \to \infty} K(x) = \lim_{t \to \infty} \frac{G(x) - G(\phi_t^{-1}(x))}{x(1-x) \pi(x)} = - \frac{1}{x \pi(x)} \left( \frac{G(1) - G(x)}{1-x} \right) = - \frac{R(x)}{x \pi(x)}. \] 
Combining this with Equation \eqref{I1est}, we can compute that 
\begin{equation}
\lim_{t\to \infty} I_1 = \paren{\PD{v}{x}+\frac{\lambda R(x)v(x)}{x\pi(x)}}\exp\paren{-\lambda \int_x^1\frac{R(s)}{s\pi(s)}ds} 
=\PD{}{x}\paren{v(x)\exp\paren{-\lambda \int_x^1\frac{R(s)}{s\pi(s)}ds}}.
\end{equation}
Furthermore, can use Equation \eqref{zestlim} and the fact that $\mu_0(dx)$ has \holder exponent $\theta$ with constant $C_{\theta}$ near $x=1$ to see that
\begin{equation}
\lim_{t\to \infty} I_2=\lim_{t\to \infty}\frac{1-F(\phi_t^{-1}(x))}{(1-\phi_t^{-1}(x))^\theta}\paren{\exp(\pi(1)t)(1-\phi_t^{-1}(x))}^\theta
=C_\theta (1-x)^\theta\exp\paren{\theta \int_x^1 \frac{Q(s)ds}{s\pi(s)}}.
\end{equation}
Therefore we deduce that 
\begin{dmath}
\lim_{t\to \infty} \int_0^1 I_1I_2dx
=\int_0^1 \PD{}{x}\paren{v(x)\exp\paren{-\lambda \int_x^1\frac{R(s)}{s\pi(s)}ds}}
C_\theta (1-x)^\theta\exp\paren{\theta \int_x^1 \frac{Q(s)ds}{s\pi(s)}}dx
= \left.\paren{v(x)C_\theta(1-x)^\theta \exp\paren{\int_x^1 \frac{M_{\theta}(s)ds}{s\pi(s)}}}\right|_0^1
-\int_0^1 \paren{v(x)\exp\paren{-\lambda \int_x^1\frac{R(s)}{s\pi(s)}ds}}
\PD{}{x}\paren{C_\theta (1-x)^\theta\exp\paren{\theta \int_x^1 \frac{Q(s)ds}{s\pi(s)}}}dx,
\end{dmath}
where we integrated by parts in the second equality. Using Equation \eqref{eq:Mdecomp} and the fact that $\nu_{\theta} > 0$, we see that the boundary term vanishes.
After some simplifications, we see that:
\begin{equation}
\lim_{t\to \infty} \int_0^1 I_1I_2dx=C_\theta \theta \int_0^1 v(x)f^{\lambda}_{\theta}(x)dx.
\end{equation}
Combining this with Equation \eqref{bdryest}, we see that:
\begin{equation}\label{vmutlim}
\lim_{t\to \infty} e^{-\left[\lambda G(1)-\theta \pi(1)\right]t}\int_0^1 v(x)\mu_t(dx)=C_\theta \theta \int_0^1 v(x)f^{\lambda}_{\theta}(x)dx.
\end{equation}
Using the test-function $v(x) = 1$, we further have that
\begin{equation}\label{1mutlim}
\lim_{t\to\infty}  e^{-\left[\lambda G(1)-\theta \pi(1)\right]t} \int_0^1 \mu_t(dx) =C_\theta \theta \int_0^1 f^{\lambda}_{\theta}(x) dx.
\end{equation}
Using the normalization relation from Equation \eqref{eq:olmutnormalized}, we may now compute the limit in Equation \eqref{vmubarlim}.
\begin{equation}\lim_{t\to \infty} \int_0^1 v(x)\ov{\mu}_t(dx) = \lim_{t\to \infty} \frac{ e^{-\left[\lambda G(1)-\theta \pi(1)\right]t}  \int_0^1 v(x)\mu_t(dx)}{ e^{-\left[\lambda G(1)-\theta \pi(1)\right]t}  \int_0^1 \mu_t(dx)} =  \frac{\int_0^1 v(x)f^{\lambda}_{\theta}(x)dx}{\int_0^1 f^{\lambda}_{\theta}(x)dx} = \int_0^1 v(x) p^{\lambda}_{\theta}(x) dx,
\end{equation}
where we used Equation \eqref{vmutlim} and \eqref{1mutlim} in the second equality.
We have thus established \eqref{vmubarlim} when $v(x)$ is $C^1$. To see that Equation \eqref{vmubarlim} is valid for merely continuous $v(x)$, we may use a standard approximation argument. 
Note that we may approximate $v(x)$ arbitrarily closely 
by a $C^1$ function $w(x)$ in the sup norm:
\begin{equation}\label{wprop}
\text{for any } \epsilon>0, \text{ there exists } w(x)\in C^1 \text{ such that } \norm{v-w}_{\infty}\leq \epsilon.
\end{equation}
Using this and the fact that $\overline{\mu}_t(dx)$ and $p^{\lambda}_{\theta}(x) dx$ are probability measures, we see that
\begin{dmath}
\abs{\int_0^1 v\bar{\mu}_t(dx)-\int_0^1 vp^{\lambda}_{\theta} dx}
\leq \int_0^1 \abs{v-w}\bar{\mu}_t(dx)+
\abs{\int_0^1 w\bar{\mu}_t(dx)-\int_0^1 wp^{\lambda}_{\theta} dx}
+\int_0^1 \abs{v-w}p^{\lambda}_{\theta} dx
\leq 2\epsilon+\abs{\int_0^1 w\bar{\mu}_t(dx)-\int_0^1 wp^{\lambda}_{\theta} dx},
\end{dmath}
As $t\to \infty$, the last expression tends to $0$ since $w(x)$ is $C^1$. Since $\epsilon$ is arbitrary, we obtain the desired conclusion.
\end{proof}

\subsection{Convergence to Delta-Function at Full-Cooperation}
\label{sec:deltaone}

Next, we consider the case in which a positive fraction of groups in the initial population are concentrated at the all-cooperator composition. We show in Proposition \ref{prop:deltaone} that if the initial population contains a positive probability of full-cooperator groups, then the whole population will concentrate upon full-cooperation in the long-time limit. 

\begin{proposition} \label{prop:deltaone}
Suppose that $G(x), \pi(x)$ satisfy the assumptions of Theorem \ref{wlimtheorem} and that the initial population can be written as 
\begin{equation} \label{eq:interiordelta1}
\mu_0(dx) = a_1 \delta(1-x) + \left( 1 - a_1 \right) \rho_0(dx)
\end{equation}
for $a_1 > 0$ and $\rho_0(dx)$ a probability measure with a supremum \holder exponent of $\underline{\theta} > 0$ near $x = 1$. Then the solution to Equation \eqref{multiselect} $\ol{\mu}_t(dx)$ converges weakly to the delta-function $\delta(1-x)$ concentrated at full-cooperation as $t \to \infty$. 
\end{proposition}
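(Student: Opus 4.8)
The plan is to exploit the linearity of the auxiliary equation \eqref{eq:linearmeasuremultiselect} together with the fact that $x=1$ is a fixed point of the characteristic ODE \eqref{cODE}. Writing $\mu_t = a_1 \mu_t^{\delta} + (1-a_1)\mu_t^{\rho}$, where $\mu_t^{\delta}$ and $\mu_t^{\rho}$ solve the linear dynamics with initial data $\delta(1-x)$ and $\rho_0(dx)$ respectively, I would first compute $\mu_t^{\delta}$ exactly. Using the push-forward formula \eqref{eq:pushforward} together with $\phi_t(1)=1$ and $\psi_t(1)=e^{\lambda G(1)t}$ (since along the characteristic $x\equiv 1$ the weight solves $\dot q = \lambda G(1) q$ with $q(0)=1$), one gets $\int_0^1 v(x)\,\mu_t^{\delta}(dx) = v(1)\,e^{\lambda G(1)t}$, i.e. $\mu_t^{\delta}(dx) = e^{\lambda G(1)t}\delta(1-x)$. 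Thus the delta-peak contributes mass growing at the maximal rate $e^{\lambda G(1)t}$, and the whole strategy reduces to showing that the $\rho_0$-part grows strictly slower.

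The key estimate is $\mu_t^{\rho}([0,1]) = o(e^{\lambda G(1)t})$, which I would establish by splitting the \emph{starting} positions into $[0,1-\eta]$ and $[1-\eta,1]$ (rather than current positions). For $y\in[0,1-\eta]$ the characteristic stays below $1-\eta$, and since $x(1-x)\pi(x)$ is bounded away from zero on compact subintervals of $(0,1)$, it descends below a chosen small $\epsilon$ after a time $T_{\epsilon,\eta}$ uniform in $y$; bounding $G(\phi_s(y))$ by $\max_{[0,1]}G$ for $s\le T_{\epsilon,\eta}$ and by $\max_{[0,\epsilon]}G$ for $s>T_{\epsilon,\eta}$ gives $\psi_t(y)\le C_{\epsilon,\eta}\,e^{\lambda(\max_{[0,\epsilon]}G)\,t}$. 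Choosing $\epsilon$ small enough that $\max_{[0,\epsilon]}G<G(1)$ (possible since $G$ is continuous with $G(0)<G(1)$) renders this contribution $o(e^{\lambda G(1)t})$. For $y\in[1-\eta,1]$ I would invoke the universal upper bound $\psi_t(y)\le \overline{M}\,e^{\lambda G(1)t}$ of Lemma \ref{lem:wtbounds}, giving a contribution at most $\overline{M}\,e^{\lambda G(1)t}\,\rho_0([1-\eta,1])$. Summing the two pieces, taking $\limsup_{t\to\infty}e^{-\lambda G(1)t}\mu_t^{\rho}([0,1])\le \overline{M}\,\rho_0([1-\eta,1])$, and then letting $\eta\to 0$ while using that the positive supremum H\"older exponent forces $\rho_0([1-\eta,1])\to \rho_0(\{1\})=0$, yields the claim.

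Finally I would conclude through the normalization \eqref{eq:olmutnormalized}. For continuous $v$, the quantity $\int_0^1 v(x)\,\ol{\mu}_t(dx)$ equals the ratio of $a_1 v(1)e^{\lambda G(1)t}+(1-a_1)\int v\,d\mu_t^{\rho}$ over $a_1 e^{\lambda G(1)t}+(1-a_1)\mu_t^{\rho}([0,1])$; dividing numerator and denominator by $e^{\lambda G(1)t}$ and using $|\int v\,d\mu_t^{\rho}|\le \|v\|_\infty\,\mu_t^{\rho}([0,1])=o(e^{\lambda G(1)t})$ shows the numerator tends to $a_1 v(1)$ and the denominator to $a_1>0$, so $\int_0^1 v(x)\,\ol{\mu}_t(dx)\to v(1)=\int_0^1 v\,d\delta(1-x)$. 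The main obstacle is the middle estimate: characteristics originating near $x=1$ can carry weight of order $e^{\lambda G(1)t}$ even after descending, so one must see that the mass they transport is negligible, which holds precisely because $\rho_0$ charges a neighborhood of $x=1$ only lightly when the supremum H\"older exponent is positive. Separating by starting position is what makes this transparent and lets one avoid the finer sequential bounds of Lemma \ref{lem:probholderbounds}.
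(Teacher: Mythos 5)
Your proof is correct, and while its outer skeleton matches the paper's (split off the atom at $x=1$, compute $\mu_t^{\delta}(dx)=e^{\lambda G(1)t}\delta(1-x)$ exactly, then conclude via the normalization ratio), the core estimate is obtained by a genuinely different route. The paper bounds the $\rho_0$-contribution by invoking Lemma \ref{lem:integraltermbound}, i.e.\ the integration-by-parts machinery behind Theorem \ref{wlimtheorem}, which yields the quantitative rate $C_1^v e^{\lambda G(0)t}+C_2^v e^{[\lambda G(1)-\tilde{\theta}\pi(1)]t}$ for any $\tilde{\theta}\in(0,\underline{\theta})$ and hence an explicit exponential gap below $e^{\lambda G(1)t}$. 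You instead prove only $\mu_t^{\rho}([0,1])=o(e^{\lambda G(1)t})$, by an elementary two-zone argument in the \emph{starting} variable: trajectories launched from $[0,1-\eta]$ descend below $\epsilon$ in a uniform time (the speed $x(1-x)\pi(x)$ is bounded below on $[\epsilon,1-\eta]$ and characteristics are monotone in the PD case), so their weights grow at rate at most $\lambda\max_{[0,\epsilon]}G<\lambda G(1)$; trajectories launched from $[1-\eta,1]$ are controlled by the uniform bound $\psi_t(y)\leq \overline{M}e^{\lambda G(1)t}$ of Lemma \ref{lem:wtbounds} but carry initial mass $\rho_0([1-\eta,1])$, which tends to $\rho_0(\{1\})=0$ as $\eta\to 0$. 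All steps check out: the uniform descent time, the finiteness and $y$-independence of $\overline{M}$, and the deduction $\rho_0(\{1\})=0$ from $\underline{\theta}>0$ (some $\Theta>0$ has $\limsup_{x\to 0}x^{-\Theta}\rho_0([1-x,1])=0$, forcing $\rho_0([1-x,1])\to 0$). The trade-off is instructive: your argument is more self-contained (it bypasses Lemma \ref{lem:integraltermbound} entirely) and actually proves a slightly stronger statement, since the positive supremum \holder exponent is used only to rule out an atom of $\rho_0$ at $x=1$ --- the conclusion holds whenever $\rho_0(\{1\})=0$, a strictly weaker hypothesis (e.g.\ $\rho_0([1-x,1])\sim 1/\log(1/x)$ has $\underline{\theta}=0$ but no atom). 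What you give up is the explicit exponential rate of convergence that the paper's approach provides for free.
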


\begin{proof} 
 Using the push-forward representation of $\mu_t(dx)$, we find that
 \begin{dmath} \label{eq:mubartdelta1}
    \int_0^1 v(x) \ol{\mu}_t(dx) = \frac{e^{-\lambda G(1) t} \int_0^1 v(x) \mu_t(dx) }{e^{-\lambda G(1) t} \int_0^1 \mu_t(dx)} = \frac{a_1 v(1) + \left(1 - a_1\right) e^{-\lambda G(1) t} \int_0^1 v(\psi_t(x)) \psi_t(x) \rho_0(dx) }{a_1  + \left(1 - a_1\right) e^{-\lambda G(1) t} \int_0^1  \psi_t(x) \rho_0(dx)}.
 \end{dmath}
 Using Lemma \ref{lem:integraltermbound}, we know that there exist positive constants $C_1^v,C_2^v < \infty$ such that for any $\tilde{\theta} \in (0,\underline{\theta})$
 \[ \left(1 - a_1\right) e^{-\lambda G(1) t} \bigg|\int_0^1 v(\psi_t(x)) \psi_t(x) \rho_0(dx)\bigg| \leq \left(1-a_1\right) \left[ C_1^v e^{\lambda \left[G(0) - G(1) \right] t} + C_2^v e^{-\tilde{\theta} \pi(1) t} \right]
 \]
 Because $G(0) < G(1)$ and $\tilde{\theta} > 0$, we can further see that 
 \[ \left(1 - a_1\right) e^{-\lambda G(1) t} \bigg|\int_0^1 v(\psi_t(x)) \psi_t(x) \rho_0(dx)\bigg| \to 0 \: \: \mathrm{as} \: \: t \to \infty. \]
 Applying this limit to the numerator and denominator in Equation \ref{eq:mubartdelta1}, we can then conclude that
 \begin{equation}
     \ds\lim_{t \to \infty} \int_0^1 v(x) \ol{\mu}_t(dx) = v(1),
 \end{equation}
 and we have shown that $\ol{\mu}_t(dx) \rightharpoonup \delta(1-x)$. 
\end{proof}

\begin{remark}
Proposition \ref{prop:deltaone} generalizes results for the Luo-Mattingly model and models from evolutionary games in which full-cooperation maximizes collective payoff \cite{luo2017scaling,cooney2020pde}. The proofs in those cases had relied on the fact that $G(1) \geq \langle G(x)\rangle_  {\mu_t(dx)}$ at all times $t$, but did not require the additional assumption that the portion of the initial measure not concentrated at full-cooperation have a positive supremum \holder exponent near $x=1$. In the present case, the population concentrates upon full-cooperation even when group reproduction function is maximized by an interior level of cooperation. 
\end{remark}

 \section{Long-Time Behavior of Multilevel PD Dynamics for General Initial Measures} \label{sec:extinctionsurvival}

In this section, we consider the long-time behavior of solutions of Equation \eqref{multiselect} for initial measures $\mu_0(dx)$ with a given supremum \holder exponent $\underline{\theta}$ near $x=1$. 
For the Prisoner's Dilemma case, we present the proof of Theorem \ref{thm:groupbounds} characterizing long-time bounds on the time-averaged collective group reproduction-rate in Section \ref{sec:groupbounds} and the proof of Theorem \ref{thm:extinctvspersist} for extinction or weak persistence of cooperation in Section \ref{sec:convergencedeltazero}. In Section \ref{sec:PDel}, we prove Proposition \ref{prop:PDeldelta}, showing that $\ol{\mu}_t(dx)$ concentrates upon full-cooperation in the Prisoners' Delight case.

\subsection{Bounds on Average Group Reproduction Function}
\label{sec:groupbounds}

Before presenting the proof of Theorem \ref{thm:groupbounds}, we first provide a result analogous to Lemma \ref{lem:integraltermbound} showing the existence of a sequence tending on which solutions to Equation \eqref{eq:linearmeasuremultiselect} can be bounded in terms of the infimum \holder exponent near $x=1$. 
\begin{lemma} \label{lem:massinfimumbound}
Suppose that $G(x), \pi(x)$ satisfy the assumptions of Theorem \ref{wlimtheorem} and that $\mu_0(dx)$ has infimum \holder exponent $\overline{\theta}$ near $x=1$ that is nonzero and finite. Then, for $\Theta < \overline{\theta}$, there are positive constants $E_1^v$ and $E_2^v(\Theta)$ and a sequence $\{t_n\}_{n \in \NN}$ satisfying $t_n \to \infty$ such that 
\begin{equation} \label{eq:infbothtermrates}
 \int_0^1 v(\phi_{t_n}(y))\psi_{t_n}(y)\mu_0(dx) \leq E_1^v e^{\lambda G(0) t_n} + E_2^v(\Theta) e^{\left[\lambda G(1) - \Theta \pi(1) \right] t_n}.  
\end{equation}
\end{lemma}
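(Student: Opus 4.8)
The plan is to follow the proof of Lemma \ref{lem:integraltermbound} as closely as possible, since \eqref{eq:infbothtermrates} is the exact analogue of \eqref{eq:bothtermrates} with the supremum \holder exponent $\underline{\theta}$ replaced by the infimum \holder exponent $\overline{\theta}$, and with the bound demanded only along a sequence $t_n \to \infty$ rather than for all $t$. First I would start from the integration-by-parts identity \eqref{vpsimu0}, whose boundary term $v(0)\psi_t(0)(1-F(0^+)) = v(0)e^{\lambda G(0)t}(1-F(0^+))$ already produces the first summand $E_1^v e^{\lambda G(0)t_n}$, valid for every $t$ with $E_1^v = |v(0)|(1-F(0^+))$. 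Everything then reduces to estimating the integral term $\int_0^1 \PD{}{x}\paren{v(x)w_t(x)}\,\mu_0\paren{(\phi_t^{-1}(x),1]}\,dx$ obtained after the substitution $x = \phi_t(y)$, and for the $I_1$-factor I would reuse verbatim the pointwise bound on $\abs{\PD{}{x}(v w_t)}$ from Lemma \ref{lem:integraltermbound}, which holds for all $t$ and uses only $G,\pi \in C^1$.

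The one genuinely new ingredient concerns the factor $I_2 = e^{\Theta \pi(1) t}\mu_0\paren{(\phi_t^{-1}(x),1]}$. In Lemma \ref{lem:integraltermbound} the supremum exponent supplied a pointwise-in-$x$, uniform-in-$t$ bound $\mu_0([1-y,1]) \leq C y^{\Theta}$ valid for all small $y$, which produced the time-independent integrable dominating function $d(x)$. Here Definition \ref{def:infimumHolder} only gives, for each $\Theta < \overline{\theta}$, a sequence of scales $y_n \downarrow 0$ along which the $\liminf$ is attained, i.e. $\mu_0([1-y_n,1]) \leq y_n^{\Theta}$. The idea is to convert these scales into times: using the formula \eqref{zest} for $1-\phi_t^{-1}(a)$, I would choose $t_n$ so that $1 - \phi_{t_n}^{-1}(a) = y_n$ for a fixed reference point $a \in (0,1)$; since $Q(s)/(s\pi(s))$ is integrable away from $0$, one gets $y_n \asymp e^{-\pi(1) t_n}$ and hence $e^{\Theta \pi(1) t_n} y_n^{\Theta} = O(1)$. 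Exploiting the monotonicity of the tail $y \mapsto \mu_0([1-y,1])$, the single-scale bound at $y_n$ then controls $I_2$ throughout the region $x \geq a$, where $1 - \phi_{t_n}^{-1}(x) \leq y_n$; combined with the fact that $\abs{I_1}$ is bounded on $[a,1]$ (using $\phi_t^{-1}(x)-x \leq 1-x$, so that $K(x)$ carries no $1/(1-x)$ singularity), this yields the $e^{[\lambda G(1) - \Theta \pi(1)]t_n}$-contribution coming from $[a,1]$.

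The main obstacle — and the step I expect to require the most care — is the complementary region near $x=0$, i.e. the contribution of trajectories that have already descended from the neighborhood of full-cooperation into the defector-dominated bulk. There the relevant scales $1-\phi_{t_n}^{-1}(x)$ are \emph{larger} than $y_n$, so monotonicity gives no decay and the single-scale infimum bound is unavailable; moreover $w_t(x)$ is of order $e^{\lambda G(1)t}$ near $x=a$ while the tail mass there is only of order $1$, so a naive estimate over $[0,a]$ overshoots the claimed rate. Controlling this bulk term is precisely where the infimum-exponent argument must depart from the supremum-exponent argument: one has to show that, for the chosen times $t_n$, the descended mass does not grow faster than $e^{[\lambda G(1)-\Theta\pi(1)]t_n}$, equivalently that the near-$x=0$ part of the integral is absorbed into the $e^{\lambda G(0)t_n}$ term. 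I would attempt this through a scale-adapted splitting of $[0,a]$ matched to the flow time, starting from the representation $\psi_t(y) \leq e^{\lambda G(0)t}\exp\paren{\lambda \int_{\phi_t(y)}^{y} \frac{[G(p)-G(0)]_+}{p(1-p)\pi(p)}\,dp}$ and crossing the two regimes $1-y \gtrless e^{-\pi(1)t}$ at the scale $y_n$. The delicate point is to verify that the resulting bulk estimate does not reintroduce the supremum exponent $\underline{\theta}$, which is what forces the sequence $t_n$ to be selected compatibly with the scales $y_n$ and the intrinsic flow timescale $\pi(1)^{-1}$ rather than chosen freely; getting this matching right is, in my view, the crux of the lemma.
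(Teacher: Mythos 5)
First, a point of reference: the paper contains no proof of Lemma \ref{lem:massinfimumbound} at all --- it is stated as ``analogous to Lemma \ref{lem:integraltermbound}'' and the text passes directly to the proof of Theorem \ref{thm:groupbounds}. So your proposal must be judged on whether the analogy can actually be carried out, and your own diagnosis of where it strains is exactly right --- but the strain is fatal, not merely delicate. The parts you do carry out are sound: the boundary term gives $E_1^v e^{\lambda G(0)t}$ for every $t$; choosing $t_n$ so that $1-\phi_{t_n}^{-1}(a)=y_n$ is legitimate (the map $t\mapsto 1-\phi_t^{-1}(a)$ is continuous, decreasing to $0$, and Lemma \ref{lem:epi1tformula} gives $y_n\asymp e^{-\pi(1)t_n}$); and monotonicity of the tail plus the boundedness of $I_1$ on $[a,1]$ (your observation that $\phi_t^{-1}(x)-x\le 1-x$ kills the $1/(1-x)$ singularity in $K(x)$ is correct) controls the region $x\ge a$. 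That much is essentially the sequential bound \eqref{eq:probaholderlowerboundsubsequence} of Lemma \ref{lem:probholderbounds}. What is genuinely new in Lemma \ref{lem:massinfimumbound} relative to that bound is precisely the bulk contribution from $[0,a]$, which you leave as a plan (``scale-adapted splitting'') --- and that step cannot be completed, because the claimed inequality is false in general when $\underline{\theta}<\overline{\theta}$.

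Here is the obstruction concretely. Mass that entered the dynamics at an intermediate scale $u$ keeps, forever after descending into the bulk, the accumulated factor $u^{-\beta}$ with $\beta=\lambda[G(1)-G(0)]/\pi(1)$ (this is your $e^{\lambda(G_{\max}-\hat G_a)h(y)}$ weight, computed sharply via Lemmas \ref{lem:epi1tformula} and \ref{lem:wtbounds}); the infimum \holder hypothesis says nothing about the tail at scales \emph{between} the admissible $y_n$, where it may be as fat as the supremum exponent allows. Take $\pi\equiv 1$, $\lambda=1$, $G(x)=3x$ (so $\beta=3$), and $\mu_0$ consisting of atoms of mass $\sim u_m^{\underline{\theta}}$ at $x=1-u_m$ with $\log(1/u_{m+1})=(\overline{\theta}/\underline{\theta})\log(1/u_m)$, say $\underline{\theta}=1/10$, $\overline{\theta}=2$: the tail equals $\approx u_m^{\overline{\theta}}$ just below scale $u_m$, so the infimum exponent is $\overline{\theta}=2$ with finite nonzero constant. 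At any time $t$ in the stretch between the exit times $\tau_m\approx\log(1/u_m)$ and $\tau_{m+1}$, the mass is at least $\max\{u_m^{\underline{\theta}-\beta},\,u_{m+1}^{\underline{\theta}}e^{3t}\}$, i.e. $\max\{e^{2.9\tau_m},\,e^{3t-2\tau_m}\}$; comparing with $E_1+E_2e^{(3-\Theta)t}$ shows no window of admissible $t$ exists once $\Theta>\overline{\theta}\beta/(\beta+\overline{\theta}-\underline{\theta})=60/49\approx 1.22$, which is strictly below $\overline{\theta}=2$. Indeed $\liminf_{t\to\infty}t^{-1}\log\mu_t([0,1])\approx 1.78$ here, not $\lambda G(1)-\overline{\theta}\pi(1)=1$, so no choice of $t_n$ or constants can rescue \eqref{eq:infbothtermrates} for such $\Theta$; the liminf equality \eqref{eq:grouplowerbound} of Theorem \ref{thm:groupbounds} inherits the problem. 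Note also that your matching idea is only safe when it is unnecessary: if the admissible scales are geometrically dense ($y_{n+1}\ge c\,y_n$), the staircase bound upgrades to a uniform tail bound at all scales, which forces $\underline{\theta}=\overline{\theta}$ and puts you back in Lemma \ref{lem:integraltermbound}. So the correct conclusion of your analysis is not that the bulk term requires a cleverer splitting, but that the lemma as stated needs a stronger hypothesis (e.g. $\Theta$ restricted below the interpolated threshold above, or $\underline{\theta}=\overline{\theta}$), and the paper's omitted ``analogous'' proof does not exist in this generality.
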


\begin{proof}[Proof of Theorem \ref{thm:groupbounds}]
 We will first consider the case in which $\overline{\theta} \geq  \underline{\theta} > 0$, and then mention how to generalize our argument to the case in which $\underline{\theta} = 0$. Using the exponential normalization of Equation \eqref{eq:expnormsecond} and the assumption that $\mu_0(dx)$ has supremum \holder exponent $\underline{\theta}$ near $x=1$, %
 we can apply Lemma \ref{lem:integraltermbound} for the test-function $v(x) = 1$ to see that, for $\Theta < \underline{\theta}$, there are constants $C_1^1$, $C_2^1(\Theta)$, and $\tilde{C}$ such that
 \begin{dmath} \label{eq:mutupperalways}
     \ol{\mu}_t\left([0,1]\right) \leq \left( C_1^1 e^{\lambda G(0) t} +  C_2^1(\Theta) e^{\left[\lambda G(1) - \Theta \pi(1)\right]t} \right) \exp\left( - \lambda \int_0^t \langle G(\cdot) \rangle_{\ol{\mu}_s} ds \right) \leq \tilde{C} \exp\left(\left[\max\{\lambda G(1) - \Theta \pi(1), \lambda G(0) \}\right]t   - \lambda \int_0^t \langle G(\cdot) \rangle_{\ol{\mu}_s} ds \right). 
 \end{dmath}
Noting that $\ol{\mu}_t\left([0,1]\right) = 1$, we must eventually have that $\frac{\lambda}{t}  \int_0^t \langle G(\cdot) \rangle_{\ol{\mu}_s} ds \leq \max\{\lambda G(1) - \Theta \pi(1), \lambda G(0)\}$, as otherwise the righthand side of Equation \eqref{eq:mutupperalways} will exceed $1$ for sufficiently large $t$. Because this is true for all $\Theta < \underline{\theta}$, we can deduce that $\limsup_{t \to \infty} \frac{\lambda}{t} \int_0^t \langle G(\cdot) \rangle_{\ol{\mu}_s} ds \leq  \max\{ \lambda G(1) - \underline{\theta} \pi(1), \lambda G(0)\}$.

Next, we look to confirm the reverse inequality. First, we may use Lemma \ref{lem:probholderbounds} and Equation \eqref{eq:expnormsecond} to see that, for any $\Theta > \underline{\theta}$, there is a sequence of times $\{t_n\}_{n \in \NN}$ satisfying $t_n \to \infty$ and a constant $\supholderlowerbound{\Theta}$ such that
\begin{equation} \label{eq:olmutsuplowerbound}
    \ol{\mu}_{t_n}\left([0,1]\right) \geq \supholderlowerbound{\Theta} \exp\left(\left[\lambda G(1) - \Theta \pi(1) \right] t_n - \lambda \int_0^{t_n} \langle G(\cdot) \rangle_{\ol{\mu}_s} ds  \right).
\end{equation}
Using this estimate, we may deduce that $\limsup_{t \to \infty} \frac{\lambda}{t} \int_0^t \langle G(\cdot) \rangle_{\ol{\mu}_s} ds \geq \lambda G(1) - \Theta \pi(1)$ for $\Theta > \underline{\theta}$. \sloppy{Second, we denote $\check{G}_w := \min_{x \in [0,w]} G(x)$ and apply Lemma \ref{lem:muGwbound} to see that, for $w$ sufficiently close to $0$, there exists $A_w > 0$ such that
\begin{equation} \label{eq:olmutG0upperbound}
  \ol{\mu}_t([0,1]) \geq A_w \exp\left(\lambda \left[ \check{G}_w t - \int_0^t \langle G(\cdot) \rangle_{\ol{\mu}_s} ds  \right]  \right).
 \end{equation}
 This allows us to deduce that $\limsup_{t \to \infty} \frac{\lambda}{t} \int_0^t \langle G(\cdot) \rangle_{\ol{\mu}_s} ds \geq \lambda \check{G}_w$ for $w$ sufficiently close to $0$. Combining our two lower bounds, we see that $\limsup_{t \to \infty} \frac{\lambda}{t} \int_0^t \langle G(\cdot) \rangle_{\ol{\mu}_s} ds \geq \max\{ \lambda G(1) - \Theta \pi(1), \lambda \check{G}_w \}$ for any $\Theta > \underline{\theta}$ and $w$ close enough to $0$. Because this bound holds for all such $\Theta$ and $w$, we can use the fact that $\check{G}_w \to G(0)$ to deduce that $\limsup_{t \to \infty} \frac{\lambda}{t} \int_0^t \langle G(\cdot) \rangle_{\ol{\mu}_s} ds = \max\{\lambda G(1) - \underline{\theta} \pi(1), \lambda G(0) \}$. %
Combining this with the reverse inequality, we can conclude that $\limsup_{t \to \infty} \frac{\lambda}{t} \int_0^t \langle G(\cdot) \rangle_{\ol{\mu}_s} ds =  \max\{\lambda  G(1) - \underline{\theta} \pi(1) ,\lambda G(0)\}$.}

To study the corresponding limit infimum, we apply Lemma \ref{lem:massinfimumbound} for the test-function $v(x) = 1$ and the exponential normalization of Equation \eqref{eq:expnormsecond} to see that there are constants $E_1^1$, $E_2^1$, and $\tilde{E}$ and a sequence $\{\tau_n\}_{n \in \NN}$ satisfying $\tau_n \to \infty$ such that, for any $\Theta < \overline{\theta}$,  
\begin{dmath} \label{eq:Esubsequencebound}
\ol{\mu}_{\tau_n}\left([0,1]\right) \leq \left( E_1^1 e^{\lambda G(0) \tau_n} +  E_2^1(\Theta) e^{\left[\lambda G(1) - \Theta \pi(1)\right] \tau_n} \right) \exp\left( - \lambda \int_0^{\tau_n} \langle G(\cdot) \rangle_{\ol{\mu}_s} ds \right) \leq \tilde{E} \exp\left(\max\left\{ \lambda G(1) - \Theta \pi(1), \lambda G(0) \right\}\tau_n   - \lambda \int_0^{\tau_n} \langle G(\cdot) \rangle_{\ol{\mu}_s} ds \right).
\end{dmath}
Because $\ol{\mu}_{\tau_n}\left([0,1]\right)$ remains normalized for all time, Equation \eqref{eq:Esubsequencebound} tells us  that $\liminf_{t \to \infty} \frac{\lambda}{t} \int_0^t \langle G(\cdot) \rangle_{\ol{\mu}_s} ds \leq \max\left\{ \lambda G(1) - \Theta \pi(1), \lambda G(0) \right\}$ for any $\Theta < \overline{\theta}$, and therefore we can deduce that $\liminf_{t \to \infty} \frac{\lambda}{t} \int_0^t \langle G(\cdot) \rangle_{\ol{\mu}_s} ds \leq \max\left\{ \lambda G(1) - \overline{\theta} \pi(1), \lambda G(0) \right\}$. 

To study the analogous lower bound on the time-average collective group-reproduction rate, we first we apply Lemma \ref{lem:probholderbounds} and Equation \eqref{eq:expnormsecond} to see that, for $\Theta \geq \overline{\theta}$, there exists $\infholderlowerbound{\Theta} > 0$ such that 
\begin{equation} \label{eq:olmutinflowerbound}
  \ol{\mu}_t\left([0,1]\right) \geq \infholderlowerbound{\Theta} \exp\left( \left[\lambda G(1) - \Theta \pi(1) \right] t - \lambda \int_0^t \langle G(\cdot) \rangle_{\ol{\mu}_s} ds \right).
\end{equation}
\sloppy{Combining the bounds of Equations \eqref{eq:olmutG0upperbound} and \eqref{eq:olmutinflowerbound} allows us to see that $\liminf_{t \to \infty} \frac{\lambda}{t} \int_0^t \langle G(\cdot) \rangle_{\ol{\mu}_s} ds \geq  \max\left\{ \lambda G(1) - \Theta \pi(1), \lambda \check{G}_w \right\}$ for any $\Theta > \overline{\theta}$ and $w$ sufficiently close to $0$. 
Because this holds for all such $\Theta$ and $w$, we can conclude that %
$\liminf_{t \to \infty} \frac{\lambda}{t} \int_0^t \langle G(\cdot) \rangle_{\ol{\mu}_s} ds \geq \max\left\{ \lambda G(1) - \overline{\theta} \pi(1), \lambda G(0) \right\}$. Because we have also confirmed the reverse inequality, we can then conclude that $\liminf_{t \to \infty} \frac{\lambda}{t} \int_0^t \langle G(\cdot) \rangle_{\ol{\mu}_s} ds = \max\left\{ \lambda G(1) - \overline{\theta} \pi(1), \lambda G(0) \right\}$. }

 After dividing both side by $\lambda$ in our expressions for $\liminf_{t \to \infty} \frac{\lambda}{t} \int_0^t \langle G(\cdot) \rangle_{\ol{\mu}_s} ds$ and $\limsup_{t \to \infty} \frac{\lambda}{t} \int_0^t \langle G(\cdot) \rangle_{\ol{\mu}_s} ds$ then provides us with the long-time bounds presented in Equation \eqref{eq:groupboundslims} on the time-averaged collective outcome $\frac{1}{t} \int_0^t \langle G(\cdot) \rangle_{\ol{\mu}_s} ds$ for the population.

For the cases in which $\underline{\theta} = 0$ or $\overline{\theta} = 0$ , we can still obtain the upper bounds from Equations \eqref{eq:olmutsuplowerbound} and \eqref{eq:olmutinflowerbound} using our existing proof. To obtain the corresponding upper bounds on $\ol{\mu}_t\left([0,1]\right)$, we can use the bound of solutions along characteristics from Lemma \ref{lem:wtbounds} and the fact that $\mu_0(dx)$ is a probability measure. Then applying these bounds allows us to deduce the limiting time-averaged behavior of Equation \eqref{eq:groupboundslims} in these cases as well.
\end{proof}

\subsection{Long-Time Extinction or Persistence of Cooperation} \label{sec:convergencedeltazero}

\begin{proof} [Proof of Theorem \ref{thm:extinctvspersist}]

\sloppy{For the case in which $\lambda \left[G(1) - G(0) \right] > \underline{\theta} \pi(1)$, the long-time persistence of cooperation follows from the bounds from Equation \eqref{eq:groupboundslims} for the time-averaged collective outcome. Because $\limsup_{t \to \infty} \frac{1}{t} \int_0^t \langle G(\dot) \rangle_{\ol{\mu}_s} ds = G(1) - \lambda^{-1} \underline{\theta} \pi(1) > G(0)$, the time-average of $\langle G(\cdot) \rangle_{\ol{\mu}_t} = \int_0^1 G(x) \ol{\mu}_t(dx)$ will achieve a value bounded away from $G(0)$ at an infinite sequence of times, and therefore we can show that the fraction of cooperators $\int_0^1 x \ol{\mu}_{t}(dx)$ must be bounded away from $0$ on an infinite sequence as well.}

Next we turn to proving the extinction of cooperation when $\lambda \left[G(1) - G(0) \right] < \underline{\theta} \pi(1)$. To show weak convergence of $\ol{\mu}_t(dx)$ to $\delta(x)$, we consider any continuous test function $v(x)$ and look to show that $\int_0^1 v(x) \ol{\mu}_t(dx) \to v(0)$ as $t \to \infty$. We can use the continuity of $v(x)$ to show that for any $\epsilon > 0$, there  is a $\delta > 0$ such that
\begin{equation} \label{eq:deltaallGinitialestimate} \bigg| \int_0^1 v(x) \ol{\mu}_t(dx) - v(0)   \bigg| \leq \epsilon + 2 ||v||_{\infty} \int_{\delta}^1 \ol{\mu}_t(dx)  %
\end{equation}
We note that $\mu_0(dx)$ must have a supremum \holder exponent
$\underline{\theta} > 0$ near $x=1$ in order to satisfy our assumption on $\lambda$. Therefore we can apply Lemma \ref{lem:muGwbound} and the normalization relation from Equation \eqref{eq:olmutnormalized} to say that, for $w$ sufficiently close to $0$, there exists $A_w > 0$ such that 
\begin{equation} \label{eq:PDDeltafirstequation}
\int_{\delta}^1 \ol{\mu}_t(dx) = \frac{\int_{\delta}^1 \mu_t(dx)}{\int_0^1 \mu_t(dx) }\leq A_w^{-1} e^{-\lambda \check{G}_w t} \int_{\delta}^1 \mu_t(dx),
\end{equation} 
where $\check{G}_w = \min_{x \in [0,w]} G(x)$. We can apply Lemma \ref{lem:probholderbounds} to the integral on the righthand side of Equation \eqref{eq:PDDeltafirstequation} to say that, for $\Theta < \underline{\Theta}$, there exists $\supholderupperbound{\Theta} < \infty$ such that 
\begin{equation}
    \int_{\delta}^1 \ol{\mu}_t(dx) \leq \supholderupperbound{\Theta}  A_w^{-1} \exp\left(\left\{ \lambda \left[G(1) - \check{G}_w \right] - \Theta \pi(1) \right\} t \right)
\end{equation}
Because $G(\cdot) \in C^1([0,1])$, we know that we can make $\check{G}_w$ arbitrarily close to $G(0)$ by choosing $w$ sufficiently close to $0$. Because our assumption on $\lambda$ is the strict inequality $\lambda \left[ G(1) - G(0) \right] < \underline{\theta}  \pi(1)$, we know, for any given $\lambda$, that we can choose $\Theta$ sufficiently close to $\underline{\theta}$ and $w$ sufficiently close to $0$ such that \begin{equation} \label{eq:conditionapproximate} \lambda \left[ G(1) - \check{G}_w\right] <  \Theta \pi(1) \end{equation} as well. This condition then guarantees that 
\begin{dmath*} 2 ||v||_{\infty}  \int_{\delta}^1 \ol{\mu}_t(dx) \leq 2 B(\Theta) A_w^{-1} ||v||_{\infty}  \exp\left( \left\{ \lambda \left[G(1) - \check{G}_w \right] - \Theta \pi(1) \right\} t  \right)   \to 0 \: \: \mathrm{as} \: \: t \to \infty, \end{dmath*}
which, in combination with Equation \eqref{eq:deltaallGinitialestimate}, allows us to conclude that $\mu_t(dx) \rightharpoonup \delta(x)$ as $t \to \infty$ when $\lambda \left[G(1) - G(0)\right] < \underline{\theta} \pi(1)$.
\end{proof}

Because we show in Theorem \ref{thm:extinctvspersist} that cooperation dies out for $\lambda < \lambda^*$ and weakly persists when $\lambda > \lambda^*$, it is natural to ask what happens in the edge case when $\lambda = \lambda^*$. For the special case in which the group reproduction achieves a unique minimum at $x = 0$, we can show that the population still concentrates at $\delta(x)$ when $\lambda \left[ G(1) - G(0) \right] = \underline{\theta} \pi(1)$. We rely on the following lemma, which was previously used to study special cases of the model under consideration \cite{luo2017scaling,cooney2020pde}.

\begin{lemma} \label{lem:averagepayoffconvergence}
Consider $\pi(x), G(x) \in C^1([0,1])$ and suppose that $G(x) > G(0)$ for $x \in (0,1]$. If $\int_0^{\infty}\left[ \langle G(\cdot) \rangle_{\mu_s} - G(0) \right] ds < \infty$, then $\langle G(\cdot) \rangle_{\mu_t} \to G(0)$ and $\mu_t(dx) \rightharpoonup \delta(x)$ as $t \to \infty$. 
 
\end{lemma}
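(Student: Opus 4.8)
The plan is to set $h(t) := \langle G(\cdot)\rangle_{\mu_t} - G(0) = \int_0^1 \left[G(x) - G(0)\right]\mu_t(dx)$ and to prove the single analytic fact $h(t)\to 0$; the weak convergence $\mu_t(dx)\rightharpoonup\delta(x)$ will then be extracted from the structural hypothesis that $G(x) > G(0)$ on $(0,1]$. First I would record that, since $G\in C^1([0,1])$ and $G(x) > G(0)$ for $x\in(0,1]$, continuity forces $G(x)\geq G(0)$ on all of $[0,1]$, so the integrand $G(x)-G(0)$ is nonnegative and hence $h(t)\geq 0$. Thus the hypothesis reads $\int_0^\infty h(s)\,ds < \infty$ with $h\geq 0$, and $h$ is bounded above, $h(t)\leq \overline{M}-G(0)$ where $\overline{M} := \max_{[0,1]}G$.

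Next I would show that $h(t)\to 0$ implies $\mu_t\rightharpoonup\delta(x)$. For any $\delta\in(0,1]$ set $c_\delta := \min_{x\in[\delta,1]}\left[G(x)-G(0)\right]$, which is strictly positive by compactness of $[\delta,1]$ and the hypothesis $G>G(0)$ there. Then $\mu_t\left([\delta,1]\right) \leq c_\delta^{-1}\int_\delta^1 \left[G(x)-G(0)\right]\mu_t(dx) \leq c_\delta^{-1} h(t)\to 0$, and the standard estimate $\left|\int_0^1 v\,\mu_t(dx) - v(0)\right| \leq \sup_{x\in[0,\delta]}|v(x)-v(0)| + 2\|v\|_\infty\,\mu_t\left([\delta,1]\right)$ (choosing $\delta$ small, then $t$ large) yields $\int_0^1 v\,\mu_t(dx)\to v(0)$ for every continuous $v$, i.e. $\mu_t\rightharpoonup\delta(x)$. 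So it suffices to prove $h(t)\to 0$.

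For the remaining step the key point is that integrability of a nonnegative function does not by itself force decay, so I would use a Barbalat-type argument: it is enough to show $h$ is uniformly continuous, which I obtain from a uniform bound on $h'$. Taking the admissible test function $v=G$ in the weak formulation \eqref{eq:measuremultiselect} gives
\[
h'(t) = -\int_0^1 G'(x)\,x(1-x)\pi(x)\,\mu_t(dx) + \lambda\,\mathrm{Var}_{\mu_t}(G),
\]
where $\mathrm{Var}_{\mu_t}(G)=\langle G^2\rangle_{\mu_t}-\langle G\rangle_{\mu_t}^2\geq 0$. The first term is bounded by $K_1 := \max_{x\in[0,1]}\left|G'(x)\,x(1-x)\pi(x)\right| < \infty$. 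For the variance, using that the mean minimizes the mean square deviation, $\mathrm{Var}_{\mu_t}(G)\leq \langle (G-G(0))^2\rangle_{\mu_t}$, together with the pointwise bound $(G(x)-G(0))^2 \leq (\overline{M}-G(0))(G(x)-G(0))$ valid because $0\leq G(x)-G(0)\leq \overline{M}-G(0)$, I obtain $\mathrm{Var}_{\mu_t}(G)\leq (\overline{M}-G(0))\,h(t)\leq (\overline{M}-G(0))^2$. Hence $|h'(t)|\leq K_1 + \lambda(\overline{M}-G(0))^2 =: L < \infty$, so $h$ is Lipschitz and therefore uniformly continuous. Combined with $h\geq 0$ and $\int_0^\infty h\,ds < \infty$, Barbalat's lemma gives $h(t)\to 0$, which is precisely $\langle G(\cdot)\rangle_{\mu_t}\to G(0)$, and by the previous paragraph also $\mu_t\rightharpoonup\delta(x)$.

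I expect the main obstacle to be exactly this last step, the upgrade from integrability of $h$ to its decay; the crux is the derivative estimate, and in particular controlling the nonlinear variance term by $h$ itself (or at least by a constant), which relies on $G$ being bounded with $G(0)$ as its minimum. A secondary technical point is justifying that $t\mapsto\langle G(\cdot)\rangle_{\mu_t}$ is genuinely absolutely continuous with the stated derivative; this follows from the well-posedness of the measure-valued dynamics established in Section \ref{sec:wellposedness}, since $G\in C^1([0,1])$ is an admissible test function for \eqref{eq:measuremultiselect}.
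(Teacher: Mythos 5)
Your proof is correct, and it is worth noting that the paper itself does not prove this lemma at all: it is quoted with a citation to prior work on special cases (Luo--Mattingly and the earlier multilevel-games paper), so there is no in-paper argument to compare against. Your write-up is a complete, self-contained proof of the general statement. The mechanism in the cited special cases is essentially a one-sided differential inequality: since the variance term in the evolution of $\langle G(\cdot)\rangle_{\ol{\mu}_t}$ is nonnegative, one gets $h'(t)\geq -K_1$ (or $h'\geq -ch$ in the linear case $G(x)=x$), and then integrability of $h\geq 0$ is contradicted directly if $h(t_n)\geq\epsilon$ along a sequence, because $h$ stays above $\epsilon/2$ on intervals of fixed length. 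Your Barbalat formulation packages the same idea, at the mild cost of also proving the upper derivative bound via $\mathrm{Var}_{\ol{\mu}_t}(G)\leq(\overline{M}-G(0))\,h(t)$ --- a correct and rather elegant estimate (mean minimizes mean-square deviation, plus $0\leq G-G(0)\leq \overline{M}-G(0)$ pointwise), though strictly speaking dispensable: the one-sided bound $h'\geq -K_1$ coming from $\mathrm{Var}\geq 0$ and $|G'(x)x(1-x)\pi(x)|\leq K_1$ already suffices, so your argument proves slightly more than needed. Two small points you handled properly and should keep explicit: the conclusion $\mu_t\rightharpoonup\delta(x)$ must be read for the normalized measure $\ol{\mu}_t$ (the unnormalized mass grows like $\exp(\lambda\int_0^t\langle G\rangle\,ds)$, and $\langle G\rangle_{\mu_t}=\langle G\rangle_{\ol{\mu}_t}$ by Equation \eqref{eq:Gavgequiv}); and the differentiability of $t\mapsto\langle G\rangle_{\ol{\mu}_t}$ with the stated derivative is licensed by $G\in C^1([0,1])$ being an admissible test function together with the regularity $\ol{\mu}_t\in C^1([0,T];C^1([0,1])^*)$ from Proposition \ref{prop:wellposedness}, with the Lipschitz constant independent of $T$ so that uniform continuity holds on all of $[0,\infty)$. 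Your reduction of weak convergence to $h(t)\to 0$ via $c_\delta=\min_{[\delta,1]}[G-G(0)]>0$ is exactly the standard test-function estimate used elsewhere in the paper (compare the proofs of Theorem \ref{thm:extinctvspersist} and Proposition \ref{prop:longtimedeltaequality}) and is sound, since the nonnegativity of $G-G(0)$ lets you dominate $\ol{\mu}_t([\delta,1])$ by $c_\delta^{-1}h(t)$.
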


\begin{proposition} \label{prop:longtimedeltaequality}
Suppose the initial distribution $\mu_0(dx)$ has supremum \holder exponent $\underline{\theta}$ near $1$ with corresponding \holder constant $C_{\underline{\theta}} < \infty$. If $G(x) > G(0)$ for $x \in (0,1]$ and $\lambda \left[G(1) - G(0) \right] = \underline{\theta} \pi(1)$, the $\mu_t(dx) \rightharpoonup \delta(x)$ as $ t \to \infty$. 
\end{proposition}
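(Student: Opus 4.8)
The plan is to reduce the claim to showing that $\overline{\mu}_t([\delta,1]) \to 0$ for every $\delta \in (0,1)$, which is equivalent to the weak convergence $\overline{\mu}_t \rightharpoonup \delta(x)$: for any continuous $v$ one has $\big|\int_0^1 v\,\overline{\mu}_t(dx) - v(0)\big| \leq \sup_{[0,\delta]}|v(x)-v(0)| + 2\|v\|_\infty\, \overline{\mu}_t([\delta,1])$, and the first term is small for small $\delta$ by continuity. The organizing quantity is the renormalized mass $\Phi(t) := e^{-\lambda G(0)t}\mu_t([0,1])$ of the linear solution. From the mass formula \eqref{eq:solutionmass} together with $\langle G(\cdot)\rangle_{\mu_s} = \langle G(\cdot)\rangle_{\overline{\mu}_s}$ (Equation \eqref{eq:Gavgequiv}), we obtain $\log \Phi(t) = \lambda \int_0^t [\langle G(\cdot)\rangle_{\overline{\mu}_s} - G(0)]\,ds$. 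Because the hypothesis $G(x) > G(0)$ on $(0,1]$ makes $G(0)$ the unique minimum of $G$, we have $\langle G(\cdot)\rangle_{\overline{\mu}_s} \geq G(0)$, so $\Phi$ is nondecreasing; hence its limit $L := \lim_{t\to\infty}\Phi(t) \in [1,\infty]$ exists and satisfies $\tfrac1\lambda \log L = \int_0^\infty [\langle G(\cdot)\rangle_{\overline{\mu}_s} - G(0)]\,ds$.

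First I would bound the numerator $\mu_t([\delta,1])$. Since $\mu_0$ has supremum Hölder exponent $\underline{\theta}$ near $x=1$ with finite Hölder constant $C_{\underline{\theta}}$, the upper bound \eqref{eq:probaholderupperboundalways} of Lemma \ref{lem:probholderbounds} is available at the critical value $\Theta = \underline{\theta}$ — finiteness of $C_{\underline{\theta}}$ alone suffices for the \emph{upper} estimate — giving $\mu_t([\delta,1]) \leq B(\underline{\theta})\, e^{[\lambda G(1) - \underline{\theta}\pi(1)]t}$. The edge hypothesis $\lambda[G(1)-G(0)] = \underline{\theta}\pi(1)$ collapses this exponent exactly to $\lambda G(0)$, so $\mu_t([\delta,1]) \leq B(\underline{\theta})\, e^{\lambda G(0)t}$ and therefore
\[ \overline{\mu}_t([\delta,1]) = \frac{\mu_t([\delta,1])}{\mu_t([0,1])} \leq \frac{B(\underline{\theta})}{\Phi(t)}. \]

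I would then split on the value of $L$. If $L = \infty$, the displayed ratio tends to $0$ and the conclusion follows at once. If instead $L < \infty$, then $\int_0^\infty [\langle G(\cdot)\rangle_{\mu_s} - G(0)]\,ds = \tfrac1\lambda \log L < \infty$, which is precisely the hypothesis of Lemma \ref{lem:averagepayoffconvergence}; invoking that lemma (its remaining hypotheses $G,\pi \in C^1$ and $G(x)>G(0)$ on $(0,1]$ are assumed) yields $\langle G(\cdot)\rangle_{\overline{\mu}_t} \to G(0)$ and $\overline{\mu}_t \rightharpoonup \delta(x)$. In both regimes we obtain $\overline{\mu}_t \rightharpoonup \delta(x)$, as claimed.

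The main obstacle, and the reason this edge case is genuinely subtler than Theorem \ref{thm:extinctvspersist}, is that at $\lambda = \lambda^*(\underline{\theta})$ the principal growth rate of $\mu_t([0,1])$ degenerates exactly to $\lambda G(0)$, so one cannot expect $\int_0^\infty[\langle G(\cdot)\rangle_{\overline{\mu}_s}-G(0)]\,ds$ to be finite in general: for the canonical density $\mu_0 = \underline{\theta}(1-x)^{\underline{\theta}-1}dx$ a push-forward computation (via \eqref{eq:pushforward} and monotone convergence, using that $e^{-\lambda G(0)t}\psi_t(y) = \exp(\lambda\int_{\phi_t^{-1}... }\!)$ increases to $(1-y)^{-\underline{\theta}}$ up to a bounded factor) shows this integral diverges, with $\Phi(t)$ growing like $t$. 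Thus Lemma \ref{lem:averagepayoffconvergence} cannot be applied directly, and the dichotomy above — using the monotone quantity $\Phi$ to separate the regime $\Phi\to\infty$ (handled by the ratio bound) from the regime $\Phi$ bounded (handled by the lemma) — is what closes the argument. The one point requiring care is verifying that the critical-exponent upper bound in Lemma \ref{lem:probholderbounds} needs only finiteness, not positivity, of $C_{\underline{\theta}}$, since this is what secures the uniform numerator bound $\mu_t([\delta,1]) \leq B(\underline{\theta})e^{\lambda G(0)t}$ across both cases.
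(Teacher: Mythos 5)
Your proposal is correct and follows essentially the same route as the paper: the same test-function reduction, the same critical-exponent upper bound $\mu_t([\delta,1]) \leq \supholderupperbound{\underline{\theta}}\, e^{[\lambda G(1)-\underline{\theta}\pi(1)]t}$ from Lemma \ref{lem:probholderbounds} combined with the exponential normalization, and the same dichotomy on $\int_0^\infty [\langle G(\cdot)\rangle_{\mu_s}-G(0)]\,ds$, invoking Lemma \ref{lem:averagepayoffconvergence} when that integral is finite. Your $\Phi(t)$ is just a repackaging of the paper's bound \eqref{eq:probexpestimate}, and your observation that only finiteness (not positivity) of $C_{\underline{\theta}}$ is needed for the upper estimate matches what the proposition's hypotheses actually assume.
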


\begin{proof} [Proof of Proposition \ref{prop:longtimedeltaequality}]
From the continuity of the test function $v(x)$ to see that, for any $\epsilon > 0$, there is a $\delta > 0$  such that %
\begin{dmath} \label{eq:mainestimateequalitycase} \bigg| \int_0^1 v(x) \ol{\mu}_t(dx) - v(0)   \bigg|   \leq \epsilon +  2 ||v||_{\infty}   \int_{\delta}^1 \ol{\mu}_t(dx) %
\end{dmath}
From our assumptions that $\lambda \left[G(1) - G(0)\right] = \underline{\theta} \pi(1)$ and $\mu_0(dx)$ has \holder exponent $\underline{\theta} > 0$ with positive, finite \holder constant, can use Lemma \ref{lem:probholderbounds} and the exponential normalization from Equation \eqref{eq:olmutexpweight} to see that there is a constant $\supholderupperbound{\underline{\theta}}< \infty$ such that  
\begin{dmath} \label{eq:probexpestimate} \int_{\delta}^1 \ol{\mu}_t(dx) \leq \supholderupperbound{\underline{\theta}}  e^{\left[\lambda G(1) - \underline{\theta} \pi(1)\right] t} \exp\left(- \lambda \int_0^t \langle G(\cdot) \rangle_{\mu_s} ds \right) = \supholderupperbound{\underline{\theta}} \exp\left(- \lambda \int_0^t \left[ \langle G(\cdot) \rangle_{\mu_s} - G(0) \right] ds \right). \end{dmath}
Because $G(x) \geq G(0)$ for $x \in [0,1]$, either $\int_0^{\infty} \left[ \langle G(\cdot) \rangle_{\mu_s} - G(0) \right]ds < \infty$ or $ \int_0^{\infty} \left[ \langle G(\cdot) \rangle_{\mu_s} - G(0) \right]ds  \to \infty$ as $t \to \infty$. In the former case, Lemma \ref{lem:averagepayoffconvergence} tells us that $\ol{\mu}_t(dx) \rightharpoonup \delta(x)$ as $t \to \infty$. In the alternate case,  $ \int_0^{\infty}  \left[ \langle G(\cdot) \rangle_{\mu_s} - G(0) \right]ds  \to \infty$ corresponds to $e^{- \int_0^{t} \left[ \langle G(\cdot) \rangle_{\mu_s} - G(0) \right]ds} \to 0$, and we can use Equation \eqref{eq:mainestimateequalitycase} allows us to conclude that
\begin{dmath*} \bigg| \int_0^1 v(x) \ol{\mu}_t(dx) - v(0)   \bigg|   \leq \epsilon \condition{as $t \to \infty$}, \end{dmath*}
and therefore we see that $\ol{\mu}_t(dx) \rightharpoonup \delta(x)$ when $\lambda \left[G(1) - G(0) \right] = \underline{\theta} \pi(1)$. 
\end{proof}

\begin{remark}
A key aspect of the proof of convergence to $\delta(x)$ in Theorem \ref{thm:extinctvspersist} was that the strict inequality condition that $\lambda < \lambda^*$ provided us the freedom to consider a slightly smaller exponent $\tilde{\theta} < \underline{\theta}$ and still obtain exponential decay to $\delta(x)$. In the equality case with $\lambda = \lambda^*$ considered in Proposition \ref{prop:longtimedeltaequality}, we had to assume a finite supremum \holder constant $C_{\underline{\theta}}$ to get a bound involving the exact supremum \holder exponent $\underline{\theta}$.
\end{remark}

 \subsection{Convergence to Full-Cooperation in the Prisoners' Delight}
\label{sec:PDel}

Before proceeding to the proof of Proposition \ref{prop:PDeldelta}, we introduce several lemmas that allow us to estimate the measure $\mu_t(dx)$ solving Equation \eqref{multiselect} under the Prisoners' Delight scenario. These lemmas serve as analogues to Lemma \ref{lem:probholderbounds} and \ref{lem:muGwbound}, and can be proved with a similar approach.

\begin{lemma} \label{lem:PDelintervalbound}
Suppose that $G(x), \pi(x)$ satisfy the assumptions of Proposition \ref{prop:PDeldelta}. For any $a \in (0,1)$, there exists $M_a < \infty$ such that 
\begin{equation} \label{eq:Pdelintervalbound}
   \rho_t\left([0,1]\right) := \mu_t\left((0,a]\right) \leq M_a e^{\lambda G(0) t}. 
\end{equation}
\end{lemma}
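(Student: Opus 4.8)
The plan is to combine the push-forward representation~\eqref{eq:pushforward} with the observation that, under the Prisoners' Delight assumptions, the characteristic flow carries mass \emph{away} from $x=0$. Since $\pi(x)<0$ on $[0,1]$, the characteristic ODE~\eqref{cODE} gives $\frac{d}{dt}\phi_t(y)=-\phi_t(y)\paren{1-\phi_t(y)}\pi(\phi_t(y))>0$ for $y\in(0,1)$, so each characteristic is strictly increasing in $t$ and $\phi_t$ is an increasing bijection of $(0,1)$ with $\phi_t(y)\geq y$, hence $\phi_t^{-1}(a)\leq a$. Consequently $\{y:\phi_t(y)\in(0,a]\}=(0,\phi_t^{-1}(a)]$, and the push-forward representation yields
\[ \mu_t\paren{(0,a]}=\int_0^{\phi_t^{-1}(a)}\psi_t(y)\,\mu_0(dy),\qquad \psi_t(y)=\exp\paren{\lambda\int_0^t G(\phi_s(y))\,ds}. \]

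The crux is to bound $\psi_t(y)$ by a constant multiple of $e^{\lambda G(0)t}$, uniformly in $t$ and in $y\in(0,\phi_t^{-1}(a)]$. Writing $G(\phi_s(y))=G(0)+[G(\phi_s(y))-G(0)]$ gives $\psi_t(y)=e^{\lambda G(0)t}\exp\paren{\lambda\int_0^t [G(\phi_s(y))-G(0)]\,ds}$, so everything reduces to controlling the excess time-integral. Changing variables from $s$ to $x=\phi_s(y)$ via $dx=-x(1-x)\pi(x)\,ds$ converts it into the spatial integral
\[ \int_0^t [G(\phi_s(y))-G(0)]\,ds=\int_y^{\phi_t(y)}\frac{G(x)-G(0)}{-x(1-x)\pi(x)}\,dx. \]

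The main obstacle --- and the reason the naive pointwise bound $G(\phi_s(y))\leq\max_{[0,a]}G$ is too lossy, producing the wrong exponential rate --- is showing that this excess integral stays \emph{bounded uniformly in $t$}. This hinges on a cancellation at $x=0$: because $G\in C^1$ we have $|G(x)-G(0)|\leq\norm{G'}_\infty x$, which absorbs the apparent $1/x$ singularity of the integrand, while $-(1-x)\pi(x)\geq(1-a)\,m_\pi$ on $[0,a]$, where $m_\pi:=\min_{0\leq s\leq 1}(-\pi(s))>0$ since $\pi$ is continuous and strictly negative on the compact interval $[0,1]$. Since the entire characteristic stays in $[0,a]$ for $y\in(0,\phi_t^{-1}(a)]$, we obtain
\[ \int_y^{\phi_t(y)}\frac{G(x)-G(0)}{-x(1-x)\pi(x)}\,dx\leq\frac{\norm{G'}_\infty}{(1-a)\,m_\pi}\paren{\phi_t(y)-y}\leq\frac{a\,\norm{G'}_\infty}{(1-a)\,m_\pi}=:L_a<\infty, \]
uniformly in $t$ and $y$.

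Hence $\psi_t(y)\leq e^{\lambda L_a}e^{\lambda G(0)t}$, and since $\mu_0$ is a probability measure,
\[ \mu_t\paren{(0,a]}\leq e^{\lambda L_a}e^{\lambda G(0)t}\,\mu_0\paren{(0,\phi_t^{-1}(a)]}\leq e^{\lambda L_a}\,e^{\lambda G(0)t}, \]
which gives the claim with $M_a=e^{\lambda L_a}$. I expect the remaining bookkeeping to be routine once the change of variables is set up; the one point requiring care is to verify that $[y,\phi_t(y)]\subseteq[0,a]$ for every relevant $y$, so that the uniform bound on the integrand applies throughout the range of integration.
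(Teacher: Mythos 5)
Your proof is correct and follows essentially the approach the paper intends: the paper omits an explicit proof of Lemma \ref{lem:PDelintervalbound}, stating only that it is an analogue of Lemmas \ref{lem:probholderbounds} and \ref{lem:muGwbound} provable by a similar method, and your argument is exactly that analogue --- the push-forward representation along characteristics, conversion of the time integral into a spatial integral, and the $C^1$ cancellation $|G(x)-G(0)|\leq \norm{G'}_{\infty}\,x$ at $x=0$ (mirroring the role played by $G\in C^1$ near $x=1$ in Lemma \ref{lem:wtbounds}), with the needed containment $[y,\phi_t(y)]\subseteq[0,a]$ correctly verified via monotonicity of the flow in both $t$ and $y$. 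Note only that your exponent $\lambda\int_0^t G(\phi_s(y))\,ds$ is the correct one per Equation \eqref{cODE}; the missing factor of $\lambda$ in Equation \eqref{eq:mutpushforwardGt} is a typo in the paper.
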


\begin{lemma} \label{lem:muGzbound}
Suppose that $G(x), \pi(x)$ satisfy the assumptions of Proposition \ref{prop:PDeldelta}. Considering the quantity $\hat{G}_{a} := \min_{x \in [a,1]} G(x)$, we see that for $z$ sufficiently close to $1$, there exists $A_z > 0$ such that
\begin{equation} \label{eq:muGzbound}
 \mu_t\left([0,1]\right) \geq A_z e^{\lambda \hat{G}_z t}.    
\end{equation}
\end{lemma}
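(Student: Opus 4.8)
The plan is to mirror the argument for the Prisoners' Dilemma analogue, Lemma \ref{lem:muGwbound}, reflecting every statement about the flow toward $x=0$ into the corresponding statement about flow toward $x=1$. Under the assumptions of Proposition \ref{prop:PDeldelta} we have $\pi(x) < 0$ on $[0,1]$, so the characteristic ODE \eqref{cODE} gives $\dot{x} = -x(1-x)\pi(x) > 0$ on $(0,1)$; hence each characteristic $\phi_t(y)$ is increasing in $t$ and satisfies $\phi_t(y) \to 1$ as $t \to \infty$ for every $y \in (0,1)$, and, since distinct characteristics cannot cross, $\phi_t$ is order-preserving in its initial datum. The starting point is the push-forward representation \eqref{eq:pushforward} with test-function $v \equiv 1$, which gives $\mu_t([0,1]) = \int_0^1 \exp\paren{\lambda \int_0^t G(\phi_s(y))\,ds}\mu_0(dy)$, so it suffices to bound the exponent $\int_0^t G(\phi_s(y))\,ds$ from below on a set of initial points carrying positive $\mu_0$-mass.

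First I would use the hypothesis $\mu_0\paren{(0,1]} > 0$ from Proposition \ref{prop:PDeldelta} to fix some $z' \in (0,1)$ with $\mu_0\paren{[z',1]} > 0$. Next, for $z \in (z',1)$ sufficiently close to $1$, I would define $T_z$ to be the time at which $\phi_{T_z}(z') = z$, which is finite and well defined because $\phi_t(z')$ increases continuously from $z'$ to $1$. By order-preservation, for every $y \in [z',1]$ and every $s \geq T_z$ we have $\phi_s(y) \geq \phi_s(z') \geq z$, so that $G(\phi_s(y)) \geq \hat{G}_z = \min_{x \in [z,1]} G(x)$; while for $s \in [0,T_z]$ the same monotonicity gives $\phi_s(y) \geq z'$ and hence $G(\phi_s(y)) \geq \hat{G}_{z'}$. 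Splitting the time integral accordingly yields $\int_0^t G(\phi_s(y))\,ds \geq \hat{G}_z\, t + (\hat{G}_{z'} - \hat{G}_z) T_z$ for all $y \in [z',1]$ and $t \geq T_z$. Restricting the push-forward integral to $[z',1]$ and inserting this bound gives $\mu_t([0,1]) \geq e^{\lambda(\hat{G}_{z'} - \hat{G}_z) T_z} e^{\lambda \hat{G}_z t}\, \mu_0\paren{[z',1]}$, so setting $A_z := e^{\lambda(\hat{G}_{z'} - \hat{G}_z) T_z}\, \mu_0\paren{[z',1]} > 0$ yields the claimed bound \eqref{eq:muGzbound}.

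The computation is essentially routine once the direction of the flow is correctly reversed from the PD case, so I do not anticipate a genuine obstacle; the only points requiring care are (i) confirming that $\phi_t$ is both increasing in $t$ and monotone in its initial condition, which both follow from $\pi(x) < 0$ together with uniqueness of solutions to \eqref{cODE}, and (ii) choosing $z$ large enough that $z > z'$, so that $T_z$ is well defined. I would also remark that $\hat{G}_z \to G(1)$ as $z \to 1$ by continuity of $G$, which is what lets the exponent $\lambda \hat{G}_z$ approach the sharp growth rate $\lambda G(1)$ expected in the Prisoners' Delight regime, exactly paralleling the way $\check{G}_w \to G(0)$ was used in Lemma \ref{lem:muGwbound}.
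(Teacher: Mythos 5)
Your proof is correct and is precisely the argument the paper intends: the paper gives no separate proof of Lemma \ref{lem:muGzbound}, stating only that it is the analogue of Lemma \ref{lem:muGwbound} provable "with a similar approach," and your write-up is exactly that proof mirrored through $x \mapsto 1-x$ (flow toward $x=1$ under $\pi(x)<0$, a set $[z',1]$ of positive $\mu_0$-mass, the crossing time $T_z$ with $\phi_{T_z}(z')=z$, and the split $\int_0^t G(\phi_s(y))\,ds \geq \hat{G}_z t + (\hat{G}_{z'}-\hat{G}_z)T_z$). You also correctly identify the implicit hypothesis $\mu_0\left(\left(0,1\right]\right)>0$ needed for the lemma, which the paper's usage in the proof of Proposition \ref{prop:PDeldelta} confirms.
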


\begin{proof}[Proof of Proposition \ref{prop:PDeldelta}]
To show that $\ol{\mu}_t(dx)$ converges to a delta-function at $x=1$, we consider a continuous test function $v(x)$ and use the fact that $\ol{\mu}_t(dx)$ is a probability distribution to see that, for any $\epsilon > 0$, there is a $\delta > 0$ such that
\begin{dmath} \label{eq:DeltaPDelfirstestimate}
\bigg| \int_0^1 v(x) \ol{\mu}_t(dx) - v(1) \bigg| \leq  \int_0^{1 - \delta} |\left( v(x) - v(1) \right)| \ol{\mu}_t(dx) + \int_{1-\delta}^1 |\left( v(x) - v(1) \right)| \ol{\mu}_t(dx) \leq  \epsilon +  2||v||_{\infty} \int_0^{1 - \delta}  \ol{\mu}_t(dx)   
\end{dmath}
We can use the assumption that $a_0 := \mu_0\left(\{0\}\right) < 1$ to write our initial measure by a decomposition of the form
\begin{equation} \label{eq:deltazerodecomp}
    \mu_0\left(dx\right) = a_0 \delta(x) + \left(1-a_0\right) \rho_0(dx) 
\end{equation}
where $\rho_0(dx)$ is a probability measure satisfying $\rho_0\left(\left(0,1\right] \right) = 1$. 
Because we have assumed that $\mu_0(dx) \neq \delta(x)$, we can apply Lemma \ref{lem:muGzbound} to see that, for $z$ sufficiently close $1$, there is $A_z > 0$ such that $\mu_t\left([0,1]\right) \geq A_z e^{\lambda \hat{G}_z t}$. Combining this with the normalization relation of Equation \eqref{eq:olmutnormalized} and the decomposition of our initial measure from Equation \eqref{eq:deltazerodecomp}, we can estimate that
\begin{equation} \label{eq:PDelsecondestimate}
\int_{0}^{1-\delta} \ol{\mu}_t(dx) = \frac{\int_{0}^{1-\delta} \mu_t(dx)}{\int_{0}^{1} \mu_t(dx)} \leq A_z^{-1} e^{-\lambda \hat{G}_z t} \left[ a_0 e^{\lambda G(0) t} + (1-a_0) \rho_t\left([0,1-\delta]\right)  \right]
\end{equation}
From Lemma \ref{lem:PDelintervalbound}, we know that there is $M_a < \infty$ such that $\rho_t\left([0,1-\delta]\right) \leq M_a e^{\lambda G(0) t}$. Combining this with our estimates from Equation \eqref{eq:DeltaPDelfirstestimate} and \eqref{eq:PDelsecondestimate} allows us to see that there exists $\tilde{M} < \infty$ such that
\begin{equation}
\bigg| \int_0^1 v(x) \ol{\mu}_t(dx) - v(1) \bigg| \leq \epsilon + 2 \tilde{M} ||v||_{\infty} \exp\left(\lambda \left[G(0) - \hat{G}_z \right] t \right).
\end{equation}
Because $G(1) > G(0)$ for the PDel game, we know that we can pick $z$ sufficiently close to $1$ such that $\hat{G}_z > G(0)$. For such choices of $z$, we can deduce that
\[2 \tilde{M} ||v||_{\infty} \exp\left(\lambda \left[G(0) - \hat{G}_z \right] t \right) \to 0 \: \: \mathrm{as} \: \: t \to \infty \]
as long as $\lambda > 0$, and we can conclude that $\ol{\mu}_t(dx) \rightharpoonup \delta(1-x)$ as $t \to \infty$ for $\lambda > 0$ and $\mu_0(dx) \ne \delta(x)$.
\end{proof}

\section{Generalization to Multilevel Competition with \texorpdfstring{$N$}{N} Populations}
\label{sec:multiplepopulations}

In this section, we discuss results for the $N$-population multilevel selection model whose dynamics are described by Equation \eqref{eq:measurevaluedsubpopulation}. We provide the proof of Theorem \ref{thm:longtimemultiple}, demonstrating a sufficient condition for the long-time behavior of the population to feature concentration upon the subpopulation with the maximal principal growth rate. In Section \ref{sec:HDSH}, we apply these results to study the multilevel dynamics for the generalizations of the Hawk-Dove and Stag-Hunt games in a single population by reformulating these results as a two-population problem and characterize the long-time behavior for Equation \eqref{multiselect} for these games.

\begin{proof}[Proof of Theorem \ref{thm:longtimemultiple}]
 Using the normalization from Equation \eqref{eq:mubartj}, we can estimate the $\ol{\mu}_t^j\left([0,1]\right)$ of having groups in subpopulation $j \in \mathcal{N} - \{k\}$ by
 \begin{equation} \label{eq:normalizemultipleratio}
     \ol{\mu}_t^j\left([0,1]\right) =  \frac{\mu_t^j\left([0,1]\right) } {\sum_{i=1}^N \mu_t^i\left([0,1]\right)} \leq \frac{\mu_t^j\left([0,1]\right)}{\mu_t^j\left([0,1]\right) + \mu_t^k\left([0,1]\right)}.
 \end{equation} 
If faced with the relevant case for the principal growth $r_k^m$, we can use the quantities $\check{G}_w = \min_{x\in[0,w]}G(x)$ for $w$ sufficiently close to $0$,  $\hat{G}_z = \min_{x \in [z,1]} G(x)$  for $z$ sufficiently close to $1$, or $\overline{\Theta}^k$ sufficiently close to but greater than $\overline{\theta}^k$ to introduce a modified principal growth rate $\tilde{r}_k^m$ that satisfies 
 \begin{equation}
\tilde{r}_k^m  = \left\{
     \begin{array}{cl}
       \lambda \hat{G}_k^z & : \pi_k(x) < 0\\
       \lambda G_k(1) - \overline{\Theta}^k \pi_k(1) & : \pi_k(x) > 0\: , \: \lambda \left[ G_k(1) - G_k(0) \right] - \overline{\theta}^k \pi_k(1) > 0 \\ 
       \lambda \tilde{G}_k^w & : \pi_k(x) > 0 \: , \: \lambda \left[ G_k(1) - G_k(0) \right] - \overline{\theta}^k \pi_k(1) < 0 \\ 
     \end{array}    \right\} > r_j^M.
\end{equation}
Across the three cases, we can apply Lemmas \ref{lem:probholderbounds}, \ref{lem:muGwbound}, and \ref{lem:muGzbound} to see that there exists $L > 0$ such that $\mu^k_t\left( \left[0,1\right]\right) \geq L e^{\tilde{r}_k^m t}$. We can then combine this with Equation \eqref{eq:normalizemultipleratio} to estimate that
\begin{equation} \label{eq:multipledenominatorunfixed}
\ol{\mu}_t^j\left([0,1]\right) = \frac{e^{-\tilde{r}_k^m}\mu_t^j\left([0,1]\right)}{e^{-\tilde{r}_k^m}\mu_t^j\left([0,1]\right) + e^{-\tilde{r}_k^m}\mu_t^k\left([0,1]\right)} \leq \frac{e^{-\tilde{r}_k^m t}\mu_t^j\left([0,1]\right)}{e^{-\tilde{r}_k^m t}\mu_t^j\left([0,1]\right) + L} \leq L^{-1} e^{-\tilde{r}_k^m t}\mu_t^j\left([0,1]\right). \end{equation}
Our next step is to estimate $\mu_t^j\left([0,1]\right)$. If $\pi_j(x) < 0$ for $x \in [0,1]$, we may estimate that
\begin{equation}
    \int_0^t \left(G(\phi_s(x)) - G(1)\right) ds = \int_x^{\phi_t(x)} \frac{G(q) - G(1)}{q(1-q) |\pi(q)|} dq \leq \int_0^1 \frac{\left[G(q) - G(1)\right]_+}{q(1-q) |\pi(q)|} dq < \infty,
\end{equation}
and therefore there exists an $A_1 < \infty$ such that
\begin{equation}
    \mu_t^j([0,1]) = e^{\lambda G(1) t} \int_0^1 e^{\lambda \int_0^t \left[G(\phi_s(x)) - G(1)\right] ds} \mu_0(dx) \leq A_1 e^{\lambda G(1) t} = A_1 e^{r_j^M t}.
\end{equation}
If instead $\pi_j(x) > 0$ for $x \in [0,1]$, we can use the fact that $\mu_0^j(dx)$ has supremum \holder exponent $\underline{\theta} > 0$ to apply Lemma \ref{lem:integraltermbound}. Therefore we can introduce a modified supremum \holder exponent $\underline{\Theta} < \underline{\theta}$ sufficiently close to $\underline{\Theta}$ and a modified principal growth rate $\tilde{r}_j^M \in (r_j^M,\tilde{r}_k^m)$, which allows us to see that there exist constants $C_1^1,C_2^1,A_2 < \infty$ such that
\begin{equation} \label{eq:subdominantpopestimate}
\mu^j_t\left( \left[0,1\right] \right) \leq C_1^1 e^{\lambda G_j(0) t} + C_2^2 e^{\left[\lambda G_j(1) - \underline{\theta}^j \pi_j(1) \right] t} \leq A_2 e^{\max\left\{\lambda G_j(0),\lambda G_j(1) - \underline{\theta}^j \pi_j(1)\right\} t} \leq A_2 e^{\tilde{r}_j^M t}.
\end{equation}
Between the two cases for the sign of $\pi_j(x)$, we can see that there exists $A < \infty$ and $\tilde{r}_j^M < \tilde{r}^k_m$ such that $\mu^j_t\left( \left[0,1\right] \right) \leq A e^{\tilde{r}_j^M t}$. 
Combining this estimate with Equation \eqref{eq:multipledenominatorunfixed}, we can use the fact that $\tilde{r}_k^m  > \tilde{r}_j^M$ to deduce that
\begin{equation}
    \ol{\mu}_t^j\left([0,1]\right) \leq A L^{-1} \exp\left( \left[\tilde{r}_j^M - \tilde{r}_k^m \right] t \right) \to 0 \: \: \mathrm{as} \: \: t \to \infty,
\end{equation}
and we can conclude that the population will concentrate upon subpopulation $k$ in the long-time limit.

Now we turn to describing the long-time behavior of $\ol{\mu}_t^k(dx)$ in the case for which $\mu_0^k(dx)$ has a well-defined \holder exponent and constant near $x = 1$. We can write describe the distribution of group compositions in subpopulation $k$ as
\begin{equation} \label{eq:mukfraction}
   \int_0^1 v_k(x) \ol{\mu}_t^k(dx) = \frac{\int_0^1 v_j(x) \mu_t^j(dx)}{ \ds\sum_{j=1}^N \mu_t^j \left([0,1] \right)} =  \frac{e^{\left[\theta^k \pi_k(1) - \lambda G_k(1)\right]t}\int_0^1 v_j(x) \mu_t^j(dx)}{e^{\left[\theta^k \pi_k(1) - \lambda G_k(1)\right]t} \mu_t^k\left([0,1]\right) + \ds\sum_{\substack{j = 1 \\j \ne k}}^N \left\{e^{\left[\theta^k \pi_k(1) - \lambda G_k(1)\right]t} \mu_t^j\left([0,1]\right) \right\}}.
\end{equation}
Using the same approach as in the proof of Theorem \ref{wlimtheorem}, we can see that 
\begin{equation}
\left\{
     \begin{array}{cl}
     e^{\left[\theta^k \pi_k(1) - \lambda G_k(1)\right]t}\int_0^1 v_k(x) \mu_t^k(dx) & \to \int_0^1 v_j(x) f^{\lambda}_{\theta^k}(x) dx \\
     e^{\left[\theta^k \pi_k(1) - \lambda G_k(1)\right]t} \mu_t^k\left([0,1]\right) &\to \int_0^1 f^{\lambda}_{\theta^k}(x) dx
    \end{array}  \right\} \: \: \mathrm{as} \: \: t \to \infty.
    \end{equation}
Because $r_k^m = \lambda G_k(1) - \theta^k \pi_k(1)$ in this case, we know from above that $e^{\left[\theta^k \pi_k(1) - \lambda G_k(1)\right]t} \mu_t^j\left([0,1]\right) \to 0$ for $j \ne k$. Applying this to Equation \eqref{eq:mukfraction} allows us to conclude that 
\begin{equation} \label{eq:muklimit}
   \int_0^1 v_k(x) \ol{\mu}_t^k(dx) \to \frac{\int_0^1 v_k(x) f^{\lambda}_{\theta^k}(dx)}{\int_0^1  f^{\lambda}_{\theta^k}(dx)} = \int_0^1 v_k(x) p^{\lambda}_{\theta_k}(dx) \: \: \mathrm{as} \: \: t \to \infty.  \qedhere
   \end{equation}
\end{proof}

\section{Discussion} \label{sec:discussion}

In this paper, we have analyzed the long-time behavior in a PDE model of multilevel selection, in which a tension exists between the individual-level incentive to defect and group-level competition favoring groups that cooperate. We show that defectors take over the group-structured population when within-group competition is stronger than between-group competition, and that cooperation can weakly persist in the population for all time when the relative strength of between-group competition exceeds a threshold value. We also provide sufficient conditions for the population to converge to a long-time steady state density featuring coexistence of cooperators and defectors, and further characterize the average level of cooperation and group-reproduction rate at steady state in the limit of strong between-group competition. These results generalize and extend previous work on PDE models of multilevel selection with within-group and between-group dynamics arising from frequency-independent competition \cite{luo2014unifying,van2014simple,luo2017scaling} or from the payoffs of evolutionary games \cite{cooney2019replicator,cooney2020analysis,cooney2019assortment}.

\myindent By considering arbitrary $C^1$ functions $\pi(x)$ and $G(x)$ to describe the within-group and between-group competition, we have a general analysis to study how long-term cooperation depends on the tug-of-war of between the individual-level and group-level incentives. In this more general setting, we are able to understand the key role the full-cooperator group plays in determining the level of cooperation and collective average payoff supported by the long-time behavior of multilevel dynamics.  In particular, we see that increasing the relative advantage of the full-cooperator group or increasing the initial cohort of many-cooperator groups (corresponding to lower \holder exponent $\theta$) helps to promote the evolution of cooperative behavior via multilevel selection, consistent with analytical and simulation results from finite population models of multilevel selection \cite{markvoort2014computer,traulsen2006evolution,traulsen2008analytical}.

\myindent Considering a broader class of initial conditions for which the infimum and supremum H{\"o}lder exponents or constants disagree also reveals important properties of our multilevel dynamics. In particular, convergence to a steady solution of Equation \eqref{multiselect} in not guaranteed for generic initial probability measures of group compositions, and instead a more natural notion for quantifying the ability for cooperation to survive via multilevel selection is the weak persistence of cooperation for sufficiently strong between-group competition. This distinction between weak persistence and convergence to steady state may also be relevant for exploring multilevel selection for more complex strategy spaces, as it may be more difficult to identify or quantify the possible steady state behaviors beyond a one-dimensional state space for group compositions. As a question for future work, one could look for analogous weak persistence thresholds for PDE models of multilevel selection including additional evolutionary forces like genetic drift or migration \cite{ogura1987stationary,ogura1987stationary2,fontanari2013solvable,fontanari2014effect,fontanari2014nonlinear}, or models in which the assumption of fixed group size is relaxed and group-level events such as fission and fusion can help to drive the evolution of cooperation \cite{simon2010dynamical,simon2012numerical,simon2013towards,simon2016group}.

\myindent  As in the special cases previously studied for evolutionary games \cite{cooney2019replicator,cooney2020analysis}, we establish that the collective payoff of the steady state population is limited by the payoff of a full-cooperator group. This means that the so-called ``shadow of lower-level selection'' is present for all group reproduction functions which are maximized by an intermediate level of cooperation: no level of between-group competition produces a steady-state population that achieves the maximum possible group-reproduction rate.
Even in the limit of infinitely strong between-group competition, the population still concentrates as a delta-function at a level of cooperation that produces the same collective-reproduction rate as that of a (sub-optimal) full-cooperator group. In addition, we have now established a dynamical analogue to the shadow of lower-level selection in terms of the bounds on the time-average of the group-reproduction rate in the population, highlighting that this limitation of the collective outcome to that of a full-cooperator groups can be seen through the dynamics of our model of multilevel selection. Given the bounds on the time-averaged collective outcome, a natural question for future research is whether there is a sense in which the potentially oscillatory long-time solutions of the multilevel dynamics concentrate upon group compositions with the group-reproduction rate $G(1)$ for sufficiently strong between-group competition, extending the concentration behavior seen for steady-state densities. 

\myindent Our analysis of the threshold selection strength and average payoff at steady state also provide a window to understanding how mechanisms that alter within-group and between-group competition may facilitate cooperation. In particular, we see that decreasing the incentive to defect in a many-cooperator group, $\pi(1)$, can help decrease the between-group competition strength needed to allow long-time survival of cooperation, but it cannot increase the maximum possible achievable group-average payoff in the limit of strong between-group competition. Because altering within-group interactions -- through the mechanisms of assortment, other-regarding preference, indirect reciprocity, and network reciprocity -- can increase $\pi(0)$ to a positive value and provide cooperators an individual-level advantage in a many-defector group \cite{grafen1979hawk,smith1982evolution,taylor2007transforming,cooney2019assortment}, we can also see that such within-group mechanisms can change the dynamics of multilevel selection from following the generalized Prisoners' Dilemma assumptions to following the generalized Hawk-Dove or Stag-Hunt assumptions. Through our results on the long-time behavior of the multilevel dynamics under each of each of these generalized games, we can further explore the synergistic effects of within-group population structure and between-group competition on the evolution of cooperation \cite{cooney2019assortment}.

\myindent This paper constitutes an in-depth analysis of a class of hyperbolic PDEs that generalize the equation studied by Luo and Mattingly \cite{luo2014unifying} to include variety of replication functions corresponding to different within-group and between-group competition scenarios. In turn, Equation \eqref{multiselect} corresponds to the two-level replicator equation that arises from a generalization of the two-level Moran process introduced by Luo \cite{luo2014unifying} under one possible scaling in the limit of infinite group size and of an infinite number of groups. Because our PDE is derived from an individual-based model with finite populations, an important consideration is the extent to which the behavior of solutions to the PDE correspond to properties of the underlying finite population dynamics. In particular, an interesting question for future research is the extent to which behaviors like persistence of cooperation or convergence to steady state densities in the PDE corresponds to any analogous support for cooperation in the original stochastic two-level selection model with finite populations or in alternate deterministic scaling limits that retain the diffusive effects of finite group size \cite{luo2014unifying,luo2017scaling,cooney2019replicator}. 

\myindent Our results in Section \ref{sec:multiplepopulations} on multiple population dynamics also provide motivation for future work on the coevolution of cooperative strategies and the games played by members of groups. Because the principle growth rate of a subpopulation often corresponds to the average collective outcome at steady state, we can see the long-time behavior of the multipopulation competition as picking out the pair of gain and group-reproduction functions $\pi_j(x)$ and $G_j(x)$ most capable of producing a beneficial level of cooperation under the two-level dynamics for a given set of initial distributions. Future modeling work can examine how competition between different group types can impact the evolution of group properties such as within-group population structures or social norms \cite{cooney2019assortment,santos2007multi}. In addition, the approach of reformulating multilevel selection models multipopulation models can be extended to evolutionary games with more complicated within-group dynamics, allowing for the analysis of long-time behavior for nonlinear public goods games \cite{archetti2011coexistence,pacheco2009evolutionary} or for models of efficient extraction of common-pool resources \cite{tavoni2012survival,tilman2017maintaining}. Going forward, the study of multilevel selection models with multiple group types can serve as a potential next step for mathematically understanding the role which multilevel selection can play on the cusp of major evolutionary transitions. 

\renewcommand{\abstractname}{Acknowledgments}
\begin{abstract} 
DBC received support from the National Science Foundation through grant DMS-1514606 and the Army Research Office through grant W911NF-18-1-032x5. YM received support from DMS-1907583. Both DBC and YM were supported by the Math+X grant from the Simons Foundation. The authors thank Joshua Plotkin for many fruitful conversations about the model and helpful comments on the manuscript. DBC also thanks Simon Levin and Denis Patterson for helpful discussions. 

\end{abstract}

\bibliographystyle{unsrt}
\bibliography{references}

\appendix

\section{Well-Posedness of Multilevel Dynamics}
\label{sec:wellposedness}

In this section, we address the well-posedness of measure-valued solutions to Equation \eqref{multiselect}.  Our approach will be to first establish well-posedness of solutions $\mu_t(dx)$ to the linear multilevel dynamics of Equation \eqref{eq:linearmeasuremultiselect}, and then to use the exponential normalization from Equation \eqref{eq:olmutexpweight} to demonstrate that Equation \eqref{multiselect} is well-posed as well. 

For the auxiliary linear problem of Equation \eqref{eq:linearmeasuremultiselect}, we can use Equations \eqref{cODE} and \eqref{eq:pushforward} to obtain the following representation formula for its solution $\mu_t(dx)$ starting from an initial measure $\mu_0(dx)$:
\begin{equation} \label{eq:mutpushforwardGt}
    \int_0^1 v(x) \mu_t(dx) = \int_0^1 v(\phi_t(x)) \exp\left( \int_0^t G(\phi_s(x)) ds \right) \mu_0(dx). 
\end{equation}
The righthand Equation \eqref{eq:mutpushforwardGt} is a positive linear functional, so we can deduce that $\mu_t(dx)$ is a Borel measure due to the Riesz-Markov-Kakutani theorem. We can also check from this representation formula to see that $\mu_t(dx)$ solves Equation \eqref{eq:linearmeasuremultiselect} for any test-function $v(x) \in C^1([0,1])$. Existence for any $C([0,1])$ test-function can then be established by density of $C^1([0,1])$ functions in $C([0,1])$. 

To further explore the well-posedness of solutions of the full multilevel dynamics of Equation \eqref{multiselect}, it is helpful to establish a bijective correspondence between the measures $\mu_t(dx)$ and $\ol{\mu}_t(dx)$ solving Equations \eqref{eq:linearmeasuremultiselect} and \eqref{multiselect}. Taking inspiration from the exponential normalization relation from Equation \eqref{eq:olmutexpweight}, we use the following expressions to show that, for any continuous test-function $v(x)$, $\mu_t(dx)$ and $\ol{\mu}_t(dx)$ can be expressed a as functions of the other: expressions
\begin{subequations} \label{eq:expnorms}
\begin{align} 
    \int_0^1 v(x) \ol{\mu}_t(dx) &=  \exp\left(- \lambda \int_0^t \langle G(\cdot) \rangle_{\mu_s} ds  \right) \int_0^1 v(x)  \mu_t(dx). \label{eq:olmutexpweight2}\\
      \int_0^1 v(x)  \mu_t(dx) &= \exp\left( \lambda \int_0^t \langle G(\cdot) \rangle_{\ol{\mu}_s} ds  \right) \int_0^1 v(x)  \ol{\mu}_t(dx). \label{eq:mutexpweight}
\end{align}
\end{subequations}
This correspondence can be established for $C^1([0,1])$ test-functions by applying the push-forward representation of Equation \eqref{eq:pushforward} to the righthand side, and then differentiating both sides with respect to time. Equation \eqref{eq:expnorms} can then be confirmed for $C([0,1])$ test-functions by the density of continuously differentiable functions within the space of continuous functions.

From the normalization relation provided by Equation  \eqref{eq:olmutnormalized}, we can use the existence of the solution $\mu_t(dx)$ to Equation \eqref{eq:linearmeasuremultiselect} to deduce the existence of a Borel measure $\ol{\mu}_t(dx)$ solving Equation \eqref{multiselect} in the weak sense. For uniqueness, we will prove  make use of the mapping from Equation \eqref{eq:mutexpweight} to help deduce uniqueness for Equation \eqref{multiselect} from uniqueness of the linear problem.

It is also important to consider for which space our solutions $\ol{\mu}_t(dx)$ live. From the weak form of the multilevel dynamics presented in Equation \eqref{eq:measuremultiselect}, it natural to consider $C^1([0,1])$ test-functions due to the advection term and to consider a family measures $\mu_t(dx)$ with a differentiable dependence in time due to the time-derivative in the PDE. From the push-forward representation formula of Equation \eqref{eq:pushforward} and the fact that our results on long-time behavior only require test-functions $v(x) \in C^1([0,1])$, it also makes sense to consider uniformly continuous test functions and measures $\mu_t(dx)$ that vary continuously in time. As a result, in Proposition \ref{prop:wellposedness}, we establish well-posedness of solutions $\ol{\mu_t}(dx)$ of Equation \eqref{multiselect} both taking values in the space of Borel measures with uniformly continuous time-dependence and taking values in the dual space of $C^1([0,1])$ functions with continuously differentiable time-dependence.

\begin{proposition} \label{prop:wellposedness}
Denote by $\mc{M}([0,1])$ and $C^1([0,1])^*$ the set of Borel measures on $[0,1]$ and the dual space of $C^1([0,1])$, respectively. Given the initial measure $\mu_0(dx) \in \mc{M}([0,1])$ and any $T \geq 0$, there exists a unique $\ol{\mu}_t(dx) \in C\left(\left[0,T\right]; \mc{M}([0,1]) \right) \cap C^1\left(\left[0,T\right] ;C^1([0,1])^*  \right)$ such that $\ol{\mu}_t(dx)$ is a weak solution to Equation \eqref{multiselect}. 
\end{proposition}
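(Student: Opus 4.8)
The plan is to follow the two-step strategy sketched before the statement: first settle well-posedness for the auxiliary linear dynamics of Equation \eqref{eq:linearmeasuremultiselect}, and then transfer it to the full nonlinear dynamics of Equation \eqref{multiselect} through the exponential normalization relation. For existence in the linear case I would take the push-forward representation \eqref{eq:mutpushforwardGt} as the \emph{definition} of a candidate solution: for each fixed $t \in [0,T]$ the map $v \mapsto \int_0^1 v(\phi_t(x))\exp(\lambda\int_0^t G(\phi_s(x))\,ds)\,\mu_0(dx)$ is a positive, bounded linear functional on $C([0,1])$ (positivity since $\mu_0$ is a positive measure and the weight is positive; boundedness since $\phi_s(x)\in[0,1]$ and $G$ is continuous, hence bounded), so the Riesz--Markov--Kakutani theorem produces a unique Borel measure $\mu_t(dx)$. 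One then checks directly that this family solves Equation \eqref{eq:linearmeasuremultiselect}: for $v\in C^1([0,1])$ one differentiates the representation in $t$, using that $\phi_t$ solves the characteristic ODE \eqref{cODE} and that $\frac{d}{dt}\exp(\lambda\int_0^t G(\phi_s(x))\,ds) = \lambda G(\phi_t(x))\exp(\cdots)$, and matches the advection and growth terms after one integration by parts; density of $C^1([0,1])$ in $C([0,1])$ then extends the identity to continuous test functions.

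For uniqueness of the linear problem I would use a duality argument. Given a terminal time $T$ and a target $\psi\in C^1([0,1])$, I solve the backward transport equation $\partial_s w - x(1-x)\pi(x)\,\partial_x w + \lambda G(x)\,w = 0$ on $[0,T]$ with $w(T,\cdot)=\psi$; because its characteristics are exactly the curves $\phi_s$, this backward problem has an explicit $C^1$ solution obtained by carrying $\psi$ along $\phi$ together with the integrating factor $\exp(\lambda\int G)$. Using the assumed time-regularity of solutions to justify the product rule, plugging the time-dependent test function $w(s,\cdot)$ into the weak formulation \eqref{eq:linearmeasuremultiselect} makes the right-hand side cancel, so $\frac{d}{ds}\int_0^1 w(s,x)\,\mu_s(dx)=0$ and hence $\int_0^1 \psi\,d\mu_T = \int_0^1 w(0,\cdot)\,d\mu_0$ is determined by $\mu_0$ alone; since $\psi$ ranges over a dense subset of $C([0,1])$, any two linear solutions sharing initial data agree at time $T$.

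Finally I would transfer these conclusions to Equation \eqref{multiselect} using the bijective correspondence \eqref{eq:expnorms}. Starting from the linear solution $\mu_t$, its total mass stays strictly positive and finite — it equals $\exp(\lambda\int_0^t\langle G\rangle_{\mu_s}\,ds)$ by Equation \eqref{eq:solutionmass} — so the normalization \eqref{eq:olmutnormalized} defines $\ol{\mu}_t$, and differentiating \eqref{eq:olmutexpweight2} in time together with the identity \eqref{eq:Gavgequiv} verifies that $\ol{\mu}_t$ is a weak solution of the nonlinear equation. Conversely, given any weak solution $\ol{\mu}_t$ of Equation \eqref{multiselect}, Equation \eqref{eq:mutexpweight} produces a family $\mu_t$ solving the linear equation, so linear uniqueness forces $\mu_t$, and therefore $\ol{\mu}_t$, to be unique; this closes the argument because the two maps in \eqref{eq:expnorms} are mutual inverses. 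The regularity claims follow from the representation itself: $t\mapsto\phi_t(x)$ and the exponential weights are $C^1$ in $t$ uniformly in $x\in[0,1]$, giving $\ol{\mu}_t\in C^1([0,T];C^1([0,1])^*)$, while the weaker continuity $\ol{\mu}_t\in C([0,T];\mathcal{M}([0,1]))$ follows by pairing against continuous test functions. I expect the main obstacle to be the careful verification that \eqref{eq:expnorms} genuinely intertwines the two weak formulations: in particular, that a \emph{generic} weak solution of the nonlinear problem (rather than one built by hand from $\mu_t$) satisfies the mass identity needed for \eqref{eq:mutexpweight} to be well-defined and to land in the linear solution class, since this is precisely where the self-referential nonlinearity $\int_0^1 G\,d\ol{\mu}_t$ must be disentangled.
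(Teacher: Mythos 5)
Your proposal is correct and follows essentially the same route as the paper's proof: existence for the linear dynamics via the push-forward representation and the Riesz--Markov--Kakutani theorem, uniqueness via a time-dependent test function solving the adjoint transport equation along the characteristics $\phi_s$ (your sign convention $\partial_s w - x(1-x)\pi(x)\,\partial_x w + \lambda G(x)\,w = 0$ is the consistent one for the weak form \eqref{eq:linearmeasuremultiselect}), and transfer to Equation \eqref{multiselect} through the exponential normalization maps \eqref{eq:expnorms}, with the probability-measure normalization pinning down the multiplicative factor $c(t) \equiv 1$ exactly as in the paper. The subtlety you flag at the end is handled the same way in the paper: since $\langle G(\cdot)\rangle_{\ol{\mu}_s}$ is finite for any probability-measure-valued weak solution, \eqref{eq:mutexpweight} is always well-defined, and differentiating it against $C^1$ test functions shows the image solves the linear equation.
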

\begin{proof}
 We start by establishing the well-posedness of solutions $\mu_t(dx)$ to the linear model from Equation \eqref{eq:linearmeasuremultiselect}. Knowing from the representation formula of Equation \eqref{eq:pushforward} that $\mu_t(dx)$ exists in the appropriate space, we now suppose that Equation \eqref{eq:linearmeasuremultiselect} has two solutions $\mu_t^1(dx)$ and $\mu_t^2(dx)$ for given initial measure $\mu_0(dx)$. Considering the difference $\tilde{\mu}_t(dx) := \mu_t^1(dx) - \mu_2^1(dx)$ with $\tilde{\mu}_0(dx) = 0dx$. Considering a test-function $v(t,x) \in C^{1,1}\left([0,T] \times [0,1] \right)$ we can use the linearity of Equation \eqref{eq:linearmeasuremultiselect} to see that
 \begin{dmath} \label{eq:timetestfunction}
 \dsdel{}{t} \int_0^1 v(t,x) \tilde{\mu}_t(dx) = \int_0^1 \left[\dsdel{v(t,x)}{t} +  x (1-x) \pi(x) \dsdel{v(t,x)}{x} + \lambda G(x) v(t,x)  \right] \tilde{\mu}_t(dx).
 \end{dmath}
 We can simplify Equation \eqref{eq:timetestfunction} by choosing $v(t,x)$ solve the dual problem to the linear dynamics of Equation \eqref{eq:linearmeasuremultiselect}, given by
 \begin{equation} \label{eq:dualpde}
     \dsdel{v(t,x)}{t} + x (1-x) \pi(x) \dsdel{v(t,x)}{x} = - \lambda G(x) v(t,x) \: \:, \: \: v(t,x) = \omega(x) \in C^1([0,1]).
 \end{equation}
Noting that such a solution $v(t,x)$ exists by the method of characteristics, we can see with this choice that the righthand side of Equation \eqref{eq:timetestfunction} will vanish. This allows us to integrate Equation \eqref{eq:timetestfunction} in time to see that
\begin{equation}
    \int_0^1 v(T,x) \tilde{\mu}_T(dx) = \int_0^1 v(0,x) \tilde{\mu}_0(dx) \Longrightarrow \int_0^1 \omega(x) \tilde{\mu}_T(dx) = 0.
\end{equation}
Because this identity holds for each $\omega(x) \in C^1\left(\left[0,1\right]\right)$, we can deduce that the solution $\mu_t(dx)$ to Equation \eqref{eq:linearmeasuremultiselect} is unique in $C^1([0,1])^*$, and further that it is unique in $\mc{M}([0,1])$ by the density of continuous test-functions in $C^1([0,1])$.  

Existence of a weak solution $\ol{\mu}_t(dx)$ to Equation \eqref{multiselect} in $ C\left(\left[0,T\right]; \mc{M}([0,1]) \right) \cap C^1\left(\left[0,T\right] ;C^1([0,1])^*  \right)$ follows from the existence of a solution $\mu_t(dx)$ to Equation \eqref{eq:linearmeasuremultiselect} and the mapping of Equation \eqref{eq:olmutexpweight}. To establish uniqueness, we consider solutions $\ol{\mu}_t^1(dx)$ and $\ol{\mu}_t^2(dx)$ to Equation \eqref{multiselect} for initial measure $\mu_0(dx)$, and apply the mapping of Equation \eqref{eq:mutexpweight} and the uniqueness of the solution $\mu_t(dx)$ to Equation \eqref{eq:linearmeasuremultiselect} to see that, for any $C^1$ test-function $v(x)$,
\begin{equation}
  \int_0^1 v(x)  \mu_t(dx) =  \int_0^1 v(x) \exp\left( \lambda \int_0^t \langle G(\cdot) \rangle_{\ol{\mu}_s^1} ds  \right) \ol{\mu}_t^1(dx) =  \int_0^1 v(x) \exp\left( \lambda \int_0^t \langle G(\cdot) \rangle_{\ol{\mu}_s^2} ds  \right) \ol{\mu}_t^2(dx).
\end{equation}
We can then use this to deduce that there is a function of time $c(t)$ such that
\begin{equation}
    \int_0^1 v(x) \ol{\mu}_t^1(dx) = c(t) \int_0^1 v(x) \ol{\mu}_t^2(dx).
\end{equation}
Because $\ol{\mu}_t^1(dx)$ and $\ol{\mu}_t^2(dx)$ are probability measures for all $t \geq 0$, we can further deduce that $c(t) \equiv 1$ and conclude that $\ol{\mu}_t^1(dx) = \ol{\mu}_t^2(dx)$ in $C^1\left(\left[0,1\right]\right)^*$ and in $\mathcal{M}\left([0,1]\right)$ by density of continuous test-functions in $C^1([0,1])$.
\end{proof}

\section{Properties of Density Steady State Solutions}
\label{sec:densitysteady}

In this section, we discuss additional properties of steady state solutions to Equation \eqref{multiselect} for the PD case. In Section \ref{sec:possiblesteady}, we characterize the properties of weak steady-state solutions and show distributions can only consist of combinations of delta-concentrations at equilibria of the within-group dynamics and densities supported on non-equilibrium levels of cooperation that are strong solutions to the steady-state ODE. In Section \ref{sec:steadycompetition}, we strengthen this observation by showing that the only steady states that can be achieved as the long-time behavior for solutions of the generalized PD case of Equation \eqref{multiselect} are the delta functions $\delta(x)$ and $\delta(x-1)$ concentrated at the all-defector and all-cooperator equilibria, as well as the family of steady state densities $p^{\lambda}_{\theta}(x)$ described in Section \ref{sec:steadystates}.

In Section \ref{sec:steadypropconcentration}, we prove Propositions \ref{prop:interiordelta} an \ref{prop:deltaedge}, showing that, in the limit of strong between-group competition, the steady state densities $p^{\lambda}_{\theta}(x)$ concentrate upon points at which the group-level replication rate is equal to $G(1)$, the collective replication rate of the all-cooperator group. For the special case of within-group and between-group replication rates arising from the payoff matrix of a PD game in which group payoff is maximized by an intermediate level of cooperation, this means that the population will always feature less cooperation than in collectively optimal. This concentration upon a delta-measure at a suboptimal level of cooperation strengthens observations on the shadow of lower-level selection derived in prior work, in which it was shown that the modal level of cooperation at steady state featured a suboptimal level of cooperation in the limit of infinite between-group selection strength \cite{cooney2019replicator,cooney2020analysis,cooney2021pde}.

\subsection{Possible Steady State Solutions of Multilevel Dynamics}
\label{sec:possiblesteady}

One aspect of the long-term behavior we would like to characterize is the conditions under which solutions to the measure-valued dynamics $\ov{\mu}_t$ converge to weak solutions $\mu(dx)$ of the steady-state ODE given by
\begin{equation} \label{eq:steadymeasure}
     \int_0^1 \dsddx{v(x)}{x} x (1-x) \pi(x) \mu(dx) = \int_0^1 v(x) \left[ G(x) - \int_0^1 G(y) \mu(dy) \right] \mu(dx),  
\end{equation}
and, furthermore, when such time-independent solutions to Equation \ref{eq:steadymeasure} have corresponding densities $f(x)$ that are strong solutions  to the steady-state ODE given by
\begin{equation} \label{eq:steadystrong}
 -\dsdel{}{x} \left[ x(1-x) \pi(x) f(x) \right]  = \lambda f(x) \left[ G(x) - \int_0^1 G(y) f(y) \right]. 
\end{equation}
In Lemma \ref{lem:weaksteadystate}, we show that if solutions $\ol{\mu}_t(dx)$ converges to a limit $\mu_{\infty}(dx)$ as $t \to \infty$, then the limit must be a strong solution to the steady state ODE at all non-equilibrium points of the within-group dynamics. Under the PD dynamics, this means that probability mass can accumulate at the within-group equilibrium at the endpoints $0$ or $1$, and that either the steady state $\mu_{\infty}(dx)$ vanishes identically on $(0,1)$ or is everywhere nonzero on $(0,1)$.

\begin{lemma} \label{lem:weaksteadystate}
If a weak solution $\ol{\mu}_t(dx)$ to Equation \eqref{multiselect} converges weakly to a limit function $\mu_{\infty}(dx)$ as $t \to \infty$, then the limit $\mu_{\infty}(dx)$ is a weak solution to the steady state ODE in Equation \eqref{eq:steadymeasure}. Furthermore, denoting the set of points $E = \{ x \in (0,1) : \pi(x) \neq 0\}$, we see that there $\mu_{\infty}(dx)$ has corresponding density $f_{\infty}(x)$ for $x \in E$, which solves $f_{\infty}(x)$ the strong form of the steady state ODE given in Equation \eqref{eq:steadystrong}. For any interval $\mc{I} \subset E$,  either $f_{\infty}(x) > 0$ for all $x \in \mc{I}$ or $f_{\infty} \equiv 0$ for all $x \in \mc{I}$.
\end{lemma}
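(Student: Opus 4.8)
The plan is to establish the three assertions in turn, with the genuine work concentrated in a regularity bootstrap on $E$. First I would fix $v\in C^1([0,1])$ and read \eqref{eq:measuremultiselect} as $\tfrac{d}{ds}\int_0^1 v(x)\,\ol{\mu}_s(dx)=g(s)$, where $g(s)$ is the right-hand side of \eqref{eq:measuremultiselect}. Since $\ol{\mu}_s\rightharpoonup\mu_\infty$ and the integrands $v'(x)x(1-x)\pi(x)$ and $v(x)G(x)$ are continuous (using $v,\pi,G\in C^1$), each linear term converges, and the nonlinear term factors as $\lambda\bigl(\int_0^1 G\,\ol{\mu}_s(dx)\bigr)\bigl(\int_0^1 v\,\ol{\mu}_s(dx)\bigr)$, a product of two convergent scalars. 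Hence $g(s)\to g_\infty$, the same expression evaluated at $\mu_\infty$. Integrating $\tfrac{d}{ds}\int v\,\ol{\mu}_s(dx)=g(s)$ over $[t,t+1]$ and letting $t\to\infty$, the left-hand side tends to $\int v\,\mu_\infty(dx)-\int v\,\mu_\infty(dx)=0$ while the right-hand side tends to $g_\infty$; thus $g_\infty=0$ for every $C^1$ test function, which is exactly the weak steady-state identity \eqref{eq:steadymeasure}.

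Next I would localize to obtain a density on $E$. Writing $h(x):=x(1-x)\pi(x)$, every point of $E$ lies in an open subinterval $(a,b)\subset E$ on which $\pi$, and hence $h$, has constant sign; these intervals cover $E$ by continuity of $\pi$. On $(a,b)$ introduce the positive measure $\nu:=h\,\mu_\infty$ and set $c:=\int_0^1 G\,\mu_\infty(dx)$. Testing \eqref{eq:steadymeasure} against $v\in C^1_c((a,b))$ and substituting $\mu_\infty=\nu/h$ turns the identity into $\int v'\,d\nu=\int v\,\beta\,d\nu$, with $\beta:=-\lambda(G-c)/h$ continuous on $(a,b)$ because $h$ is nonvanishing there. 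In other words, $\nu$ is a distributional solution of the non-degenerate first-order linear ODE $\nu'=\beta\nu$.

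The hard part will be upgrading this distributional identity to a classical $C^1$ density, and I would do it by an integrating-factor argument. Let $B$ be a $C^1$ antiderivative of $\beta$ on $(a,b)$, and show that $e^{-B}\nu$ has vanishing distributional derivative: for $\psi\in C^\infty_c((a,b))$ one rewrites $e^{-B}\psi'=(e^{-B}\psi)'+\beta e^{-B}\psi$ and invokes the pairing $\langle\nu',\chi\rangle=-\langle\nu,\chi'\rangle$, which extends from $\chi\in C^\infty_c$ to $\chi=e^{-B}\psi\in C^1_c$ since $\nu$ and $\nu'=\beta\nu$ are both measures and $\chi,\chi'$ are approximated uniformly by $C^\infty_c$ data. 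A distribution with zero derivative on an interval is constant, so $\nu=C\,e^{B}\,dx$ for a constant $C$; hence $\mu_\infty$ has the $C^1$ density $f_\infty=C e^{B}/h$ on $(a,b)$, and reversing the computation confirms that $f_\infty$ solves the strong ODE \eqref{eq:steadystrong}. I expect this distributional integrating-factor step to be the main obstacle, since it is where the non-degeneracy $h\neq0$ on $E$ is exploited to rule out singular or atomic parts of $\mu_\infty$.

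Finally, the dichotomy on any interval $\mc{I}\subset E$ follows from this explicit form: on the connected set $\mc{I}$ the homogeneous linear ODE $f_\infty'=\gamma f_\infty$ is governed by a single constant $C$, with $e^{B}>0$ and $h$ of constant sign throughout $\mc{I}$. If $C=0$ then $f_\infty\equiv0$ on $\mc{I}$; if $C\neq0$ then $f_\infty$ vanishes nowhere, and since $\mu_\infty\ge0$ forces $f_\infty\ge0$, we conclude $f_\infty>0$ on all of $\mc{I}$, completing the argument.
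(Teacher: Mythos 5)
Your argument is correct, and while your first step coincides with the paper's (integrating the weak form over $[t,t+1]$, letting $t\to\infty$, and passing to the limit via weak convergence, with the nonlocal term handled as a product of two convergent scalars), your treatment of regularity on $E$ takes a genuinely different route. The paper sets $Q(dx)=x(1-x)\pi(x)\mu_\infty(dx)$, reads off from \eqref{eq:weakq} that the distributional derivative of $Q$ is a finite signed measure, invokes the BV structure theorem to obtain a density $q\in L^1$, and then bootstraps through the equation twice (integrable weak derivative $\Rightarrow$ absolutely continuous $\Rightarrow$ $C^1$), finally deducing the positivity dichotomy from uniqueness of solutions to the Lipschitz ODE \eqref{eq:finfODE}. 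You instead solve the distributional ODE $\nu'=\beta\nu$ outright with an integrating factor, concluding $\nu=C\,e^{B}\,dx$ on each component of $E$; this single closed formula simultaneously delivers absolute continuity (ruling out atoms and singular parts on $E$), $C^1$ regularity, satisfaction of the strong equation \eqref{eq:steadystrong}, and the all-or-nothing dichotomy ($C=0$ versus $C\neq 0$), essentially rediscovering locally the explicit form underlying the steady states \eqref{pthetactilde}. Your route is more explicit and self-contained --- the extension of the pairing from $C^\infty_c$ to $C^1_c$ test functions that you flag is indeed the only delicate point, and it works because $\nu$ and $\beta\nu$ are finite measures on compact subsets of $(a,b)$ --- while the paper's bootstrap is the more robust pattern, applying verbatim when no antiderivative of $\beta$ is conveniently available or when the coefficients are less regular. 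Two cosmetic remarks: $\nu=h\,\mu_\infty$ is positive only where $\pi>0$ (it is a signed, in fact nonpositive, measure where $\pi<0$), though your argument never uses positivity of $\nu$, only nonnegativity of $\mu_\infty$ at the final step; and your choice of constant $c=\int_0^1 G(y)\,\mu_\infty(dy)$ is the correct one when $\mu_\infty$ carries atoms at within-group equilibria, a point on which the paper's notation in \eqref{eq:steadystrong} is slightly loose.
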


\begin{proof}[Proof of Lemma \ref{lem:weaksteadystate}]
Using the measure-valued formulation of the multilevel dynamics, we integrate Equation \eqref{eq:measuremultiselect} in time from $T$ to $T+1$ and obtain 
\begin{dmath} \label{eq:multiweakT}
    \int_0^1 v(x) \ol{\mu}_{T+1}(dx) - \int_0^1 v(x) \ol{\mu}_T(dx) = -\int_{T}^{T+1} \int_0^1 \dsdel{v(x)}{x} x (1-x) \pi(x) \ol{\mu}_t(dx) + \lambda \int_T^{T+1} \int_0^1 v(x) \left[ G(x) - \int_0^1 G(y) \ol{\mu}_t(dy) \right] \ol{\mu}_t(dx).
\end{dmath}

 Because $\ol{\mu}_t(dx)$ converges weakly to $\mu_{\infty}(dx)$ by assumption, we know that, for any $C^1$ test function $V(x)$, $\int_0^1 V(x) \ol{\mu}_t dx \to \int_0^1 V(x) \mu_{\infty}(dx)$ as $ t \to \infty$. In particular, this tells us that the lefthand side of Equation \eqref{eq:multiweakT} vanishes in the limit as $T \to \infty$. Furthermore, because $G(\cdot)$, $\pi(\cdot)$, and $v(\cdot)$ are $C^1$ functions, we see that the following limits hold as $t \to \infty$
 \begin{equation*}
 \begin{aligned}
 \int_0^1 \dsdel{v(x)}{x} x (1-x) \pi(x) \ol{\mu}_t(dx) & \to \int_0^1 \dsdel{v(x)}{x} x (1-x) \pi(x) \mu_{\infty}(dx) \\ 
 \int_0^1 v(x) \left[ G(x) - \int_0^1 G(y) \ol{\mu}_t(dx) \right] \ol{\mu}_t(dx) &\to \int_0^1 v(x) \left[ G(x) - \int_0^1 G(y) \mu_{\infty}(dy) \right] \mu_{\infty}(dx).
 \end{aligned}
 \end{equation*}
 Therefore we can take the limit as $T \to \infty$ on both sides of Equation \eqref{eq:multiweakT}, which allows us to see that the limiting measure $\mu_{\infty}(dx)$ is a weak solution to the steady state ODE from Equation \eqref{eq:steadymeasure}.

 Using the shorthand $Q(dx) = x (1-x) \pi(x) \mu_{\infty}(dx)$, we can rewrite Equation \eqref{eq:steadymeasure} to see that
 \begin{equation} \label{eq:weakq}
    \int_0^1 \dsdel{v(x)}{x} Q(dx)  = \int_0^1 v(x) \lambda  \left[ G(x) - \int_0^1 G(y) \mu_{\infty} (dy) \right] \mu_{\infty}(dx).
 \end{equation}
Because Equation \eqref{eq:weakq} tells us that the distributional derivative of $Q(dx)$ is a finite (signed) measure, we can deduce that there is a function $q(x)$ such that $Q(dx) = q(x) dx$ that has bounded variation, and furthermore that $q(x) \in L^1\left([0,1]\right)$ \cite{evans2015measure}. Then, for any closed interval $\mc{J} \subset E$ such that $\min_{x \in \mc{J}} |\pi(x)| > 0$, we can deduce that there is a density $f_{\infty}(dx)$ such that 
\[f_{\infty}(x)  = \frac{q(x)}{x (1-x) \pi(x)} \in L^1\left(\mc{J}\right). \] Then, restricting ourselves to test-functions with support contained in $\mc{J}$, we can rewrite Equation \eqref{eq:weakq} as
 \begin{equation} \label{eq:weakqdensity}
    \int_{\mc{J}}  \dsdel{v(x)}{x} q(x) dx = \int_{\mc{J}} v(x) \left[G(x) - \int_0^1 G(y) \mu_{\infty}(dy) \right] \mu_{\infty}(dx).
 \end{equation}
Because $f_{\infty}(x)$ is integrable on $\mc{J}$, we see from the righthand side of Equation \eqref{eq:weakqdensity} that $q(x)$ has an integrable weak derivative, and therefore $q(x)$ and $f_{\infty}(x)$ are absolutely continuous. Applying this again to Equation \eqref{eq:weakqdensity} tells us that $q(x)$ has a continuous weak derivative, and therefore $q(x), f_{\infty}(x) \in C^1\left(\mc{J} \right)$. Furthermore, this means that $f_{\infty}(x)$ is actually a strong solution to the steady state ODE of Equation \eqref{eq:steadystrong} for $x \in \mc{J}$. We can then extend our definition of the interval $\mc{J}$ as needed to show that this also holds for any $x \in E$.  
 
To discuss positivity, we can rewrite the the steady-state relation Equation \eqref{eq:steadystrong}  for $f_{\infty}(x)$ as 
\begin{equation} \label{eq:finfODE}
    \dsdel{f_{\infty}(x)}{x} = \left(\frac{1}{x (1-x) \pi(x)}\right) \left( \dsdel{}{x}\left[x(1-x) \pi(x) \right] -\lambda \left[ G(x) - \int_0^1 G(y) f_{\infty}(y) dy \right]  \right) f_{\infty}(x).
    \end{equation}
This ODE has a unique solution on any interval $\mc{J}$ on which $\min_{x \in \mc{J}} |\pi(x)| > 0$, as the righthand side is Lipschitz in $f_{\infty}(x)$ and is continuous in $x$ away from the equilibria of the within-group dynamics. Consequently, if $f_{\infty}(x) = 0$ for an $x \in \mc{J}$, then $\del{f_{\infty}(x)}{x} = 0$ and therefore $f_{\infty}(x)$ is identically $0$ on $\mc{J}$. This allows us to conclude that $f_{\infty}(x)$ is either strictly positive or identically $0$ on any connected component of the set $E$ of points in $(0,1)$ that are not equilibria of the within-group dynamics. 
\end{proof}

 \subsection{Achievable Long-Time Steady States for PD Dynamics}
\label{sec:steadycompetition}

 Lemma \ref{lem:weaksteadystate} tells us that the only possible steady states of Equation \eqref{multiselect} are convex combinations delta-functions supported at equilibria of the within-group equilibria and densities that are strong solutions to the steady state ODE on intervals between within-group equilbria. 
 As a first step to exploring which steady states can actually be achieved through the long-time behavior of Equation \eqref{multiselect} for the PD case, we can consider initial measures of the form
 \begin{equation} \label{eq:initialconvexsteady}
 \mu_0(dx) = a_0 \delta(x) + a_1 \delta(1-x) + \left( 1 - a_0 - a_1 \right) p^{\lambda}_{\theta}(x) dx
 \end{equation}
 for $a_0, a_1 \geq 0$ satisfying $a_0 + a_1 \leq 1$ and $p^{\lambda}_{\theta}(x)$ given by Equation \eqref{pthetactilde}. 
In Proposition \ref{prop:steadydelta}, we characterize the long-time behavior of solutions $\ol{\mu}_t(dx)$ to Equation \eqref{multiselect} for such initial measures, and we see that the only possible long-time steady states are $\delta(x)$, $\delta(1-x)$, and $p^{\lambda}_{\theta}(x)$ if there is any between-group competition (i.e. when $\lambda > 0$).

 \begin{proposition} \label{prop:steadydelta}
 Suppose that $\lambda \left[G(1) - G(0) \right] > \pi(1) \theta$ and that the population has initial measure $\mu_0(dx)$ given by Equation \eqref{eq:initialconvexsteady}. If $a_1 > 0$, then the solution $\mu_t(dx)$ to Equation \eqref{multiselect} satisfies $\ol{\mu}_t(dx) \rightharpoonup \delta(1-x)$ as $t \to \infty$. If $a_1 = 0$ and $a_0 < 1$, then we have instead that $\ol{\mu}_t(dx) \rightharpoonup p^{\lambda}_{\theta}(x)$.
 \end{proposition}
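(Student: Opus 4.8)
The plan is to pass to the unnormalized linear dynamics of Equation \eqref{eq:linearmeasuremultiselect}, exploit its linearity to split the solution according to the three parts of the initial datum in Equation \eqref{eq:initialconvexsteady}, and recover $\ol{\mu}_t$ at the end through the normalization \eqref{eq:olmutnormalized}. Writing $\mu_t = a_0\,\mu_t^{(0)} + a_1\,\mu_t^{(1)} + (1-a_0-a_1)\,\mu_t^{(p)}$, where $\mu_t^{(0)}$, $\mu_t^{(1)}$, $\mu_t^{(p)}$ solve Equation \eqref{eq:linearmeasuremultiselect} with initial data $\delta(x)$, $\delta(1-x)$, and $p^{\lambda}_{\theta}(x)\,dx$, reduces everything to computing three explicit evolutions and comparing their exponential growth rates.

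First I would evaluate the two atoms. Because $x=0$ and $x=1$ are fixed points of the characteristic ODE \eqref{cODE}, the push-forward of Equation \eqref{eq:pushforward} leaves them in place while the accompanying weight $q$ solves $\dot q = \lambda G(0)q$ and $\dot q = \lambda G(1)q$ there; hence $\mu_t^{(0)} = e^{\lambda G(0)t}\delta(x)$ and $\mu_t^{(1)} = e^{\lambda G(1)t}\delta(1-x)$. For the density piece I would use that $p^{\lambda}_{\theta}$ is an exact steady state of Equation \eqref{multiselect}: by the uniqueness from Proposition \ref{prop:wellposedness}, starting the normalized flow at $p^{\lambda}_{\theta}\,dx$ keeps $\ol{\mu}_t \equiv p^{\lambda}_{\theta}\,dx$, and its collective average $\langle G(\cdot)\rangle_{p^{\lambda}_{\theta}} = G(1) - \theta\pi(1)/\lambda$ is constant in time by Equation \eqref{eq:averageoftheta}. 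Substituting this constant average into the normalization relation \eqref{eq:mutexpweight} gives $\mu_t^{(p)} = e^{[\lambda G(1) - \theta\pi(1)]t}\,p^{\lambda}_{\theta}(x)\,dx$.

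Combining the three formulas, for any continuous $v$ the normalization \eqref{eq:olmutnormalized} yields
\begin{equation*}
\int_0^1 v\,\ol{\mu}_t(dx) = \frac{a_0 e^{\lambda G(0)t} v(0) + a_1 e^{\lambda G(1)t} v(1) + (1-a_0-a_1) e^{[\lambda G(1)-\theta\pi(1)]t}\int_0^1 v\,p^{\lambda}_{\theta}\,dx}{a_0 e^{\lambda G(0)t} + a_1 e^{\lambda G(1)t} + (1-a_0-a_1) e^{[\lambda G(1)-\theta\pi(1)]t}}.
\end{equation*}
The hypothesis $\lambda[G(1)-G(0)] > \theta\pi(1)$, together with $\theta,\pi(1)>0$, gives the strict ordering $\lambda G(1) > \lambda G(1)-\theta\pi(1) > \lambda G(0)$ of the three rates. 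If $a_1>0$ I would divide top and bottom by $e^{\lambda G(1)t}$; the remaining exponentials tend to $0$ and the quotient converges to $v(1)$, so $\ol{\mu}_t \rightharpoonup \delta(1-x)$. If $a_1=0$ and $a_0<1$ I would instead divide by $e^{[\lambda G(1)-\theta\pi(1)]t}$; the all-defector term carries the strictly smaller rate $\lambda G(0)$ and drops out, leaving $\int_0^1 v\,p^{\lambda}_{\theta}\,dx$, so $\ol{\mu}_t \rightharpoonup p^{\lambda}_{\theta}$.

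The crux is the density-piece computation: I must argue that $\mu_t^{(p)}$ is a \emph{pure} exponential multiple of $p^{\lambda}_{\theta}$, not merely asymptotic to one, and this is exactly where the steady-state identity for $p^{\lambda}_{\theta}$ and the exact normalization \eqref{eq:mutexpweight} do the work, letting the argument sidestep the dominated-convergence machinery of Theorem \ref{wlimtheorem}. A virtue of this decomposition is that it handles the degenerate sub-cases ($1-a_0-a_1=0$, or $a_1=1$) uniformly. As a cross-check, in the nondegenerate cases one may instead invoke the cited results directly: for $a_1>0$ the residual probability measure $(1-a_1)^{-1}[a_0\delta(x)+(1-a_0-a_1)p^{\lambda}_{\theta}\,dx]$ has supremum \holder exponent $\theta>0$ near $x=1$, so Proposition \ref{prop:deltaone} applies, while for $a_1=0,\ a_0<1$ the measure $a_0\delta(x)+(1-a_0)p^{\lambda}_{\theta}\,dx$ has \holder exponent $\theta$ with positive, finite constant near $x=1$, so Theorem \ref{wlimtheorem} applies.
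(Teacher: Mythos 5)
Your proposal is correct and takes essentially the same approach as the paper: both arguments verify, via uniqueness of the linear dynamics, the explicit formula $\mu_t(dx)=a_0e^{\lambda G(0)t}\delta(x)+a_1e^{\lambda G(1)t}\delta(1-x)+(1-a_0-a_1)e^{[\lambda G(1)-\theta\pi(1)]t}p^{\lambda}_{\theta}(x)\,dx$ and then normalize by \eqref{eq:olmutnormalized} and compare the three exponential rates $\lambda G(1)>\lambda G(1)-\theta\pi(1)>\lambda G(0)$. Your derivation of the density piece from the steady-state identity together with the normalization relation \eqref{eq:mutexpweight} simply fills in the paper's unstated ``we can check'' step for Equation \eqref{eq:steadyweighted}.
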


 \begin{proof} 
Noting from Proposition \ref{prop:wellposedness} that solutions $\mu_t(dx)$ to Equation \eqref{eq:linearmeasuremultiselect} are unique, we can check the solution $\mu_t(dx)$ corresponding to the initial measure of Equation \eqref{eq:initialconvexsteady} is given by
  \begin{equation} \label{eq:steadyweighted}
      \mu_t(dx) = a_0 e^{\lambda G(0) t} \delta(x) + a_1 e^{\lambda G(1) t} \delta(1-x) + \left( 1 - a_0 - a_1\right) e^{\left[\lambda G(1) - \theta \pi(1) \right] t} p^{\lambda}_{\theta}(x) dx.
  \end{equation}
We can use Equation \eqref{eq:steadyweighted} and the normalization relation from Equation \eqref{eq:olmutnormalized} to further see that solutions $\ol{\mu}_t(dx)$ to the full multilevel dynamics satisfy
 \begin{equation}
     \ol{\mu}_t(dx) = \frac{e^{-\lambda G(1) t} \mu_t(dx)}{e^{-\lambda G(1) t} \int_0^1 \mu_t(dy)} = \frac{a_0 e^{\lambda \left[G(0) - G(1) \right] t} \delta(x) + a_1 \delta(1-x) + \left( 1 - a_0 - a_1\right) e^{ - \theta \pi(1) t} p^{\lambda}_{\theta}(x) dx}{a_0 e^{\lambda\left[ G(0) - G(1) \right] t } + a_1 + \left( 1 - a_0 - a_1\right) e^{ - \theta \pi(1) t}}.
 \end{equation}
 Then, using the fact that that $G(1) > G(0)$ and $\theta \pi(1) > 0$, we can further see that $\ol{\mu}_t(dx) \rightharpoonup \delta(1-x)$ as long as $a_1 > 0$. If, instead, $a_1 = 0$, we can see that 
 \begin{equation} \label{eq:convexzerointerior} 
  \ol{\mu}_t(dx) = \frac{e^{\left[ \theta \pi(1) - \lambda G(1) \right] t } \mu_t(dx) }{e^{\left[ \theta \pi(1) - \lambda G(1) \right] t } \int_0^1 \mu_t(dy)} = \frac{a_0 e^{\left(\theta \pi(1) - \lambda \left[ G(1) - G(0)\right] \right) t } \delta(x) + \left( 1 - a_0 \right) p^{\lambda}_{\theta}(x) dx}{a_0 e^{\left(\theta \pi(1) - \lambda \left[ G(1) - G(0)\right] \right) t } + \left( 1 - a_0 \right) }.
 \end{equation}
 Because $\lambda \left[ G(1) - G(0) \right] > \theta \pi(1)$ by assumption, we can use the expression in Equation \eqref{eq:convexzerointerior} to conclude that $\ol{\mu}_t(dx) \rightharpoonup p^{\lambda}_{\theta}(x) dx$ as $ t \to \infty$.
 \end{proof}

\subsection{Concentration of Steady-State Densities in the Limit of Infinite Between-Group Competition}
     \label{sec:steadypropconcentration}

In this section, we study the behavior of the steady state densities $p^{\lambda}_{\theta}(x)$ in the limit of infinite strength of between-group competition. We characterize the concentration of the steady state densities to measures supported upon values $x$ at which $G(x) = G(1)$, and therefore concentration at levels of cooperation that are not necessarily optimal for group-level replication.

To highlight the dependence of our steady-state density on $\lambda$, we can rewrite the expression of our steady state $p^{\lambda}_{\theta}(x)$ from Equation \eqref{pthetactilde} in the form 
 \begin{equation} \label{eq:steadyLaplace}
 p^{\lambda}_{\theta}(x) = \frac{b(x) \exp\left( \lambda h(x)\right) }{ \int_0^1 b(y) \exp\left( \lambda h(y) \right) dy },
 \end{equation}
 where $b(x)$ and $h(x)$ are given by the formulas 
 \begin{subequations} \label{eq:Laplacefunctions}
 \begin{align} 
 b(x) &:=  \left( 1 - x\right)^{\theta - 1} \frac{\pi(1)}{ x \pi(x)} \exp \left(\int_x^1 \frac{\theta \left[ \pi(1) - s \pi(s)\right]}{s \pi(s)} ds \right) \\
 h(x) &:= \int_x^1 \left[\frac{G(s) - G(1)}{s(1-s) \pi(s)}\right] ds.
 \end{align}
 \end{subequations}
 From the form of Equation \eqref{eq:steadyLaplace},  we expect $p^{\lambda}_{\theta}(x)$ to concentrate around the global maximizer of $h(x)$ as $\lambda \to \infty$. The critical points $x_c$ of $h(x)$ satisfy $G(x_c) = G(1)$, and we further compute that 
\[h''(x) \bigg|_{x = x_c} = \frac{-x(1-x) \pi(x) G'(x) - \left(x(1-x) \pi(x)\right)' \left[ G(1) - G(x) \right]}{x^2 (1-x)^2 \pi(x)^2} \bigg|_{x = x_c} = \frac{-G'(x_c)}{x_c (1-x_c) \pi(x_c)}.  \]
Because $\pi(x) > 0$ for $x \in [0,1]$ under the PD assumptions, we see that the local maxima of $h(x)$ are upcrossings of $G(1)$, and the local minima of $h(x)$ are downcrossings of $G(1)$. 

\begin{example} \label{rem:peakupcross}
As an example of the possible collective optima and upcrossings of $G(1)$ that can occur in our multilevel dynamics, we can turn to the example of the game-theoretic model of Section \eqref{sec:gamemotivation}. We recall from Equation \eqref{eq:gamepiG} that, for dynamics arising from games with the payoff matrix of Equation \eqref{eq:payoffmatrix}, that the group-level replication rate $G(x)$ is given by the quadratic function
\[G(x) = P + (S+T-2P)x + (R-S-T+P) x^2. \]
In this case, $G(x)$ is maximized by the following level of cooperation
\begin{equation} \label{eq:piecewisemax}
    x^* = \left\{
     \begin{array}{cl}
            \ds\frac{S+T-2P}{-2(R - S - T + P)} & : 2R < S + T, R - S - T + P < 0 \vspace{2mm} \\
       1 & : \mathrm{otherwise} 
     \end{array}
   \right.,
\end{equation}
so an intermediate level of cooperation can optimize group-level reproduction when the total payoff $S+T$ generated by the interaction of a cooperator and defector exceeds the total payoff $2R$ generated by two cooperators. In addition, we see from the fact that $G(x)$ is a quadratic function of $x$ in the game-theoretic setting that $G'(x)$ changes sign at most once in $[0,1]$, and that $G(x)$ experiences a single upcrossing of $G(1)$ in $[0,1]$.
This upcrossing occurs at the level of cooperation $\overline{x}$ given by
\begin{equation} \label{eq:piecewiseupcrossing}
    \overline{x} = \left\{
     \begin{array}{cl}
            \ds\frac{R-P}{-(R - S - T + P)} & : 2R < S + T, R - S - T + P < 0 \vspace{2mm} \\
       1 & : \mathrm{otherwise} 
     \end{array}
   \right.,
\end{equation}
so the upcrossing occurs in the interior for the same conditions in which the collective optimum $x^*$ is achieved by an intermediate level of cooperation. By comparing Equation \eqref{eq:piecewisemax} and \eqref{eq:piecewiseupcrossing}, we see that $\overline{x} < x^*$ whenever the collective optimum features a mix of cooperators and defectors.

For the  generalization of the PD and HD dynamics studied in this paper with continuously differentiable replication rates, $G(x)$ can either be maximized by interior levels of cooperation or by full-cooperation and can feature arbitrarily many upcrossings of $G(1)$. As a result, in the broader class of models, it is possible $h(x)$ to be maximized at $x=1$ even when $G(x)$ is maximized at an interior level of cooperation.
\end{example}

In Proposition \ref{prop:interiordelta}, we consider the case in which $h(x)$ is maximized by an interior level of cooperation $\overline{x} \in (0,1)$, and use the Laplace integration method \cite{bender1999advanced} to show that $p^{\lambda}_{\theta}(x)$ concentrates upon $\overline{x}$ as $\lambda \to \infty$. 
 
 \begin{proposition} \label{prop:interiordelta} 
Consider the steady state densities given by Equation \eqref{eq:Laplacefunctions} under the assumptions that $G(x), \pi(x) \in C^1\left([0,1]\right)$, $G(1) > G(0)$ and $\pi(x) > 0$ for $x \in [0,1]$. Suppose that $G(x)$ has an interior maximizer $x^*$ satisfying $G(x^*) > G(1)$, and that $h(x)$ has a unique maximizer $\overline{x} < 1$. Then the family of steady states $p^{\lambda}_{\theta}(x) \rightharpoonup \delta(x - \overline{x})$ as $\lambda \to \infty$. 
\end{proposition}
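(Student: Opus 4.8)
The plan is to prove the weak convergence by a Laplace-type concentration argument built on the representation \eqref{eq:steadyLaplace}, in which $p^{\lambda}_{\theta}(x) = b(x)e^{\lambda h(x)}\big/\int_0^1 b(y)e^{\lambda h(y)}\,dy$ with $b(x)$ and $h(x)$ as in \eqref{eq:Laplacefunctions}. The only real subtlety is that $b(x)$ itself is \emph{not} integrable at $x=0$: comparing \eqref{eq:steadyLaplace} with \eqref{pthetactilde} shows that $b(x)e^{\lambda h(x)}$ is a positive multiple of $f^{\lambda}_{\theta}(x)$, which behaves like $x^{\nu_{\theta}-1}$ near $0$ (with $\nu_\theta$ as in \eqref{eq:nudefinition}) and is integrable only once $\nu_{\theta}>0$. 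To isolate this difficulty I would first fix a reference value $\lambda_0>\lambda^{*}$, with $\lambda^{*}$ as in \eqref{eq:lambdastar}, so that $\nu_{\theta}>0$ and the function $g(x):=b(x)e^{\lambda_0 h(x)}$ is a positive multiple of the integrable density $f^{\lambda_0}_{\theta}(x)$; hence $g\ge 0$ and $g\in L^1([0,1])$. Then, for $\lambda>\lambda_0$ and $\beta:=\lambda-\lambda_0\to\infty$, the density takes the clean form
\[
p^{\lambda}_{\theta}(x) = \frac{g(x)\,e^{\beta h(x)}}{\int_0^1 g(y)\,e^{\beta h(y)}\,dy},
\]
a genuine Laplace integral against the \emph{fixed} integrable weight $g$ and continuous phase $h$, with the endpoint singularity now absorbed into $g$.

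Next I would record the properties of $h$ needed to locate the concentration point. One has $h\in C^1((0,1])$ and $h(1)=0$; since $G(0)<G(1)$, the integrand of $h$ behaves like $(G(0)-G(1))/(\pi(0)s)<0$ as $s\to 0^{+}$, so $h(x)\to-\infty$ as $x\to 0^{+}$. By hypothesis $h$ has a unique global maximizer $\ol{x}\in(0,1)$, at which $G(\ol{x})=G(1)$ and (by the computation preceding the statement) $h''(\ol{x})=-G'(\ol{x})/[\ol{x}(1-\ol{x})\pi(\ol{x})]<0$. Moreover $g$ is continuous and strictly positive on a neighborhood of $\ol{x}$, because $\pi>0$ on $[0,1]$ and $\ol{x}\in(0,1)$.

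For the concentration estimate, fix $\epsilon>0$ and set $N_{\epsilon}=(\ol{x}-\epsilon,\ol{x}+\epsilon)$. I would lower-bound the normalizing integral by restricting to a small subinterval of $N_{\epsilon}$ on which $h\ge h(\ol{x})-\eta$ and $g\ge g(\ol{x})/2>0$, obtaining $\int_0^1 g\,e^{\beta h}\ge c\,e^{\beta(h(\ol{x})-\eta)}$ for some $c>0$. For the complementary region, uniqueness of the maximizer together with compactness of $[\epsilon',1]\setminus N_{\epsilon}$, and the fact that $h\to-\infty$ at $0$ (which controls $(0,\epsilon')$ for $\epsilon'$ small), yield a $\delta_{\epsilon}>0$ with $h\le h(\ol{x})-\delta_{\epsilon}$ there, so that $\int_{[0,1]\setminus N_{\epsilon}} g\,e^{\beta h}\le \|g\|_{L^1}\,e^{\beta(h(\ol{x})-\delta_{\epsilon})}$. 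Choosing $\eta<\delta_{\epsilon}$ and taking the ratio of these two bounds shows that the mass $p^{\lambda}_{\theta}\big([0,1]\setminus N_{\epsilon}\big)\to 0$ as $\lambda\to\infty$.

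To conclude, given a continuous test function $v$ and $\epsilon'>0$, I would pick $\epsilon$ so that $|v(x)-v(\ol{x})|<\epsilon'$ on $N_{\epsilon}$; splitting $\int_0^1 v\,p^{\lambda}_{\theta}\,dx$ over $N_{\epsilon}$ and its complement and using that $p^{\lambda}_{\theta}$ is a probability measure gives
\[
\Big|\int_0^1 v(x)\,p^{\lambda}_{\theta}(x)\,dx - v(\ol{x})\Big|\le \epsilon' + 2\|v\|_{\infty}\,p^{\lambda}_{\theta}\big([0,1]\setminus N_{\epsilon}\big),
\]
whose right-hand side tends to $\epsilon'$; since $\epsilon'$ is arbitrary, $p^{\lambda}_{\theta}\rightharpoonup\delta(x-\ol{x})$ as $\lambda\to\infty$. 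The main obstacle throughout is the endpoint behavior at $x=0$: the factorization $b\,e^{\lambda h}=g\,e^{\beta h}$ with $g$ a constant multiple of $f^{\lambda_0}_{\theta}\in L^1$ is precisely what turns an ill-behaved integrand into a standard Laplace integral, and one must separately verify that $h\to-\infty$ at $0$ so that the interior maximizer $\ol{x}$ genuinely dominates. (If one instead wanted the sharp asymptotic prefactor, the nondegeneracy $h''(\ol{x})<0$ together with the classical Laplace method \cite{bender1999advanced} would apply, but only the coarse concentration bound above is needed for weak convergence.)
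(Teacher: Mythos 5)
Your proof is correct, but it takes a genuinely different route from the paper's. The paper integrates $p^{\lambda}_{\theta}$ against a test function, inserts the Gaussian prefactor $\sqrt{2\pi/(\lambda|h''(\overline{x})|)}\,b(\overline{x})e^{\lambda h(\overline{x})}$ into both numerator and denominator, and invokes the sharp interior-critical-point Laplace asymptotics of \cite{bender1999advanced} twice to conclude both ratios tend to $1$. That argument is shorter but leans on implicit hypotheses: it needs the nondegeneracy $h''(\overline{x})\neq 0$ (equivalently $G'(\overline{x})\neq 0$), it tacitly applies the classical Laplace method to a weight $b$ that is \emph{not} locally integrable at $x=0$ (the singularity $b(x)\sim x^{-1-\theta\pi(1)/\pi(0)}$ is only tamed by $e^{\lambda h}$, with $h\to-\infty$ at $0$, which standard Laplace statements on compact intervals do not cover), and its second asymptotic formula divides by $v(\overline{x})$, which is problematic when $v(\overline{x})=0$. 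Your factorization $b\,e^{\lambda h}=g\,e^{\beta h}$ with $g=b\,e^{\lambda_0 h}$ a fixed $L^1$ weight (a multiple of $f^{\lambda_0}_{\theta}$ for any $\lambda_0>\lambda^*$) disposes of all three issues at once: the endpoint singularity is absorbed into $g$, only uniqueness of the global maximizer plus continuity is needed (no second-derivative information — your remark that $h''(\overline{x})<0$ is recorded but never used is accurate), and the conclusion is reached by a sup-bound concentration estimate plus the standard $\epsilon$-splitting against a continuous test function, with no division by $v(\overline{x})$. What you give up is the sharp asymptotic prefactor (rate of concentration), which the paper's method delivers when its hypotheses hold; what you gain is a proof that is rigorous exactly as stated under the proposition's hypotheses and slightly more general (degenerate maxima are allowed). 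One cosmetic caveat: the exponent printed in the paper's definition of $b(x)$ in Equation \eqref{eq:Laplacefunctions} is missing a factor of $(1-s)^{-1}$ relative to the $\theta Q(s)$ term of Equation \eqref{ptheta}, so $b\,e^{\lambda h}$ as literally printed differs from $f^{\lambda}_{\theta}$ by a fixed bounded, positive, continuous function of $x$; your argument is insensitive to this discrepancy since it only uses that $g$ is integrable, positive, and continuous near $\overline{x}$, but you should say "positive multiple up to a bounded continuous factor" or work with the corrected $b$ to make the identification with $f^{\lambda_0}_{\theta}$ exact.
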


Furthermore, if $\overline{x}$ is the only upcrossing of $G(1)$ in $[0,1]$, then $ \overline{x} < x^*$ and the level of cooperation achieved as $\lambda \to \infty$ is less than the level that achieves the maximal group reproduction rate $G(x^*)$. In particular, we note from Example \ref{rem:peakupcross} that this is true for all PD and HD games with corresponding quadratic $G(x)$ given by Equation \eqref{eq:gamepiG}.

\begin{remark}
If, in addition to assumptions of Proposition \ref{prop:interiordelta}, $G(x)$ has a unique upcrossing $\overline{x} < 1$ of $G(1)$, then $\overline{x} < x^*$. Because $p^{\lambda}_{\theta}(x) \rightharpoonup \delta(x-\overline{x})$ as $\lambda \to \infty$, this means that the level of cooperation achieved as steady state as $\lambda \to \infty$ will be less than the optimal level of cooperation when $x^* < 1$ and $\overline{x}$ is the unique upcrossing of $G(1)$. 
\end{remark}

When $h(x)$ is maximized by $\ol{x} = 1$, the Laplace method breaks down because of the behavior of $b(x)$ at $x = 1$, so we employ a different approach to show that $p^{\lambda}_{\theta}(x)$ concentrates upon full-cooperation as $\lambda \to \infty$ under the additional assumption that $\theta > 1$. 

 \begin{proposition} \label{prop:deltaedge}
 Consider the steady state densities given by Equation \eqref{eq:Laplacefunctions} under the assumptions on $G(x)$ and $\pi(x)$ from Proposition \ref{prop:interiordelta}. Suppose $h(x)$ achieves a unique maximum at $x = 1$ and consider $\theta > 1$. Then, for our family of steady-state solutions $p^{\lambda}_{\theta}(x) \rightharpoonup \delta(x-1)$ as $\lambda \to \infty$.
 \end{proposition}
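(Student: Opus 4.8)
The plan is to prove weak convergence by showing that all mass escapes to the endpoint: for every $\delta \in (0,1)$ I will establish that $\int_0^{1-\delta} p^{\lambda}_{\theta}(x)\,dx \to 0$ as $\lambda \to \infty$. This suffices, since for any continuous $v$ one has $\bigl|\int_0^1 v\,p^{\lambda}_{\theta}\,dx - v(1)\bigr| \le 2\|v\|_{\infty}\int_0^{1-\delta} p^{\lambda}_{\theta}\,dx + \sup_{x \in [1-\delta,1]}|v(x)-v(1)|$, so one first chooses $\delta$ small using continuity of $v$ at $1$ and then sends $\lambda \to \infty$. Using the representation \eqref{eq:steadyLaplace}, the object to control is the ratio $N(\lambda)/D(\lambda)$, where $N(\lambda) = \int_0^{1-\delta} b(x)e^{\lambda h(x)}\,dx$ and $D(\lambda) = \int_0^1 b(y)e^{\lambda h(y)}\,dy$, with $b,h$ as in \eqref{eq:Laplacefunctions}. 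The two obstructions to a direct Laplace analysis are that $b(x)$ blows up as $x \to 0$ and that $b(x) \sim (1-x)^{\theta-1}$ vanishes at the maximizer $x=1$ (this is where $\theta > 1$ enters), and I will handle these in the numerator and denominator respectively.

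For the numerator I would fix a reference strength $\lambda_0 > \lambda^{*} = \theta\pi(1)/(G(1)-G(0))$, chosen large enough that $b(x)e^{\lambda_0 h(x)}$ is integrable on $(0,1-\delta]$; indeed near $x=0$ one checks $b(x)e^{\lambda h(x)} \sim x^{\nu_{\theta}-1}$ with $\nu_{\theta}$ as in \eqref{eq:nudefinition}, so integrability holds precisely when $\lambda > \lambda^{*}$. Since $x=1$ is the unique maximizer of $h$ and $h(x) \to -\infty$ as $x \to 0$, the value $h^{*} := \sup_{x\in[0,1-\delta]} h(x)$ is strictly negative. Splitting $e^{\lambda h} = e^{\lambda_0 h}e^{(\lambda-\lambda_0)h}$ and bounding $h \le h^{*}$ on $[0,1-\delta]$ then gives $N(\lambda) \le K_{\delta}\,e^{(\lambda-\lambda_0)h^{*}}$, where $K_{\delta} := \int_0^{1-\delta} b(x)e^{\lambda_0 h(x)}\,dx < \infty$, so the numerator decays exponentially in $\lambda$.

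For the denominator I would extract the algebraic rate from a one-sided neighborhood $[1-\eta,1]$ of the peak. There $b(y) \ge c_1(1-y)^{\theta-1}$, since $\pi(1)/(y\pi(y))$ times the bounded exponential factor in \eqref{eq:Laplacefunctions} tends to $1$, while $h(1)=0$ together with boundedness of $h'$ near $1$ yields $h(y) \ge -M(1-y)$. Substituting $u = 1-y$,
\[
D(\lambda) \ge c_1\int_0^{\eta} u^{\theta-1}e^{-M\lambda u}\,du = c_1(M\lambda)^{-\theta}\int_0^{M\lambda\eta} w^{\theta-1}e^{-w}\,dw \ge c\,\lambda^{-\theta}
\]
for all large $\lambda$, since the last integral converges to $\Gamma(\theta)>0$. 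Combining the two estimates gives $N(\lambda)/D(\lambda) \le (K_{\delta}/c)\,\lambda^{\theta}e^{(\lambda-\lambda_0)h^{*}} \to 0$ because $h^{*}<0$, and weak convergence $p^{\lambda}_{\theta} \rightharpoonup \delta(x-1)$ follows.

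I expect the main obstacle to be the clean bookkeeping of the two endpoint asymptotics: justifying finiteness of $K_{\delta}$ via the integrability threshold $\lambda^{*}$, and confirming that the vanishing of $b$ at the peak costs only the harmless polynomial factor $\lambda^{\theta}$, which the exponentially small numerator absorbs. The hypothesis $\theta > 1$ is what guarantees that $b$ genuinely vanishes at $x=1$, so that the shrinking neighborhood of the peak produces the $\lambda^{-\theta}$ rate governing $D(\lambda)$ rather than a nonvanishing boundary contribution; this is exactly the feature that causes the ordinary Laplace method (which presumes a nonzero prefactor at the maximizer) to break down and motivates the separate numerator/denominator treatment above.
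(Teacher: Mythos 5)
Your proposal is correct, and while it follows the same global strategy as the paper's proof (reduce weak convergence to showing $\int_0^{1-\delta} p^{\lambda}_{\theta}(x)\,dx \to 0$, then bound this mass by a ratio whose denominator is restricted to a neighborhood of the peak $x=1$), both halves of your ratio estimate are executed differently. For the numerator, the paper discards the factor $(1-x)^{\theta-1}\le 1$ — this is the only place its hypothesis $\theta>1$ is used — and then argues the resulting exponent is negative near $x=0$, obtaining a bound $M\exp(\lambda \hat{h}_{d,\delta})$; you instead factor $e^{\lambda h}=e^{\lambda_0 h}e^{(\lambda-\lambda_0)h}$ with a fixed reference $\lambda_0>\lambda^{*}$, exploiting integrability of the steady-state profile $b\,e^{\lambda_0 h}\sim x^{\nu_{\theta}-1}$ at $\lambda_0$, which handles the blow-up of $b$ at $0$ wholesale and gives $K_\delta e^{(\lambda-\lambda_0)h^{*}}$ with $h^{*}=\sup_{[0,1-\delta]}h<0$ (your justification via $h\to-\infty$ at $0$ and the unique maximum at $1$ is sound). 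For the denominator, the paper uses local monotonicity of $h$ as $x\to 1^-$ to get the crude exponential bound $A e^{\lambda h(1-\Delta)}$ and closes by comparing exponential rates, $h(1-\Delta)>\hat{h}_{d,\delta}$; you instead use only boundedness of $h'$ near $1$ (so $h(y)\ge -M(1-y)$, valid since $G\in C^1$ and $\pi>0$) together with $b(y)\ge c_1(1-y)^{\theta-1}$ and a Gamma-function computation to extract the polynomial rate $c\lambda^{-\theta}$, which the exponentially small numerator absorbs. Each route buys something: the paper's is shorter at the peak, needing no rate extraction; yours avoids the contradiction framing, avoids the monotonicity claim for $h$ near $1$, and — notably — never actually uses $\theta>1$, since $\int_0^{\eta}u^{\theta-1}e^{-M\lambda u}\,du$ behaves like $\Gamma(\theta)(M\lambda)^{-\theta}$ for every $\theta>0$. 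So despite your closing remark that $\theta>1$ is "exactly the feature" necessitating the argument (it is indeed why the naive Laplace boundary formula fails, since $b$ then vanishes at the maximizer), your estimates in fact prove the proposition for all $\theta>0$, a genuine strengthening of the stated result.
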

 In particular, $h(x)$ has a global maximum at $x=1$ when $G(x)$ is non-decreasing on $[0,1]$, and therefore the population concentrates  upon the optimal level of cooperation as $\lambda \to \infty$ when full-cooperation is the best possible group composition. However, there exist group reproduction functions $G(x)$ for which$h(x)$ is still maximized by full-cooperation even though intermediate levels of cooperation are collectively-optimal. In such cases,  Proposition \ref{prop:deltaedge} tells us the population can concentrate at a level of cooperation greater than is optimal for group-level reproduction, so the %
 shadow of lower-level selection can also manifest itself by promoting too much cooperation.

\begin{proof}[Proof of Proposition \ref{prop:interiordelta}]
 We start by integrating both sides of Equation \ref{eq:steadyLaplace} against a test-function $v(x)$, obtaining 
\begin{equation*} 
\int_0^1 v(x) p^{\lambda}_{\theta}(x) dx = \int_0^1 v(x)  \frac{b(x) \exp\left( \lambda h(x)\right) }{ \int_0^1 b(y) \exp\left( \lambda h(y) \right) dy } dx = \frac{\int_0^1 v(x) b(x) \exp\left( \lambda h(x) \right) dx}{ \int_0^1 b(x) \exp\left(\lambda h(x) \right) dx }.
\end{equation*}
We can further rearrange this expression to obtain 
\begin{dmath} \label{eq:Laplaceratiorearranged}
\int_0^1 v(x) p^{\lambda}_{\theta}(x) dx  
= v(\overline{x}) \left( \frac{\sqrt{\frac{2 \pi}{\lambda |h''(\overline{x})| }}b(\overline{x}) \exp\left( \lambda h(\overline{x})\right)}{\int_0^1 b(y) \exp\left(\lambda h(y) \right) dy} \right) \left( \frac{\int_0^1 v(y) b(y) \exp\left( \lambda h(y) \right) dy}{v(\overline{x}) \sqrt{\frac{2 \pi}{\lambda |h''(\overline{x})| }}b(\overline{x}) \exp\left( \lambda h(\overline{x})\right)}\right)
\end{dmath}
Because $b(x)$ is continuous and nonzero at $\overline{x} < 1$, we can use the interior critical-point case of the Laplace method \cite{bender1999advanced} to obtain the following asymptotic formulas  for the two terms in parenthesis 
\begin{align}
    \ds\lim_{\lambda \to \infty} \left[ \frac{\sqrt{\frac{2 \pi}{\lambda |h''(\overline{x})| }}b(\overline{x}) \exp\left( \lambda h(\overline{x})\right)}{\int_0^1 b(y) \exp\left(\lambda h(y) \right) dy} \right] &= 1 \: \: \mathrm{and} \: \:
    \ds\lim_{\lambda \to \infty}  \left[  \frac{ \sqrt{\frac{2 \pi}{\lambda |h''(\overline{x})| }}v(\overline{x}) b(\overline{x}) \exp\left( \lambda h(\overline{x})\right)}{\int_0^1 v(y) b(y) \exp\left(\lambda h(y) \right) dy} \right] = 1. \label{eq:asymptotics}
\end{align}
Taking the limit as $\lambda \to \infty$ in Equation \eqref{eq:Laplaceratiorearranged}, we can apply the asymptotic formulas from Equation \eqref{eq:asymptotics} to see that
\begin{equation*}
\ds\lim_{\lambda \to \infty} \int_0^1 v(x) p^{\lambda}_{\theta}(x) dx = v(\overline{x}),
\end{equation*}
and we conclude that $p_{\theta}^{\lambda}(x) \rightharpoonup \delta(x - \ol{x})$ as $\lambda \to \infty$. 
\end{proof}

\begin{proof} [Proof of Proposition \ref{prop:deltaedge}]
 We assume, for contradiction, that there is a test-function $v(x) \in C^1$ such that $\int_0^1 v(x) p^{\lambda}_{\theta}(x) dx \not\to v(1)$. In that case, there exists an $\epsilon > 0$ such that for any $\Lambda > 0$, there is a $\lambda > \Lambda$ for which
 \begin{equation}\label{eq:lambdanotconverge}
    \epsilon <  \bigg| \int_0^1 \psi(x) p^{\lambda}_{\theta}(x) dx - v(1) \bigg| \leq \int_0^1 |v(x) - v(1) | p^{\lambda}_{\theta}(x) dx,
 \end{equation}
 where the second inequality follows because $p^{\lambda}_{\theta}(x)$ is a probability density. Because $v(\cdot)$ is continuous, there is $ \delta > 0$ such that $|v(x) - v(1)| < \frac{\epsilon}{2}$ for $x \in [1-\delta,1]$, so we can further estimate that %
 \begin{equation} \label{eq:lambdanotconvergesecondestimate}
   \bigg| \int_0^1 v(x) p^{\lambda}_{\theta}(x) dx - v(1) \bigg| \leq 
   \frac{\epsilon}{2} + \int_{0}^{1-\delta} |v(x) - v(1) | p^{\lambda}_{\theta}(x) dx   < \frac{\epsilon}{2} + 2 ||v||_{\infty} \int_0^{1 - \delta} p^{\lambda}_{\theta}(x) dx .
 \end{equation}
 Combining the results of Equations \ref{eq:lambdanotconverge} and \ref{eq:lambdanotconvergesecondestimate}, we see that there are $\epsilon,\delta > 0$ such that for any $\Lambda > 0$, there exists $\lambda > \Lambda$ for which 
 \begin{equation} \label{eq:probless1minusdelta}
     \int_{0}^{1-\delta} p^{\lambda}_{\theta}(x) dx > \frac{\epsilon}{4 ||v||_{\infty}} > 0.
 \end{equation}
 For such $\lambda$, we can then consider the steady state probability $\int_0^{1-\delta} p^{\lambda}_{\theta}(x) dx$ found on the interval $[0,1-\delta]$. Using Equation \eqref{eq:steadyLaplace} and the fact that $b(x) \geq 0$, we can estimate that, for any $\Delta > 0$,
 \begin{equation} \label{eq:deltaintervalinequality}
     \int_0^{1-\delta} p^{\lambda}_{\theta}(x) = \frac{\int_0^{1-\delta} b(x) e^{\lambda h(x)} dx}{\int_0^{1} b(x) e^{\lambda h(x)} dx} \leq  \frac{\int_0^{1-\delta} b(x) e^{\lambda h(x)} dx}{\int_{1-\Delta}^{1} b(x) e^{\lambda h(x)} dx}.
 \end{equation}
 Because we have assumed that $h(x)$ has a global maximum at $x = 1$ and that $G(x)$ is a $C^1$ function, we know that $h(x)$ is locally non-decreasing as $x \to 1^-$. Therefore, choosing sufficiently small $\Delta > 0$, we see that there is an $A > 0$ such that  
 \begin{equation} \label{eq:denominatorintegral}
     \int_{1-\Delta}^1 b(x) e^{\lambda h(x)} dx \geq e^{\lambda h(1-\Delta)} \int_{1-\Delta}^1 b(x) dx \geq A  e^{\lambda h(1-\Delta)}
 \end{equation}
 Turning to the numerator of Equation \ref{eq:deltaintervalinequality}, we can use the fact that $(1-x)^{\theta - 1} \leq 1$ for $\theta > 1$ to see that the integrand in the numerator satisfies
 \begin{dmath} \label{eq:numeratorintegrandestimate}
 b(x) e^{\lambda h(x)} \leq \exp\left(\lambda h(x) + \theta \int_x^1 \frac{\left[ \pi(1) - s \pi(s) \right]}{s \pi(s)} ds \right) 
 = \exp\left( \int_x^1 \left\{ \frac{1}{s \pi(s)} \left[\frac{\lambda [G(s) - G(1)] + \theta [\pi(1) - s \pi(s)]}{1 - s} \right]  \right\} ds \right).
 \end{dmath}
 In particular, we note that when $s = 0$, the term in square brackets takes the value $\lambda \left[ G(0) - G(1) \right] + \theta \pi(1)$, which is negative when steady state densities of the form $p^{\lambda}_{\theta}(x)$ exist. Because $G(x), \pi(x) \in C^1[0,1]$, the integrand on the last line of Equation \eqref{eq:numeratorintegrandestimate} is negative for $x$ close enough to $0$, and therefore there is $d$ close enough to $1$ and $M < \infty$ such that, for $x \in [0,1-\delta]$, 
 \begin{dmath} \label{eq:numeratorintegral}
 b(x) e^{\lambda h(x)} \leq \exp\left( h(\max(x,1-d)) + \theta \int_{\max(x,1-d)}^1 \frac{\left[ \pi(1) - s \pi(s) \right]}{s \pi(s)} ds  \right) \leq M \exp\left( \max_{x \in [1-d,1-\delta]} h(x) \right).
 \end{dmath}
Further denoting $\hat{h}_{d,\delta} = \max_{x \in [1-d,1-\delta]} h(x)$, we can combine the estimates from Equations \eqref{eq:deltaintervalinequality}, \eqref{eq:denominatorintegral}, and \eqref{eq:numeratorintegral} to see that 
\begin{equation} \label{eq:prob01minusdeltaestimate}
\int_0^{1-\delta} p^{\lambda}_{\theta}(x) dx \leq M A^{-1} (1-\delta) \exp\left( \lambda \left[\hat{h}_{d,\delta} - h(1 - \Delta) \right] \right).
\end{equation}
Because $h(x)$ has a unique global maximum at $x = 1$, there is a $\Delta^* > 0$ such that, for any $\Delta < \Delta^*$, $h(1 - \Delta) > h(x)$ for $x \in [0,1-\Delta^*)$. Namely, choosing $\Delta < \min\left( \Delta^*,\delta\right)$ allows us to additionally deduce that $h(1-\Delta) > \hat{h}_{d,\delta}$ and conclude that 
\begin{equation}
    \int_0^{1-\delta} p^{\lambda}_{\theta}(x) dx \to 0 \: \: \mathrm{as} \: \: \lambda \to \infty,
\end{equation}
contradicting the existence of a subsequence of $\lambda$-values tending to infinity on which the inequality of Equation \eqref{eq:probless1minusdelta} holds. Therefore we can conclude that $p^{\lambda}_{\theta}(x) \rightharpoonup \delta(x-1)$ as $\lambda \to \infty$.
\end{proof}

\section{Application of Multipopulation Framework to Analyze Generalization of Hawk-Dove and Stag-Hunt Dynamics}
  \label{sec:HDSH}
 
 In this section, we consider the long-time behavior of solutions to Equation \eqref{multiselect} when the individual-level and group-level replication rates resemble the dynamics of Hawk-Dove (HD) or Stag-Hunt (SH) games. We do this by decomposing the distribution of groups playing these games into two subpopulations featuring levels of cooperation above and below the interior within-group equilibrium $x_{eq}$, and then study the evolution of these two conditional distributions using a two-population version of the multipopulation dynamics studied in Section \ref{sec:multiplepopulations}. In Section \eqref{sec:SHHDassumptions}, we formulate the individual and group replication rates for the two subpopulations considered for the HD and SH games, and, in Section \ref{sec:HDresults}, we present the results for the long-time behavior of the multilevel HD and SH dynamics in light of Theorem \ref{thm:longtimemultiple}.

  \subsection{Assumptions HD and SH Dynamics and Formulation as a 2-Population Scenario}
  \label{sec:SHHDassumptions}

To generalize the multilevel Hawk-Dove and Stag-Hunt dynamics, we adapt our assumptions on our replication rates $\pi(x)$ and $G(x)$ to reflect the properties of the payoff rankings for these two games. In particular, we see from Equations \ref{eq:characteristicsreplicatorgame} and \ref{eq:interiorequilibrium} that $\pi(x) > 0$ for $x \in (x_{eq},1]$ and $\pi(x) < 0$ for $x \in [0,x_{eq})$ for HD games, while the opposite signs hold for the SH game. 

For the group payoff function, we can use the rankings of payoffs for the HD and SH games and Equation \ref{eq:interiorequilibrium} to see that
\begin{equation} \label{eq:Gxeqinequality}
\begin{aligned}
    G(1) - G(x_{eq}) &= \frac{\left(R - S\right) \left(R -T\right)}{R-S-T+P} > 0 \\ 
    G(0) - G(x_{eq}) &= \frac{\left(S-P\right)\left(T-P\right)}{R-S-T+P} < 0,
    \end{aligned}
\end{equation}
and therefore we the group reproduction function satisfies $G(0) < G(x_{eq}) < G(1)$ for both games.

To reformulate the dynamics of the HD and SH games in terms of the multipopulation dynamics studied in Section \ref{sec:multiplepopulations},  
it is helpful to try to understand ODE of Equation \eqref{eq:replicatorcharacteristics} in comparison to a logistic ODE supported on one of the intervals $[0,x_{eq})$ or $(x_{eq},1]$. We do this by rewriting Equation \eqref{eq:replicatorcharacteristics} in the form
\begin{equation} \label{eq:replicatorcharacteristicsxeq}
    \dsddt{x(t)} = - x (1 - x) \left( x - x_{eq} \right) \left(\frac{\pi(x)}{x - x_{eq}} \right) \: \:, \: \: x(0) = x_0.
\end{equation}
Next, we can map the intervals $[0,x_{eq})$ or $(x_{eq},1]$ into $[0,1]$ using the rescaled variables 
\begin{equation} X_1 = \frac{x}{x_{eq}}\: \:. \: \: X_2 = \frac{x - x_{eq}}{1 - x_{eq}}. \end{equation} 
Then we can describe the respective within-group dynamics below and above the equilibrium $x_{eq}$ using the modified gain functions
\begin{equation} \label{eq:Pifunctions}
\begin{aligned}
    \Pi_1(X_1) &:= \left\{
     \begin{array}{ll}
        \ds\frac{(1-x_{eq} X_1) \pi(x_{eq} X_1)}{1 - X_1} & : X_1 \in [0,1) \vspace{2mm} \\
        -x_{eq} (1-x_{eq}) \pi'(x_{eq}) & : X_1 = 1
     \end{array}
   \right.  \\
   \Pi_2(X_2) &:= \left\{
     \begin{array}{ll}
        \ds\frac{\left(x_{eq} + (1-x_{eq}) X_2 \right) \pi\left(x_{eq} + (1-x_{eq}) X_2\right)}{X_2} & : X_2 \in (0,1] \vspace{2mm} \\
        x_{eq} (1-x_{eq}) \pi'(x_{eq}) & : X_2 = 0
     \end{array}
   \right. .
   \end{aligned} 
\end{equation}
Further introducing the measures $\ol{\mu}_t^1(dX_1) = 1_{X_1 \in [0,1]}\ol{\mu}_t(dX_1)$ and $\ol{\mu}_t^2(dX_2) = 1_{X_2 \in [0,1]}\ol{\mu}_t(dX_2)$ as well as the modified group-reproduction functions
\begin{subequations}
\label{eq:Gfunctions}
\begin{align}
G_1(X_1) &= G(x_{eq} X_1) \\
G_2(X_2) &= G\left( x_{eq} + (1 - x_{eq}) X_2 \right),
\end{align}
\end{subequations}
we can see that these measures $\ol{\mu}_t^1(dX_1)$ and $\ol{\mu}_t^2(dX_2)$ evolve according to Equation \eqref{eq:measurevaluedsubpopulation} for our choices of modified gain and group-reproduction functions. To guarantee that modified gain functions $\Pi_1(X_1), \Pi_2(X_2) \in C^1([0,1])$ in line with the assumption of Theorem \ref{thm:longtimemultiple}, we see from Equation \eqref{eq:Pifunctions} that it suffices to assume that the original gain function satisfies $\pi(x) \in C^2([0,1])$.

\subsection{Summary of Long-Time Behavior for HD and SH Dynamics} 
\label{sec:HDresults}

We start with the case of the Stag-Hunt game. Because $\pi(x) > 0$ for $x \in (0,x_{eq})$ and $\pi(x) < 0$ for $x \in (x_{eq},1)$ for the SH game, we see that $r_2^m = \lambda G_2(1) = \lambda G(1)$, while $r_1^M \leq \lambda G_1(1) = G(x_{eq})$. Therefore the two-population representation of the SH game will always satisfy the hypothesis of Theorem \ref{thm:longtimemultiple} regarding a dominant subpopulation, and so the support groups below $x_{eq}$ vanishes in the long-time limit. Because the dynamics above $x_{eq}$ resemble a PDel game, we can use an approach inspired by the proof of Proposition \ref{prop:PDeldelta} to show fixation upon full-cooperation.
\begin{proposition} \label{prop:SH}
Suppose that $G(x) \in C^1([0,1])$, $\pi(x) \in C^2([0,1])$, $G(0) < G(x_{eq}) < G(1)$, $\pi(x)$ has a single root $x_{eq} \in (0,1)$, and 
that $\pi'(x_{eq}) > 0$. We further assume that the initial distribution contains groups with levels of cooperation exceeding the within-group equilibrium, i.e. $\mu_0\left(\left(x_{eq},1\right]\right) > 0$. If $\lambda > 0$, $\ol{\mu}_t(dx) \rightharpoonup \delta(1-x)$ as $t \to \infty$.
\end{proposition}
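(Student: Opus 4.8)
The plan is to exploit the two-population reformulation of Section \ref{sec:SHHDassumptions}, splitting $\ol\mu_t$ at the interior equilibrium $x_{eq}$ into the subpopulation on $[0,x_{eq})$ (rescaled by $X_1 = x/x_{eq}$) and the subpopulation on $(x_{eq},1]$ (rescaled by $X_2 = (x-x_{eq})/(1-x_{eq})$). Under the SH sign pattern, $\pi>0$ on $(0,x_{eq})$ and $\pi<0$ on $(x_{eq},1)$, so from Equation \eqref{eq:Pifunctions} the modified gain function $\Pi_1$ is positive on $[0,1]$ (a generalized PD), while $\Pi_2$ is negative on $[0,1]$ (a generalized PDel); the assumption $\pi\in C^2$ guarantees $\Pi_1,\Pi_2\in C^1([0,1])$ through their interior-endpoint values involving $\pi'(x_{eq})$, and Equation \eqref{eq:Gfunctions} gives $G_1(0)=G(0)<G_1(1)=G(x_{eq})=G_2(0)<G_2(1)=G(1)$. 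The goal is to show the lower subpopulation is asymptotically negligible while the upper one concentrates at full cooperation.

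Rather than invoking Theorem \ref{thm:longtimemultiple} directly (which would impose H\"older hypotheses the proposition does not assume), I would establish the growth-rate domination by hand using the robust mass bounds. For the PD-type subpopulation $1$, the upper bound of Lemma \ref{lem:wtbounds} with $G:=G_1$ gives $\psi^1_t(X_1)\le \overline M\, e^{\lambda G_1(1)t}$, hence $\mu^1_t([0,1])\le \overline M\, e^{\lambda G(x_{eq})t}$, with no assumption on the tail of $\mu_0$ near $X_1=1$. For the PDel-type subpopulation $2$, since $\mu_0((x_{eq},1])>0$ the rescaled initial measure places mass on $X_2\in(0,1]$, so Lemma \ref{lem:muGzbound} (applied with $G:=G_2$, $\pi:=\Pi_2$) yields $\mu^2_t([0,1])\ge A_z\, e^{\lambda \hat G_z t}$ where $\hat G_z=\min_{X_2\in[z,1]}G_2(X_2)\to G(1)$ as $z\to1$. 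Because $G(x_{eq})<G(1)$, choosing $z$ close enough to $1$ makes $\hat G_z>G(x_{eq})$, and then
\[ \ol\mu^1_t([0,1])\le \frac{\mu^1_t([0,1])}{\mu^2_t([0,1])}\le \frac{\overline M}{A_z}\,e^{\lambda(G(x_{eq})-\hat G_z)t}\to 0 \quad (t\to\infty). \]

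For the dominant subpopulation I would use the fact that the unnormalized measures $\mu^j_t$ obey the decoupled linear dynamics of Equation \eqref{eq:measurevaluedsubpopulationlinear}; thus $\mu^2_t$ evolves exactly as a standalone single-population PDel problem on the $X_2$-interval. Since $\Pi_2<0$ on $[0,1]$, $G_2(1)>G_2(0)$, $\mu^2_0((0,1])>0$, and $\lambda>0$, Proposition \ref{prop:PDeldelta} applies verbatim and gives that the conditional distribution $\mu^2_t/\mu^2_t([0,1])$ converges weakly to $\delta(X_2-1)$, i.e. to $\delta(x-1)$ in the original variable. Finally I would assemble the two pieces through the normalization \eqref{eq:mubartj}: writing $\int_0^1 v\,\ol\mu_t(dx)$ as the ratio of $\int_{\mathrm{pop}\,1}v\,\mu_t + \int_{\mathrm{pop}\,2}v\,\mu_t$ to $\mu^1_t([0,1])+\mu^2_t([0,1])$ and dividing through by $\mu^2_t([0,1])$, the subpopulation-$1$ contribution is bounded by $\|v\|_\infty\,\mu^1_t/\mu^2_t\to0$, the subpopulation-$2$ contribution tends to $v(1)$, and the denominator tends to $1$; hence $\int v\,\ol\mu_t\to v(1)$, which is the claim $\ol\mu_t\rightharpoonup\delta(1-x)$.

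The main obstacle is organizing the shared normalization correctly: the coupling between the two subpopulations in the nonlinear dynamics \eqref{eq:measurevaluedsubpopulation} could in principle distort the internal relaxation of subpopulation $2$. The observation that dissolves this difficulty is that passing to the unnormalized linear system decouples the subpopulations, so the conditional distribution within subpopulation $2$ relaxes to $\delta(x-1)$ independently of subpopulation $1$, while the single common normalization merely redistributes total mass between the two, and the growth-rate gap $G(x_{eq})<G(1)$ sends that mass entirely into subpopulation $2$. A secondary point to check carefully is that the endpoint formulas in Equation \eqref{eq:Pifunctions} really do render $\Pi_1,\Pi_2$ continuously differentiable at $X_1=1$ and $X_2=0$ respectively, which is precisely where the hypothesis $\pi\in C^2$ (rather than merely $C^1$) is needed so that Lemma \ref{lem:wtbounds}, Lemma \ref{lem:muGzbound}, and Proposition \ref{prop:PDeldelta} may be applied to the rescaled subpopulations.
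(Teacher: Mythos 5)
Your proposal is correct, and its architecture is exactly the paper's: the paper also proves Proposition \ref{prop:SH} by splitting the population at $x_{eq}$ via the two-population framework of Section \ref{sec:SHHDassumptions}, observing that the below-equilibrium (generalized PD) subpopulation has principal growth rate at most $\lambda G_1(1)=\lambda G(x_{eq})$ while the above-equilibrium (generalized PDel) subpopulation grows at rate $\lambda G_2(1)=\lambda G(1)$, and then concentrating the upper subpopulation at full cooperation ``by an approach inspired by the proof of Proposition \ref{prop:PDeldelta}.'' Where you genuinely depart --- to your advantage --- is in how the growth-rate domination is justified: the paper's argument invokes Theorem \ref{thm:longtimemultiple}, whose hypotheses require positive infimum and supremum H{\"o}lder exponents near $X_j=1$ for each subpopulation's initial measure, assumptions Proposition \ref{prop:SH} does not make (and which can fail, e.g., if $\mu_0$ places an atom at $x_{eq}$ or no mass below it). Your by-hand substitute --- the unconditional upper bound $\mu^1_t([0,1])\le \overline{M}e^{\lambda G(x_{eq})t}$ from Lemma \ref{lem:wtbounds} together with the lower bound $\mu^2_t([0,1])\ge A_z e^{\lambda \hat{G}_z t}$, $\hat{G}_z\to G(1)$, from Lemma \ref{lem:muGzbound}, which needs only $\mu_0((x_{eq},1])>0$ --- establishes the mass comparison under exactly the stated hypotheses, so your version is the more faithful proof of the proposition as written; your decoupling observation (the linear system \eqref{eq:measurevaluedsubpopulationlinear} makes $\mu^2_t/\mu^2_t([0,1])$ a bona fide solution of the standalone nonlinear PDel problem, so Proposition \ref{prop:PDeldelta} applies verbatim rather than merely as inspiration) is also the cleanest way to finish. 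Two small points to tidy: your split into $[0,x_{eq})$ and $(x_{eq},1]$ omits a possible atom at $\{x_{eq}\}$, whose mass grows at rate exactly $e^{\lambda G(x_{eq})t}$ and is absorbed by the same comparison (assign it to either subinterval; Proposition \ref{prop:PDeldelta} tolerates an atom at $X_2=0$); and the SH sign pattern you correctly adopt, $\pi>0$ on $(0,x_{eq})$ and $\pi<0$ on $(x_{eq},1)$, forces $\pi'(x_{eq})<0$ at a single transversal root, so the hypothesis $\pi'(x_{eq})>0$ in the statement is evidently a typo carried over from the HD results --- your argument implicitly needs the corrected sign precisely so that $\Pi_1(1)=-x_{eq}(1-x_{eq})\pi'(x_{eq})>0$ and $\Pi_2(0)=x_{eq}(1-x_{eq})\pi'(x_{eq})<0$, which is worth stating explicitly.
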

 This generalizes a previous result for SH games with the payoff matrices of Equation \eqref{eq:payoffmatrix} \cite{cooney2020analysis}, whose proof relied on the fact that the quadratic $G(x)$ of Equation \eqref{eq:Gxgame} is increasing for $x > x_{eq}$ under the SH payoff rankings. By contrast, Proposition \ref{prop:SH} only requires the ranking of the values group-reproduction function at the equilibria of the within-group dynamics.

\myindent We now turn to the Hawk-Dove game and consider initial measures with infimum and supremum \holder exponents near $x=1$ satisfying $\infty > \overline{\theta} \geq \underline{\theta} > 0$. Because $\pi(x) < 0$ for $x \in (0,x_{eq})$ and $\pi(x) > 0$ for $x \in (x_{eq},1)$, we can find that the principal growth rates on our two intervals are given by $r_1^M = \lambda G(x_{eq})t$ and $r_2^m = \max\{\lambda G(x_{eq}), \lambda G(1) - \overline{\theta} \pi(1) \}$. We can show for either possible value of $r_2^m$ that the probability $\ol{\mu}_t\left([0,x_{eq})\right)$ vanishes as $t \to \infty$, with the case of $r_2^M = \lambda G(1) - \overline{\theta} \pi(1)$ following from Theorem \ref{thm:longtimemultiple} and the case of $r_2^M = \lambda G(x_{eq})$ requiring an argument analogous to the proof of Proposition \ref{prop:PDeldelta}. We can further characterize the long-time behavior of the multilevel HD dynamics by analyzing the distribution of groups above $x_{eq}$ in the same way we studied the PD dynamics in previous sections. %

We show in Theorem \ref{prop:deltaHD} that the population concentrates upon a delta-function $\delta(x - x_{eq})$  at the within-group equilibrium when $\lambda \left[G(1) - G(x_{eq})\right] < \underline{\theta} \pi(1)$, while we show in Theorem \ref{prop:HDpersistence} that the fraction of cooperators in the population exceeds the equilibrium level $x_{eq}$ infinitely often when $\lambda \left[G(1) - G(x_{eq})\right] > \overline{\theta} \pi(1) > \underline{\theta} \pi(1)$. Under the stronger assumption that the initial measure has a well-defined positive, finite \holder exponent and constant near $x=1$, we show in Theorem \ref{wHDlimtheorem} that the population converges to a steady state density whose support consists of groups with fractions of cooperation between $x_{eq}$ and $1$.

 \begin{theorem} \label{prop:deltaHD}
Suppose that $G(x) \in C^1([0,1])$, $\pi(x) \in C^2([0,1])$, $G(0) < G(x_{eq}) < G(1)$. We further assume that the initial distribution $\mu_0(dx)$ has supremum \holder exponent $\theta$ near $x = 1$, that $\pi(x)$ has a single root $x_{eq} \in (0,1)$, and 
that $\pi'(x_{eq}) > 0$. If $\lambda \left( G(1) - G(x_{eq})\right) < \theta \pi(1)$, then $\mu_t(dx) \rightharpoonup \delta(x - x_{eq})$.
 \end{theorem}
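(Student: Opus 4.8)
The plan is to use the two-population reformulation of the Hawk--Dove dynamics from Section \ref{sec:SHHDassumptions}, splitting $[0,1]$ at the stable within-group equilibrium $x_{eq}$ and rescaling the pieces $[0,x_{eq})$ and $(x_{eq},1]$ onto $[0,1]$ via $X_1 = x/x_{eq}$ and $X_2 = (x-x_{eq})/(1-x_{eq})$. On the upper piece the rescaled gain function $\Pi_2$ is positive with $\Pi_2(1)=\pi(1)$, and the group-reproduction rates satisfy $G_2(0)=G(x_{eq}) < G_2(1)=G(1)$, so this subpopulation is a generalized Prisoners' Dilemma; on the lower piece $\Pi_1 < 0$ with $G_1(0)=G(0) < G_1(1)=G(x_{eq})$, so it is a generalized Prisoners' Delight. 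The assumption $\pi \in C^2([0,1])$ guarantees $\Pi_1,\Pi_2 \in C^1([0,1])$, and the affine rescaling near $x=1$ preserves the supremum \holder exponent $\theta$ near $X_2=1$. Within-group selection pushes both subpopulations toward the shared endpoint $x_{eq}$, and the hypothesis $\lambda(G(1)-G(x_{eq})) < \theta \pi(1)$ places the upper subpopulation in the extinction regime of Theorem \ref{thm:extinctvspersist}.

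The essential observation is that in this regime the two principal growth rates coincide: $r_1 = \lambda G_1(1) = \lambda G(x_{eq})$ for the Prisoners' Delight piece, while $r_2 = \lambda G_2(0) = \lambda G(x_{eq})$ for the Prisoners' Dilemma piece (the third branch of Equation \eqref{eq:rjwelldefined}). Because the dominant rates agree rather than satisfying the strict inequality $r_k^m > r_j^M$, Theorem \ref{thm:longtimemultiple} does not apply, and I instead argue directly that the mass of $\ol{\mu}_t$ escaping a neighborhood $(x_{eq}-\delta, x_{eq}+\delta)$ is negligible. The core of the proof is therefore a three-sided estimate: a lower bound on the total mass $\mu_t([0,1])$ together with upper bounds on the mass in $[0,x_{eq}-\delta]$ and in $[x_{eq}+\delta,1]$, after which weak convergence follows from the continuity of test functions, exactly as in the delta-concentration arguments of Theorem \ref{thm:extinctvspersist} and Proposition \ref{prop:PDeldelta}.

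For the lower bound I would apply the rescaled form of Lemma \ref{lem:muGwbound} to the upper subpopulation, whose positive supremum \holder exponent near $X_2=1$ follows from the hypothesis; since $\min_{X_2 \in [0,w]} G_2 \to G(x_{eq})$ as $w \to 0$, this gives $\mu_t([0,1]) \geq \mu_t((x_{eq},1]) = \mu_t^2([0,1]) \geq A_w e^{\lambda \check{G}_w t}$ with $\check{G}_w$ arbitrarily close to $G(x_{eq})$. For the lower piece, the rescaled Lemma \ref{lem:PDelintervalbound} yields $\mu_t([0,x_{eq}-\delta]) = \mu_t^1([0,a]) \leq M_a e^{\lambda G(0) t}$, which is exponentially dominated because $G(0) < G(x_{eq})$. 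For the upper piece, the rescaled Lemma \ref{lem:probholderbounds} gives $\mu_t([x_{eq}+\delta,1]) \leq B(\Theta)\, e^{[\lambda G(1) - \Theta \pi(1)] t}$ for every $\Theta < \theta$. Dividing the two tail bounds by the total-mass lower bound controls $\ol{\mu}_t\big([0,1]\setminus(x_{eq}-\delta,x_{eq}+\delta)\big)$ by $(M_a/A_w)\,e^{\lambda(G(0)-\check{G}_w)t} + (B(\Theta)/A_w)\,e^{[\lambda(G(1)-\check{G}_w) - \Theta\pi(1)]t}$.

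The main obstacle, and the place where strictness of the hypothesis is indispensable, is making the second exponent negative: since $\check{G}_w \leq G(x_{eq})$ we have $\lambda(G(1)-\check{G}_w) \geq \lambda(G(1)-G(x_{eq}))$, so I must use the strict inequality $\lambda(G(1)-G(x_{eq})) < \theta\pi(1)$ twice at once, choosing $w$ small enough that $\check{G}_w$ is close to $G(x_{eq})$ and $\Theta < \theta$ close enough to $\theta$ that $\lambda(G(1)-\check{G}_w) < \Theta\pi(1)$ still holds. This is the same room-to-spare device used in the proof of Theorem \ref{thm:extinctvspersist}. With both exponents negative the escaping mass tends to $0$, so $\int_0^1 v(x)\,\ol{\mu}_t(dx) \to v(x_{eq})$ for every continuous $v$, giving $\ol{\mu}_t(dx) \rightharpoonup \delta(x-x_{eq})$.
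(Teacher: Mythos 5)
Your proof is correct and follows essentially the route the paper sketches in Section \ref{sec:HDresults}: split at $x_{eq}$ into a Prisoners' Delight piece below and a Prisoners' Dilemma piece above, bound the lower tail by $e^{\lambda G(0)t}$ (rescaled Lemma \ref{lem:PDelintervalbound}) and the upper tail by $e^{[\lambda G(1)-\Theta\pi(1)]t}$ (rescaled Lemma \ref{lem:probholderbounds}), bound the total mass from below by $e^{\lambda \check{G}_w t}$ with $\check{G}_w$ near $G(x_{eq})$ (rescaled Lemma \ref{lem:muGwbound}), and use the strictness of $\lambda\left[G(1)-G(x_{eq})\right]<\theta\pi(1)$ to close the exponent gap exactly as in the proof of Theorem \ref{thm:extinctvspersist}. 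Your observation that the two principal growth rates tie at $\lambda G(x_{eq})$ in this regime — so that Theorem \ref{thm:longtimemultiple} is unavailable and one must directly estimate the mass escaping a neighborhood of $x_{eq}$, rather than showing the fraction below $x_{eq}$ itself vanishes — is a correct and slightly sharper rendering of the argument the paper indicates.
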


\begin{theorem} \label{prop:HDpersistence}
Suppose that $G(x)$ and $\pi(x)$ satisfy the assumptions of Theorem \ref{prop:deltaHD} and that the initial distribution $\overline{\mu}_0(dx)$ has nonzero infimum and supremum H{\"o}lder exponents $\overline{\theta}$ and $\underline{\theta}$ near $x=1$ with corresponding \holder constants $C_{\overline{\theta}}$ and $C_{\underline{\theta}}$ that are finite and nonzero. If $\lambda \left[G(1) - G(x_{eq}) \right] > \overline{\theta} \pi(1) > \underline{\theta} \pi(1)$, then
\begin{equation}
    \ds\limsup_{t \to \infty} \ds \int_{x_{eq}}^1 x \ol{\mu}_t(dx) > x_{eq}.
\end{equation}
\end{theorem}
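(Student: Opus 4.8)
The plan is to use the two-population reformulation of Section~\ref{sec:SHHDassumptions} to split the claim into two independently available facts: the extinction of the subpopulation below $x_{eq}$ and the weak persistence of cooperation in the subpopulation above $x_{eq}$. First I would pass to the rescaled variables $X_1 = x/x_{eq}$, $X_2 = (x - x_{eq})/(1 - x_{eq})$ and the modified rates $\Pi_1,\Pi_2$, $G_1,G_2$ of Equations~\eqref{eq:Pifunctions} and~\eqref{eq:Gfunctions}. Under the HD sign assumptions $\Pi_1 < 0$ on $[0,1)$, so subpopulation $1$ is a generalized Prisoners' Delight scenario, while $\Pi_2 > 0$ on $(0,1]$, so subpopulation $2$ is a generalized Prisoners' Dilemma scenario, with the boundary correspondences $G_2(0) = G(x_{eq})$, $G_2(1) = G(1)$, and $\Pi_2(1) = \pi(1)$. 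Because $x_{eq}$ is an equilibrium of the within-group ODE, no characteristic crosses it, so the restrictions $\ol{\mu}_t^1,\ol{\mu}_t^2$ evolve according to the multipopulation dynamics of Equation~\eqref{eq:measurevaluedsubpopulation}. Since the map $x \mapsto X_2$ is affine near $x = 1$, with $1 - X_2 = (1-x)/(1-x_{eq})$, the measure $\ol{\mu}_0^2$ inherits the supremum and infimum \holder exponents $\underline{\theta}$ and $\overline{\theta}$ near $X_2 = 1$ (the constants merely rescale by powers of $1 - x_{eq}$ and stay finite and nonzero), and positivity of $\underline{\theta}$ guarantees $\mu_0\left(\left(x_{eq},1\right]\right) > 0$.

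Next I would show subpopulation $1$ is driven out. Because $\Pi_1 < 0$, the first line of Equation~\eqref{eq:rjwelldefined} gives $r_1^m = r_1^M = \lambda G_1(1) = \lambda G(x_{eq})$, and the PDel upper bound established inside the proof of Theorem~\ref{thm:longtimemultiple} yields $\mu_t^1\left([0,1]\right) \leq A_1 e^{\lambda G(x_{eq}) t}$ using only that $\mu_0$ is a probability measure. Meanwhile the hypothesis $\lambda\left[G(1) - G(x_{eq})\right] > \overline{\theta}\pi(1)$ places subpopulation $2$ in the middle (persisting) PD case, so $r_2^m = \lambda G(1) - \overline{\theta}\pi(1) > \lambda G(x_{eq})$, and Lemma~\ref{lem:probholderbounds} applied in the $X_2$ variable gives $\mu_t^2\left([0,1]\right) \geq L\, e^{r_2^m t}$ for some $L > 0$. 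Dividing, $\ol{\mu}_t^1\left([0,1]\right) \leq \mu_t^1\left([0,1]\right)/\mu_t^2\left([0,1]\right) \to 0$ as $t \to \infty$; equivalently the normalized mass below the equilibrium vanishes, so the complementary share $\mu_t^2\left([0,1]\right)/\mu_t\left([0,1]\right) \to 1$. This route sidesteps any \holder requirement on the part of $\mu_0$ below $x_{eq}$, but one may equally cite Theorem~\ref{thm:longtimemultiple} with $k = 2$.

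For the surviving subpopulation I would invoke the decoupling of the linear dynamics in Equation~\eqref{eq:measurevaluedsubpopulationlinear}: the unnormalized $\mu_t^2$ evolves exactly as a single-population linear PD flow in $X_2$, so its normalization $\ol{\mu}_t^2$ solves the standalone Prisoners' Dilemma equation. Since $\lambda\left[G_2(1) - G_2(0)\right] = \lambda\left[G(1) - G(x_{eq})\right] > \underline{\theta}\,\Pi_2(1)$ by hypothesis, the persistence half of Theorem~\ref{thm:extinctvspersist} applies to $\ol{\mu}_t^2$ and yields $\limsup_{t\to\infty}\int_0^1 X_2\,\ol{\mu}_t^2(dX_2) > 0$. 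Undoing the affine change of variables, $\int_{x_{eq}}^1 x\,\ol{\mu}_t^2(dx) = x_{eq} + (1 - x_{eq})\int_0^1 X_2\,\ol{\mu}_t^2(dX_2)$, hence $\limsup_{t\to\infty}\int_{x_{eq}}^1 x\,\ol{\mu}_t^2(dx) > x_{eq}$. Finally I would assemble the pieces through $\int_{x_{eq}}^1 x\,\ol{\mu}_t(dx) = \left(\int_{x_{eq}}^1 x\,\ol{\mu}_t^2(dx)\right)\cdot\bigl(\mu_t^2\left([0,1]\right)/\mu_t\left([0,1]\right)\bigr)$, valid because $\ol{\mu}_t$ restricted to $(x_{eq},1]$ is the normalization of $\mu_t^2$: along a sequence of times realizing the $\limsup$ of the first factor the second factor tends to $1$, so the product has $\limsup$ at least that value, which exceeds $x_{eq}$.

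The main obstacle I anticipate is the careful bookkeeping tying the two \holder conditions to their distinct roles — the stronger $\overline{\theta}$ condition is what forces $r_2^m > r_1^M$ and eliminates subpopulation $1$, whereas only the weaker $\underline{\theta}$ condition is needed for the persistence of subpopulation $2$ — together with the justification that the decoupled linear dynamics legitimately let us treat $\ol{\mu}_t^2$ as a standalone PD problem when invoking Theorem~\ref{thm:extinctvspersist}. The remaining checks, namely that the change of variables preserves the \holder data and the $C^2$ regularity of $\pi$ needed for $\Pi_2 \in C^1([0,1])$, and that the $\limsup$ of the product behaves as claimed, are routine but must be verified so that the hypotheses of both invoked theorems are genuinely met.
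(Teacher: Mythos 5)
Your proposal is correct and takes essentially the same route the paper intends: the two-population reformulation in the rescaled variables $X_1, X_2$, elimination of the lower (PDel-type) subpopulation via the growth-rate comparison $r_1^M = \lambda G(x_{eq}) < \lambda G(1) - \overline{\theta}\pi(1) = r_2^m$ afforded by Theorem \ref{thm:longtimemultiple} and Lemma \ref{lem:probholderbounds}, followed by applying the persistence half of Theorem \ref{thm:extinctvspersist} to the self-normalized (conditional) measure on $(x_{eq},1]$, which solves the standalone PD dynamics because the linear flows decouple, and undoing the affine change of variables. The only bookkeeping nit is that $\mu_0\left(\left(x_{eq},1\right]\right) > 0$ follows from the supremum \holder constant $C_{\underline{\theta}}$ being nonzero rather than from positivity of $\underline{\theta}$ alone; otherwise your checks (transfer of \holder data under the affine map, $\pi \in C^2$ giving $\Pi_2 \in C^1$, and the limsup of the product with a factor tending to $1$) are exactly what is needed.
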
  
  
Now we consider the case of convergence to steady state for the multilevel HD dynamics. When the initial measure has H{\"o}lder exponent $\theta > 0$ near $x=1$ and the strength of between-group satisfies $\lambda \left[ G(1) - G(x_{eq})\right] > \theta \pi(1)$, the principal growth rates $r_1^M$ and $r_2^m$ associated with the measures $\ol{\mu}^1_t(dX_1)$ and $\ol{\mu}^2_t(dX_2)$ satisfying $r_2^m = \lambda G(1) - \theta \pi(1) > \lambda G(x_{eq}) = r_1^M$. As a consequence, we can apply Theorem \ref{thm:longtimemultiple} to conclude that the probably of groups below $x_{eq}$ satisfies $\ol{\mu}^1_t([0,1]) \to 0$ as $t \to \infty$. For the distribution of groups above $x_{eq}$, Theorem \ref{thm:longtimemultiple} tells us that $\ol{\mu}_t^2(dX_2) \rightharpoonup p_{\theta}^{\lambda,2}(X_2)$, where
$p_{\theta}^{\lambda,2}(X_2)$ is a steady state density given by
\begin{subequations} \label{eq:HDsteady2part}
\begin{align}
p_{\theta}^{\lambda,2}(X_2) &= \frac{f^{\lambda,2}_{\theta}(X_2)}{\int_0^1 f^{\lambda,2}_{\theta}(y) dy} \\
f^{\lambda,2}_{\theta}(X_2) &= X_2^{\nu_2 -1} (1-X_2)^{\theta - 1} \left(\frac{\Pi_2(1)}{ \Pi_2(X_2)}\right) \exp\left( \int_{X_2}^1 \frac{-\lambda \tilde{C}_2(s)}{\Pi_2(s)} ds \right) \\ 
\nu_2 &= \frac{1}{\Pi_2(0)} \left( \lambda \left[ G_2(1) - G_2(0) \right] - \theta \Pi_2(1) \right) \\ 
- \lambda \tilde{C}_2(s) &= \lambda \left( \frac{G_2(s) - G_2(0)}{s} \right) + 
\nu_2 \left( \frac{\Pi_2(s) - \Pi_2(0)}{s} \right) \\  
&+ \lambda \left( \frac{G_2(s) - G_2(1) }{1 - s} \right) - \theta \left( \frac{\Pi_2(s) - \Pi_2(1)}{1 - s} \right) \nonumber.
\end{align}
\end{subequations}
We can use this representation of the steady-state conditional measure $\ol{\mu}_t(dX_2)$ for group compositions featuring more than $x_{eq}$ cooperators, paired with fact that the probability of groups below $x_{eq}$ vanishes in the long-time limit, to find the long-time steady-state achieved by the multilevel dynamics in the Hawk-Dove case. Applying to Equation \eqref{eq:HDsteady2part} the change of variables $X_2 = \frac{x-x_{eq}}{1 - x_{eq}}$ and the definitions of the modified replication rates $\Pi_2(X_2)$ and $G_2(X_2)$ from Equations \eqref{eq:Pifunctions} and \eqref{eq:Gfunctions}, we find that the steady states of the multilevel HD dynamics take the following form:
 \begin{equation} \label{eq:HDofthetapiecewise}
   q^{\lambda}_{\theta}(x) = \left\{
     \begin{array}{lr}
       0 & : x < x_{eq}\\
       \ds\frac{g^{\lambda}_{\theta}(x)}{\int_{x_{eq}}^{1} g^{\lambda}_{\theta}(y) dy}  & : x \geq x_{eq},
     \end{array}
   \right.
\end{equation} 
with  $g^{\lambda}_{\theta}(x)$ given by
\begin{equation} \label{eq:HDgthetax}
    g^{\lambda}_{\theta}(x) = \left(x - x_{eq}\right)^{\nu_H - 1} \left(1 - x\right)^{\theta - 1} \left( \frac{\pi(1)}{x \pi(x)} \right) \exp\left(\int_{\frac{x-x_{eq}}{1 - x_{eq}}}^1 \frac{\left(-\lambda / (1-x_{eq}) \right) s \tilde{C}_H(s) }{(x_{eq} + (1 - x_{eq}) s) \pi(x_{eq} + (1-x_{eq}) s)} ds \right)
\end{equation}
and where $\nu_H$ and $-\lambda C_H(s)$ are given by
\begin{subequations} \label{eq:HDhelpers}
\begin{align}
\nu_H &= \frac{\lambda \left[G(1) - G(x_{eq})\right] - \theta \pi(1)}{x_{eq}(1-x_{eq}) \pi'(x_{eq})} \\
    -\lambda \tilde{C}_H(s) &= \lambda \left(\frac{G\left(x_{eq} + (1-x_{eq}) s\right) - G(x_{eq})}{s} +  \frac{G\left(x_{eq} + (1-x_{eq}) s \right) - G(1)}{1-s}\right) \\ & + \theta \left(\frac{s^{-1} \left(x_{eq} + (1-x_{eq}) s  \right) \pi\left( x_{eq} + (1-x_{eq}) s \right) - \pi(1)}{1-s} \right) \nonumber \\%
    & + \nu_H \left(\frac{ \left[x_{eq} + (1-x_{eq}) s \right] \pi\left( x_{eq} + (1-x_{eq}) s \right) - s x_{eq} (1-x_{eq}) \pi'(x_{eq})}{s^2} \right)  \nonumber  
    \end{align}
\end{subequations}
In Theorem \ref{wHDlimtheorem}, we characterize the long-time weak convergence of solutions of the multilevel HD dynamics to steady states in the form described by Equations \eqref{eq:HDofthetapiecewise}, \eqref{eq:HDgthetax}, \eqref{eq:HDhelpers} for the case in which initial conditions have well-define H{\"o}lder data near $x=1$ and in which between-group selection is sufficiently strong. This result confirms and generalizes \cite[Conjecture 2]{cooney2020analysis}, which addresses convergence to steady-state in the case for which the replication rates arise from Hawk-Dove games with the payoff matrix of Equation \eqref{eq:payoffmatrix}. 
 \begin{theorem}\label{wHDlimtheorem}
Suppose that $G(x)$ and $\pi(x)$ satisfy the assumptions of Theorem \ref{prop:deltaHD} and that the initial distribution $\mu_0(dx)$ has \holder exponent $\theta$ near $x=1$ with corresponding positive, finite \holder constant $C_{\theta}$. If $\lambda \left[G(1) - G(x_{eq}) \right] > \theta \pi(1)$, then, for any continuous test-function $v(x)$, the solution $\overline{\mu}_t(dx)$ to Equation \eqref{multiselect} satisfies
\begin{equation}\label{vmubarlimHD}
\lim_{t\to \infty} \int_0^1 v(x)\ov{\mu}_t(dx)=\int_0^1 v(x)q^{\lambda}_{\theta}(x)dx.
\end{equation}
\end{theorem}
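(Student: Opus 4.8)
The plan is to reduce the statement to the two-population concentration result of Theorem \ref{thm:longtimemultiple}, exactly as the construction in Section \ref{sec:SHHDassumptions} was designed to permit. First I would split $\overline{\mu}_t(dx)$ at the interior equilibrium $x_{eq}$ into its conditional pieces on $[0,x_{eq})$ and $(x_{eq},1]$, rescale these intervals onto $[0,1]$ via $X_1 = x/x_{eq}$ and $X_2 = (x-x_{eq})/(1-x_{eq})$, and pass to the rescaled measures $\overline{\mu}_t^1(dX_1)$, $\overline{\mu}_t^2(dX_2)$ with the modified rates $\Pi_1,\Pi_2,G_1,G_2$ of Equations \eqref{eq:Pifunctions}--\eqref{eq:Gfunctions}. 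Since $\pi \in C^2([0,1])$, the two removable singularities of $\Pi_1,\Pi_2$ at the image of $x_{eq}$ resolve by Taylor expansion to $\mp x_{eq}(1-x_{eq})\pi'(x_{eq})$, so $\Pi_1,\Pi_2 \in C^1([0,1])$. The HD sign structure gives $\Pi_1 < 0$ on $[0,1)$ (a PDel subpopulation) and $\Pi_2 > 0$ on $(0,1]$ (a PD subpopulation), while $G(0) < G(x_{eq}) < G(1)$ yields $G_1(1) = G(x_{eq}) > G(0) = G_1(0)$ and $G_2(1) = G(1) > G(x_{eq}) = G_2(0)$, so the hypotheses of Theorem \ref{thm:longtimemultiple} hold.

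Next I would identify the principal growth rates via Equation \eqref{eq:rjwelldefined}. For the PDel subpopulation $1$, $r_1 = \lambda G_1(1) = \lambda G(x_{eq})$ independent of the Hölder data, so $r_1^M = \lambda G(x_{eq})$. For subpopulation $2$ I note that the affine map $X_2 \mapsto x$ preserves the Hölder exponent near the shared endpoint $x=1$ (since $1 - X_2 = (1-x)/(1-x_{eq})$), so $\mu_0^2$ has a well-defined exponent $\theta$ with positive finite constant near $X_2 = 1$, and $\Pi_2(1) = \pi(1)$. Under $\lambda[G(1)-G(x_{eq})] > \theta\pi(1)$ the middle branch of Equation \eqref{eq:rjwelldefined} gives $r_2 = r_2^m = r_2^M = \lambda G(1) - \theta\pi(1)$. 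The domination condition $r_2^m > r_1^M$ then reads precisely $\lambda G(1) - \theta\pi(1) > \lambda G(x_{eq})$, i.e. the theorem's hypothesis. I would also observe that an initial atom at $x_{eq}$ grows only at rate $\lambda G(x_{eq}) = r_1^M < r_2^m$, so it is negligible after normalization; cleanest is to absorb $\{x_{eq}\}$ into subpopulation $1$.

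With the hypotheses verified, Theorem \ref{thm:longtimemultiple} yields $\overline{\mu}_t^1([0,1]) \to 0$ and $\overline{\mu}_t^2(dX_2) \rightharpoonup p_\theta^{\lambda,2}(X_2)\,dX_2$ as a probability measure, with $p_\theta^{\lambda,2}$ given by Equation \eqref{eq:HDsteady2part}. To conclude, for continuous $v$ I would split $\int_0^1 v\,\overline{\mu}_t(dx)$ into its parts below and above $x_{eq}$: the first is bounded by $\norm{v}_{\infty}\,\overline{\mu}_t^1([0,1]) \to 0$, and after $x = x_{eq} + (1-x_{eq})X_2$ the second converges to $\int_0^1 v(x_{eq}+(1-x_{eq})X_2)\,p_\theta^{\lambda,2}(X_2)\,dX_2$. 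Undoing the substitution (Jacobian $dX_2 = dx/(1-x_{eq})$) and inserting the definitions of $\Pi_2,G_2$ reproduces $q_\theta^\lambda(x)$ of Equations \eqref{eq:HDofthetapiecewise}--\eqref{eq:HDhelpers}, giving \eqref{vmubarlimHD}.

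The routine but error-prone step is this final change-of-variables bookkeeping converting $p_\theta^{\lambda,2}$ into $q_\theta^\lambda$: one substitutes $\Pi_2(X_2) = X_2^{-1}(x_{eq}+(1-x_{eq})X_2)\pi(\cdot)$ and $G_2(X_2) = G(\cdot)$ into the exponential and tracks the Jacobian so that $\nu_2,\theta$ match $\nu_H,\theta$ in \eqref{eq:HDhelpers}. The genuine content is front-loaded: the rescaling turns the stable interior equilibrium into two monotone subpopulations whose growth rates are ordered exactly when $\lambda[G(1)-G(x_{eq})] > \theta\pi(1)$. The main point to check carefully is that the Hölder \emph{constant}, not just the exponent, stays positive and finite after rescaling, since the convergence-to-density conclusion of Theorem \ref{thm:longtimemultiple} for the dominant subpopulation requires it; this holds because the transformation multiplies the constant only by the positive finite factor $(1-x_{eq})^{-\theta}$.
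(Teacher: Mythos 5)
Your proposal is correct and follows essentially the same route as the paper: the argument in Section \ref{sec:HDSH} establishes Theorem \ref{wHDlimtheorem} exactly by splitting at $x_{eq}$, rescaling onto the two subpopulations with modified rates $\Pi_j$, $G_j$ (using $\pi \in C^2$ for their $C^1$ regularity), observing that the hypothesis $\lambda\left[G(1)-G(x_{eq})\right] > \theta \pi(1)$ is precisely the domination condition $r_2^m = \lambda G(1) - \theta\pi(1) > \lambda G(x_{eq}) = r_1^M$, applying Theorem \ref{thm:longtimemultiple}, and undoing the change of variables in $p^{\lambda,2}_{\theta}$ to recover $q^{\lambda}_{\theta}$ (your handling of a possible atom at $x_{eq}$ is in fact slightly more careful than the paper's). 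The only slip, and it is immaterial, is in your final remark: since $\mu_0^2\left(\left[1-h,1\right]\right) = \mu_0\left(\left[1-(1-x_{eq})h,1\right]\right)$, the affine rescaling multiplies the \holder constant by $(1-x_{eq})^{\theta}$ rather than $(1-x_{eq})^{-\theta}$, which is still positive and finite, so the conclusion is unaffected.
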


Using the same approach as in Section \ref{sec:steadystates}, we can characterize the threshold level of between-group competition required for an integrable steady state
\begin{equation} \label{eq:lambdastarHD}
    \lambda^*_H(\theta) := \frac{\theta \pi(1)}{G(1) - G(x_{eq})}.
\end{equation}
Then we can use this expression to see that the average payoff at steady state is given by
 \begin{equation} \label{eq:HDpayofflambdastar}
 \langle G(\cdot) \rangle_f = \left(\frac{\lambda^*_H(\theta)}{\lambda}\right) G(x_{eq}) + \left( 1 - \frac{\lambda^*_H(\theta)}{\lambda} \right) G(1),
 \end{equation}
 so average payoff interpolates between $G(x_{eq})$ at $\lambda = \lambda^*_H$ and $G(1)$ as $\lambda \to \infty$. Because $G(1) > G(x_{eq})$, we see that the average group payoff increases with $\lambda$ and that group payoff is limited by the average payoff of a full-cooperator group, even if group payoff $G(x)$ is maximized by an interior fraction of cooperators.

\end{document}